\documentclass[preprint]{elsarticle}

%
%
%
%

\usepackage{graphicx}
\usepackage{amsfonts}
\usepackage{amssymb}
\usepackage{amsmath}
\usepackage{amsthm}
\usepackage{algorithm, algpseudocode}
\usepackage{float}

\newfloat{algorithm}{t}{lop}
\floatname{algorithm}{Algorithm}

%
%
\begin{document}
\title{Two Particle-in-Grid Realisations on Spacetrees}

\author[dur]{T.\ Weinzierl\corref{cor1}}
\ead{tobias.weinzierl@durham.ac.uk}
\ead[url]{www.dur.ac.uk/tobias.weinzierl}
\cortext[cor1]{Corresponding author}
\author[leu,auc]{B.\ Verleye}
\ead{bverleye@vub.ac.be}
\author[pierre]{P.\ Henri}
\ead{pierre.henri@cnrs-orleans.fr}
\author[leu]{D.\ Roose}
\ead{dirk.roose@cs.kuleuven.be}

\address[dur]{School of Engineering and Computing Sciences, Durham University\\
Stockton Road, Durham DH1 3LE, Great Britain}
\address[leu]{Department of Computer Science, KU Leuven\\
 Celestijnenlaan 200A, B-3001 Leuven, Belgium}
\address[auc]{Vrije Universiteit Brussel\\
 Pleinlaan 2, 1050 Elsene, Belgium}
\address[pierre]{Laboratoire de Physique et Chimie de l'Environnement et de
l'Espace (LPC2E)\\  
CNRS, Universit\'{e} d'Orl\'{e}ans\\
45071 Orl\'{e}ans Cedex 2, France}

\newcommand{\peano}{Peano}
\newcommand{\pic}{PIC}

\newcommand{\pit}{PIT}
\newcommand{\pidt}{PIDT}
\newcommand{\rapidt}{raPIDT}

\newcommand{\spacetree}{spacetree}

\newcommand{\EventEnterCell}{{\footnotesize \tt enterCell}}
\newcommand{\EventLeaveCell}{{\footnotesize \tt leaveCell}}

\newcommand{\EventTouchVertexFirstTime}{{\footnotesize \tt
touchVertexFirstTime}}
\newcommand{\EventTouchVertexLastTime}{{\footnotesize \tt touchVertexLastTime}}
\newcommand{\EventMergeWithMaster}{{\footnotesize \tt mergeWithMaster}}
\newcommand{\EventMergeWithNeighbour}{{\footnotesize \tt mergeWithNeighbour}}
\newcommand{\EventMergeWithWorker}{{\footnotesize \tt mergeWithWorker}}
\newcommand{\EventPrepareSendToMaster}{{\footnotesize \tt prepareSendToMaster}}
\newcommand{\EventPrepareSendToWorker}{{\footnotesize \tt prepareSendToWorker}}

\newtheorem{theorem}{Theorem}[section]

\begin{abstract}
The present paper studies two particle management strategies for
dynamically adaptive Cartesian grids at hands of a particle-in-cell code.
One holds the particles within the grid cells, the other within the grid
vertices.
The fundamental challenge for the algorithmic strategies results from the fact
that particles may run through the grid without velocity constraints.
To facilitate this, we rely on multiscale grid representations.
They allow us to lift and drop particles between different spatial resolutions.
We call this cell-based strategy particle in tree (PIT).
Our second approach assigns particles to vertices describing a dual grid
(PIDT) and augments the lifts and drops with multiscale linked cells.

Our experiments validate the two schemes at hands of an 
electrostatic particle-in-cell code by retrieving the dispersion
relation of Langmuir waves in a thermal plasma.
They reveal that different particle and grid characteristics favour
different realisations.
The possibility that particles can
tunnel through an arbitrary number of grid cells implies that most data is 
exchanged between neighbouring ranks, while very few data is transferred 
non-locally. 
This constraints the scalability as the code potentially has to  
realise global communication.
We show that the merger of an analysed tree grammar with PIDT allows us to
predict particle movements among several levels and to skip parts of this
global communication a priori. 
It is capable to outperform several established implementations based upon
trees and/or space-filling curves.
\end{abstract}

%

\begin{keyword}
  Particle-in-cell \sep spacetree \sep particle sorting \sep AMR \sep
  Lagrangian-Eulerian methods \sep communication-avoiding
\end{keyword}

\maketitle

\section{Introduction}
\label{introduction}


%
%
Lagrangian-Eulerian descriptions of physical phenomena are used by a wide range
of applications.
They combine the short-range aptitude of particle-based with the
long-range capabilities of grid-based approaches.
Besides their popularity driven by application needs, 
particle-grid methods are also popular in supercomputing.
They are among
the best scaling algorithms today (cf.~for example 
\cite{Hamada:09:GordonBell,Rahimian:2010:GordonBell,Sampath:08:Dendro}).
This scaling mainly relies on two ingredients.
On the one hand, particle-particle interactions often are computationally
expensive in terms of floating point operations with moderate memory footprint.
On the other hand, the particle-grid interaction requires a mapping 
of particles to the grid.
This is a spatial sorting problem.
It either is
performed infrequently, is computationally cheap as the particles move at most
one grid cell at a time, or the grid can be 
constructed efficiently starting from the particles \cite{Sampath:08:Dendro}.

The present paper is driven by a plain electrostatic \pic\ simulation
\cite{Hockney:88:PIC,Verleye:13:PIC} of an unmagnetised plasma.
Its computational profile differs from the previous characteristics as it
solves a partial differential equation (PDE) on an adaptive Cartesian grid which
has to be stored persistently in-between two time steps.
At the same time, the particles do
not interact with each other---they induce very low computational workload---but
may move at very high speed through the grid.
Our work focuses on well-suited data structures and algorithms required for such
a code within a dynamic adaptive mesh (AMR) environment where the simultion is
ran on a distributed memory machine.

%
%
The setup raises an algorithmic challenge. 
While dynamic AMR for PDEs as well as algorithms based upon particle-particle
interactions are exhaustively studied, our algorithm requires a fast 
mapping of particle effects onto the grid and the other way round per time step.
This assignment of particles to the grid changes incrementally.
Yet, some particles might {\em tunnel} several cells per time step:
no particle is constrained to move at most into a neighbouring cell.
In general, the time step size is chosen such that the majority of particles travel
at most one cell per time step.
This avoids the
finite grid instability \cite{Birdsall:05:PlasmaPhysics}, i.e.~numerical, non-physical,
heating of the experiment.
However, 
in our application as well as most non-relativistic plasma and gravitation
applications, suprathermal particles do exist. 
Their velocity is not bounded.

%
%
The present paper proposes the parallel, locally refined, dynamically adaptive
grid to result from a \spacetree\
\cite{Weinzierl:2009:Diss,Weinzierl:11:Peano} 
yielding a Cartesian tessellation.
Particles are embedded into the finest tree level.
The latter is the adaptive grid hosting the PDE. 
A multiscale grid traversal with particle-grid updates then can be realised
via a simple recursive code mirroring a depth-first search
\cite{Weinzierl:15:Peano}.

Two particle realisation variants are studied: 
we either store the particles within the spacetree leaves (particle in tree; \pit) or within the dual
tree (particle in dual tree; \pidt) induced by the spacetree vertices.
The latter is similar to a linked-list approach with links on each
resolution level.
This multiscale linked-list can be deduced on-the-fly, i.e.~is not stored
explicitly but encoded within the tree's adjacency information.
Both particle storage variants render the
evaluation of classical compact particle-grid operators straightforward as each
particle is assigned to its spatially nearest grid entity anytime. 
Tunneling is enabled as particles are allowed to move up and down in 
the whole spacetree.
\pidt\ furthermore can move particles between neighbours. 
Besides neighbours and parent-child relations, no global adjacency data
structure is required.
While \pidt\ induces a runtime overhead and induces a more complex code compared
to \pit, the multiscale linked-list nature of \pidt\ reduces the particle
movements up (lift) and down (drop) in the tree, and particles moving along the links can
be exchanged asynchronously in-between grid traversals.
If we combine \pidt\ with a simple analysed tree grammar
\cite{Knuth:90:AttributeGrammar} for the particle velocities, we can predict lifts and anticipate drops in whole grid
regions.
This helps to overcome a fundamental problem.
Since tunneling is always possible, we need all-to-all communication in every 
time step: each rank has to check whether data is to be received
from any other rank. 
This synchronisation introduces inverse weak scaling. The more particles 
the higher the number of tunnels.
The more ranks the more tunnel checks. 
With a lift prediction, we can locally avoid the all-to-all that we map onto
a reduction (reduction-avoiding \pidt; \rapidt).
We weaken the rank synchronisation.
Rank and process are used here as synonyms. 
For reasonably big parallel architectures, \pidt\ and its extension \rapidt\
thus outperform \pit.
Though all three flavours of particle storage support tunneling, their
performance is not significantly slower than a classic
linked-list approach not allowing particles to tunnel.
The multiscale nature of both \pit\ and \pidt\ as well as its variant \rapidt\
further makes the communication pattern, i.e.~the sequence and choice of
particles communicated via MPI, comprise spatial information. 
Ranks receiving particles that have to be sorted into a local grid anticipate
this presorting and thus can outperform other classic approaches
\cite{Dubey:11:ParticlesInMesh,Shamoto:14:GPUSorting,Sundar:13:HykSort}.

%
%
The remainder is organised as follows. 
We refer to related work before we introduce the algorithmic
steps of electrostatic \pic\ motivating our algorithmic research
(Section \ref{section:pic}).
In Section \ref{section:spacetree}, we describe our spacetree grid paradigm.
Two particle realisation variants storing particles either within the grid cells
or within the vertices form the present work's focus
(Section \ref{section:cell_and_vertex_based}).
Our experimental evaluation starts from a review of the particle movement
characteristics before we study the runtime behaviour of the particle storage
and sorting schemes as well as the scaling.
Finally, we compare our results to three other approaches.
This comparison (Section
\ref{section:comparison}) highlights our contribution with respect to the
particular application from Section \ref{section:pic} as well as the inevitable
cost introduced by the tunneling particles.
A summary and outlook in
Section \ref{section:conclusion_and_outlook} close the discussion.
The appendix comprises a real-world validation run of our code as well as
additional experimental data.

\section{Related and used work}
\label{section:related-work}

Requiring tunneling in combination with a persistent grid holding a PDE solution 
renders many established grid data structures inappropriate or unsuited.
We distinguish three classes of alternatives: on-the-fly (re-)construction,
sorting and linked-list.
Reconstructing the assignment for the prescribed grid from scratch per
time step \cite{Sampath:08:Dendro} is not an option, as the grid is given, holds
data and is distributed.
Local sorting followed by scattering is not an option (see e.g.~\cite{Shamoto:14:GPUSorting,Sundar:13:HykSort} and references therein) as the
decomposition, i.e.~the adaptive mesh, is not known on each individual rank.
The two possible modifications are to rely on replicated grids per rank and then
to map these grids onto the real PDE grid---this approach is not followed-up
here---or to sort locally within the grid and then to scatter those particles
that cannot be sorted locally.
Such a two-stage approach enriches local meta data information with additional
decomposition data (ranks have to know how to scatter the particles) held on
each rank, but may run into network congestion (cf.~results from
\cite{Dubey:11:ParticlesInMesh}) even if we avoid all-to-all collectives \cite{Sundar:13:HykSort} and rely on elegant decomposition paradigms
such as space-filling curves (SFCs).
Finally, standard linked-cell algorithms, cell-based Verlet lists
or hypergraph-based approaches
\cite{Gonnet:07:VerletLists,Mattson:99:ModifiedLinkedList} are not well-suited.
They impose constraints on the velocities of particles, i.e.~they can not handle particles that may travel arbitrarily fast.
Alternative approaches such as overlapping domain decompositions where the grid
holding the particles is replicated on all ranks, schemes where particle and
domain decomposition may differ, schemes where the associativity is re-emended only every $k$ steps or approaches tailored to regular grids only are not considered here 
\cite{Bender:12:PackedMemoryArray,Ihmsen:11:Sorting,Mattson:99:ModifiedLinkedList,Plimpton:03:LBForPIC}.

For the present challenge, we require an algorithm that updates the grid-particle correlation
incrementally.
It has to be fast, i.e.~anticipate the incremental character, but nevertheless
has to support particle escapees.
Hereby, the particle distribution follows the grid decomposition:
particles contained within a grid cell should reside on the compute node
holding the respective grid cell in a non-overlapping domain decomposition
sense.
For this, we rely on a tree data structure where grid cells `point' to
particles.
Our approaches materialise ideas of bucket sorts \cite{Sundar:13:HykSort}.
A tree's multiscale nature allows us to realise the sorting incrementally,
locally and fast without constraints on the particle velocity or the dynamically
adaptive grid structure.
The omnipresent multiscale nature and the lack of grid constraints make it
differ from the \texttt{up\_down\_tree} approach in
\cite{Dubey:11:ParticlesInMesh}.
Our \pidt\ variant holds particles within vertices and thus allows cells to push
particles into their neighbouring cells directly.
No multiscale data is involved.
As such it materialises the linked-cells idea where each cells holds a list of
its neighbours.
As these links are available on all discretisation levels, our \pidt\ variant
can be read as a multiscale linked-cell approach.
Finally, our code holds all tree data non-overlappingly and does not require any
rank to hold the subdomain dimensions of each and every other rank.
This makes the present algorithms differ from classic codes that rely on SFCs to
obtain a fine grid partition and then construct (overlapping) coarse grid
regions bottom-up.
All adjacency data of the domain decomposition is localised.
%
%

Comparisions to several of these aforementioned approaches with tunneling
highlight the present algorithms' strenghts and weaknesses.
Notably, they highlight that our local decomposition and particle handling
is, for many applications, superior to classic SFC
decompositions where the domain decomposition is known globally as well as to
approaches where particles are handed around cyclically or sorted from a coarse
regular mesh into the fine AMR structure \cite{Dubey:11:ParticlesInMesh}.
A comparison to a classic linked-cell code that does not support
tunneling finally reveals the price we have to pay to enable tunneling. 

%
%
Throughout the present paper, we rely on our open source AMR software framework
Peano \cite{Weinzierl:15:Peano,Software:Peano}.
It allows us to rapidely implement and evaluate the present algorithmic ideas.  
\rapidt\ was fed back into this framework and now is
available as black-box particle handling scheme.
Yet, none of the present algorithms is tied to that particular framework.
They work for any spactree-based code---relying on classic quadtree or octree
discretisation, e.g.---and for any spatial dimension as long as two criteria are
met.
First, our algorithms need all resolution levels of the multiple AMR
grids embedded into each other explicitly.
The meshing software has to provide all the mesh resolution levels
to the solver and the solver has to be able to embed data into each and every
level. 
Providing solely the finest mesh, e.g., is not sufficient.
Second, the algorithms have to switch from coarser to
finer levels and the other way round throughout the mesh traversal. 
Notably, a depth-first, a breadth-first or any hybrid traversal on the tree
that is implemented recursively work.  

\section{Use case: a particle-in-cell code}
\label{section:pic}

 \begin{figure}[!ht]
 \centering
 \includegraphics[width=0.32\linewidth]{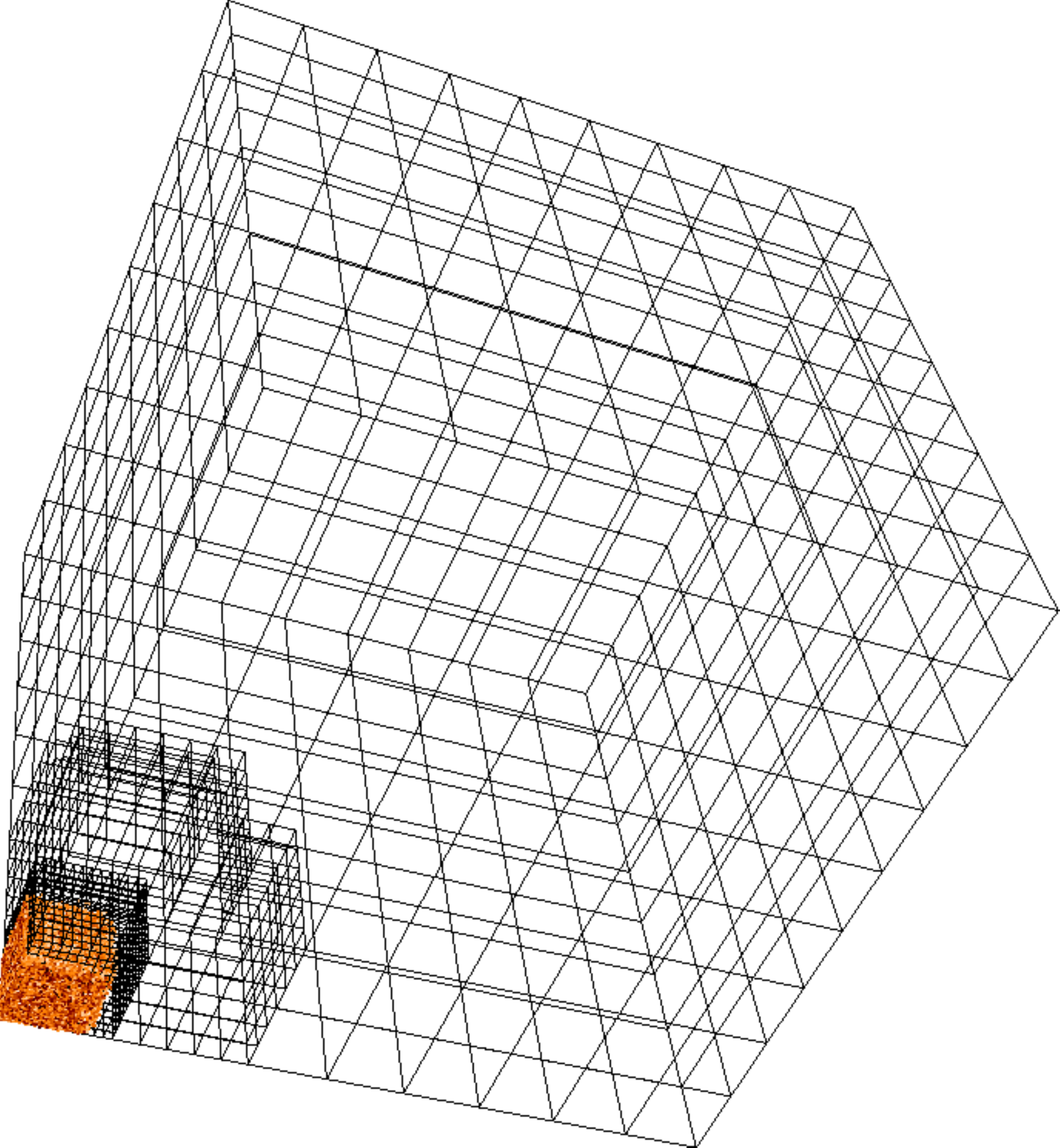}
 \includegraphics[width=0.32\linewidth]{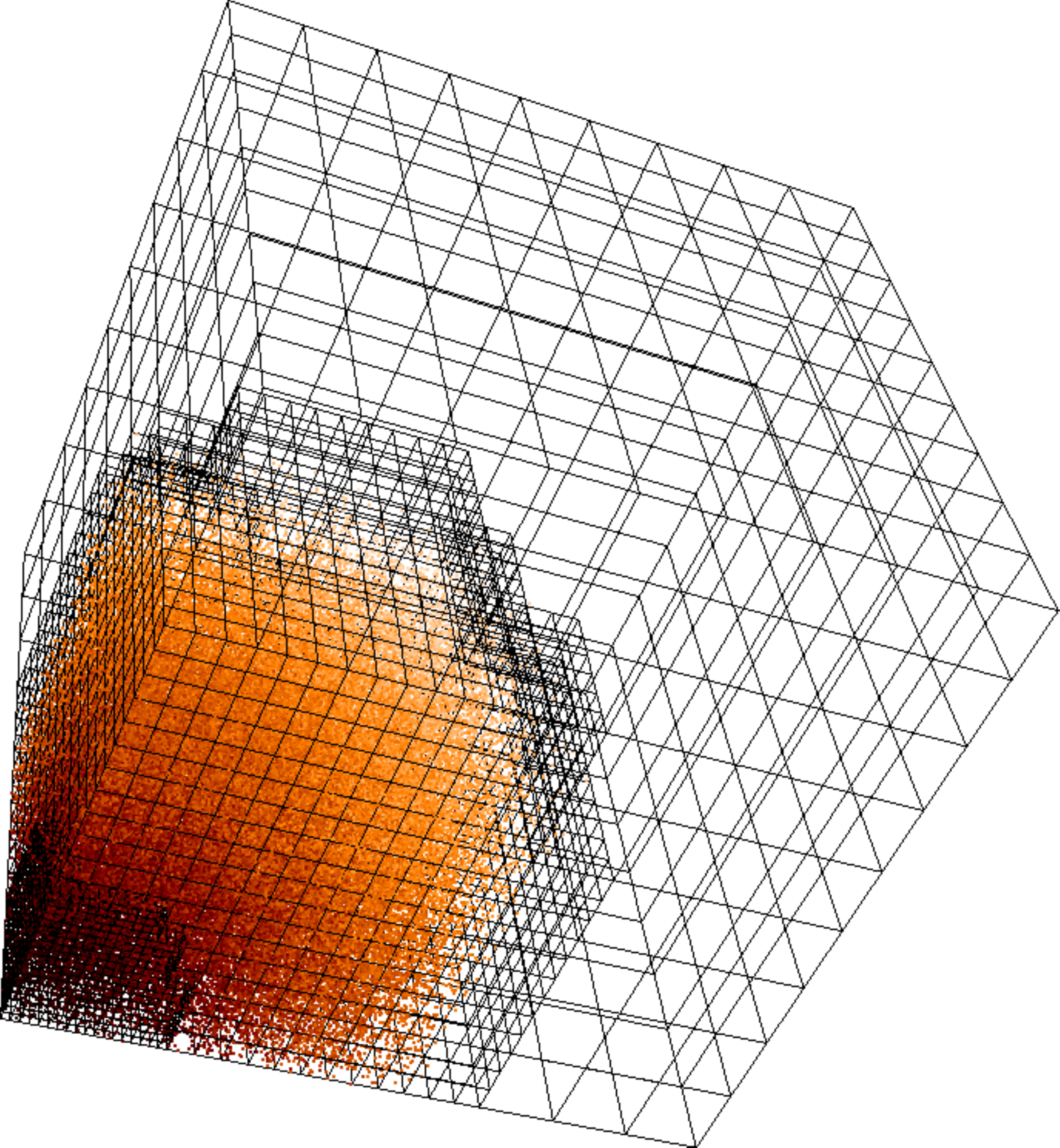}
 \includegraphics[width=0.32\linewidth]{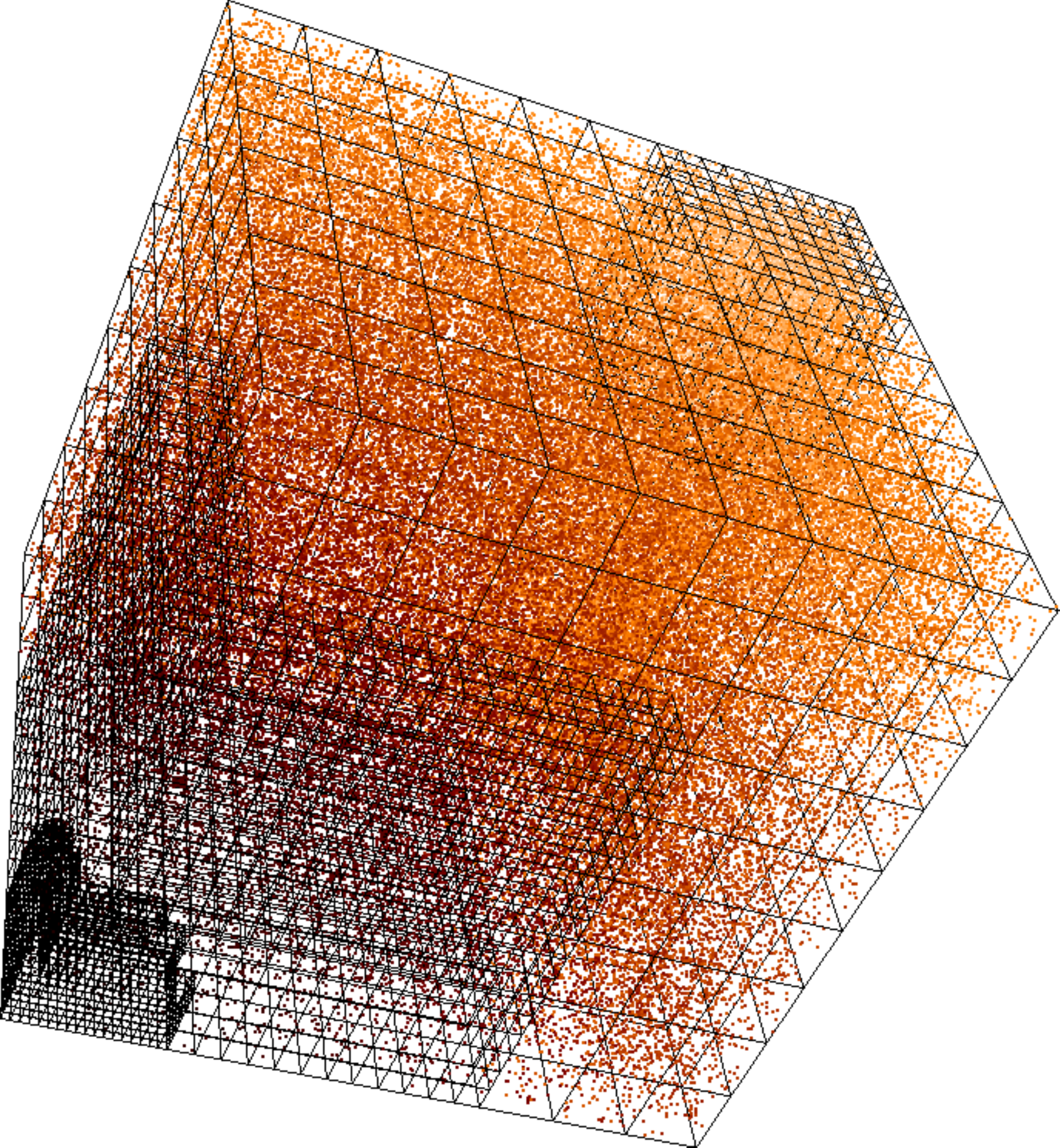}
 \caption{ An artificial test setup at time steps $t\in\{2,130,190\}$ from left
   to right:
   Particles are homogeneously distributed among a small cube embedded into the
   unit cube at time $t=0$.
   We apply reflecting boundary conditions. 
   As the particles move according to a fixed random velocity (the brighter the higher 
   their speed) and as the grid keeps the number of particles per cell bounded,  
   the mapping of grid entities to particles is permanently to be updated.
   }
 \label{fig:eyecatcher}
  \vspace{-0.2cm}
 \end{figure}

The particle-in-cell (\pic) method has originally been developed to
solve kinetic equations in plasma physics
and self-gravitating systems (the so-called Vlasov equation, or collisionless
Boltzmann equation), but it has also shown to be efficient for computational fluid simulations
(\cite{Bender:12:PackedMemoryArray,Fleissner:08:LB,Hockney:88:PIC,Ihmsen:11:Sorting,Lapenta:12:PICSpaceWeather,Lapenta:95:PIC,Li:02:MeshfreeAndParticle,Plimpton:03:LBForPIC,Verleye:13:PIC},
e.g.).
Macro particles in a Lagrangian frame---they mimic
the behaviour of a distribution function---are tracked in continuous phase
space.
Simultaneously, moments of the distribution function such as charge
density for plasma physics simulations are computed on an Eulerian frame 
(fixed cells) in $\mathbb{R}^d, \ d \in \{2,3\}$. 
The \pic\ method reduces the N-body problem by filtering out binary interactions
between particles through a so-called mean field approximation.
In return, it couples the particles to a grid accommodating a grid-based solver
of a partial differential equation.
\pic\ typically reads as follows:

\begin{enumerate}
  \item Given a set of particles at positions $x_i$, 
   the charge density $\rho$ is defined on the grid, and obtained from the
   particle through 
   $\rho = \sum _i R(x_i)$. 
    The restriction $R(x_i)$ in our case has local support and the grid based
    upon squares/cubes is chosen such that $R$ affects at most $2^d$
    vertices.
    It depends on the particle position.
  \item We solve  
    \begin{equation}
      \mathcal{L}(V) = \rho
      \label{equation:pic:pde} 
    \end{equation} 
    on the grid which yields a potential $V$.
    $V$'s semantics depend on the model. In our case it is the electric
    potential though all properties hold when we 
    consider a gravitational potential with $\rho$ being the \emph{mass}
    density.
    $\mathcal{L}$ is, in our case, the Laplacian. Throughout the
    solution process, the grid is dynamically coarsened and refined to resolve
    $V$ and the particle-cell interplay accurately.
  \item A field $E = -\nabla V$ is
    derived from the potential. This field then is interpolated from the
    grid to the particle positions $ \forall i: \qquad E _i = R^T (x_i) E $.
    We apply the transpose of the interpolation per particle. 
  \item The particle velocities $v_i$ and positions $x_i$ are updated
    (some authors call this ``push'' \cite{Plimpton:03:LBForPIC}) with 
    $\partial _t v_i = \phi _i(E_i,v_i)$ and $\partial _t x_i = v_i$
    with a generic particle property equation $\phi _i(E_i,v_i)$.
\end{enumerate}

\noindent
Moving the particles is computationally cheap once the impact $E_i$ on the
particles is known.
The particle-to-grid and grid-to-particle mappings usually stem from the 
multiplication of particle shape functions  
translated to the particle positions $x_i$ with grid-aligned test 
functions that in turn are used to discretise the PDE.
In our case, we restrict to Cartesian grids, i.e.~to square or cube cells for
the spatial discretisation, and we use $d$-linear shape functions for the PDE.
More sophisticated schemes are possible.

As $\mathcal{L}$ generally may comprise $\partial _t E$, $E$ or $V$ from the
previous time step have to be available on the grid.
Equation (\ref{equation:pic:pde}) typically is solved by an iterative scheme.
%
The present work neglects particularities of the PDE solver as well as 
dynamic load balancing and adaptivity criteria.
However, we highlight that the impact of the particles on the
PDE solution correlates to the particle density, i.e.~the more particles in a
given grid region, the rougher the PDE solution due to stimuli on the right-hand
side.
At the same time, a proper adaptivity criterion
anticipates the Debye length: the maximum grid size depends locally on both
the particle density and the mean velocity (bulk flow velocity) or mean square particle velocity (thermal velocity)
\cite{Birdsall:05:PlasmaPhysics}.
Our particle-grid realisations have to support dynamically adaptive grids.

We first focus exclusively on adaptive Cartesian grids and 
storage paradigms for the dynamically adaptive grid
holding the particles in the cells/vertices.
Different to many other particle codes where the grid is merely a helper data
structure, our grid stays persistent between any two time steps.
Second, we focus on reassignment procedures once the particles are moved.
Different to many other particle codes, it is crucial that the grid-particle
relations are updated immediately and that there is no restriction on $v_i$ with
respect to the grid size.
Third, we focus on the distributed memory parallelisation of the reassignment
induced by a non-overlapping, given grid decomposition, and we study how particles have to be exchanged if they move along
this decomposed geometric structure.
This facilitates a fast evaluation of the grid-particle operator, as particles
reside on the same node as their corresponding grid element.
A discussion of a proper choice of a grid decomposition as well as dynamic
load balancing are beyond scope.
Finally, we reiterate that we neither can make assumptions on the grid structure 
nor on the particle velocities nor on the actual particle distribution.
Particles can either traverse the grid elements smoothly or tunnel several 
grid elements a time.

\section{A distributed \spacetree\ data structure holding particles}
\label{section:spacetree}

A multitude of ways exists to formalise and implement adaptive Cartesian grids.
Tree-based approaches are popular
(cf.~overview
in \cite{Bader:13:SFCs} or
\cite{Lashuk:12:ParallelFMMOnHetergeneousArchitectures,Rahimian:2010:GordonBell,
Sampath:08:Dendro,Weinzierl:2009:Diss,Weinzierl:11:Peano}).
They facilitate dynamic adaptivity and low memory footprint storage
schemes teaming up with good memory access characteristics---in particular 
in combination with space-filling curves \cite{Bader:13:SFCs}.

In the present paper, we follow a $k$-\spacetree\ formalism 
\cite{Weinzierl:2009:Diss,Weinzierl:11:Peano}:
the computational domain is embedded into a square or cube, the {\em root}. 
This geometric primitive is split into $k$ parts along each coordinate axis. 
We end up with $k^d$ squares or cubes, respectively.
They tessellate the original primitive.
This setup can be represented by a graph with $k^d+1$ nodes and a relation
$\sqsubseteq _{child\ of}$.
Each node of this graph represents one cube or square, respectively, and is
denoted as {\it cell}.
If $a \sqsubseteq _{child\ of} b$, $a$ is contained within $b$ and is derived
from $b$ by $k$ cuts through $b$ along each coordinate axis.
We continue recursively while we decide for each cell whether
to refine further or not.
The resulting graph is a $k$-\spacetree\ given by $\sqsubseteq _{child\ of}$, a
set of cells $\mathcal{T}$ and a distinguished root.
For $k=2$ the scheme mirrors the traditional octree/quadtree concept
\cite{Bader:13:SFCs,Lashuk:12:ParallelFMMOnHetergeneousArchitectures,Rahimian:2010:GordonBell,
Sampath:08:Dendro}.
Our present code relies on the software \peano\
\cite{Weinzierl:15:Peano,Software:Peano} and thus uses $k=3$.
However, all algorithmic ideas work for any $k\geq 2$.

\begin{figure}  
\begin{center}
  \includegraphics[width=0.8\textwidth]{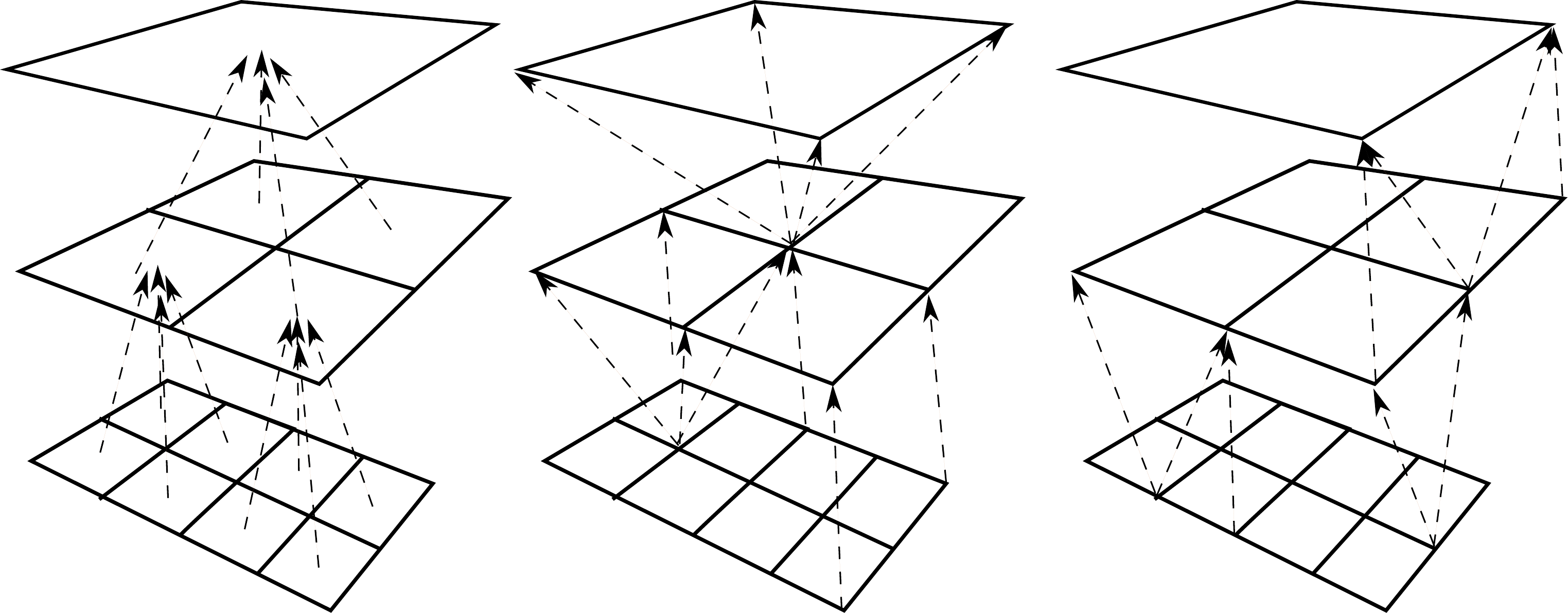}
  \vspace{-0.2cm}
  \caption{
    The spacetree introduces a parent relation on the spacetree cells
    (left; arrows point from child to parent).
    This spatial order in combination with the grid embedding induce a parent
    relation upon the vertices as well (middle and right).
    Vertices coinciding with coarser vertices on the same level have one parent,
    vertices within coarser cells have $2^d$ parents, vertices located on coarse
    grid faces and edges have a parent cardinality in-between.
  }
  \label{sketches:parent-relation}
  \vspace{-0.5cm}
\end{center}
\end{figure}

The distance from the root cell to any other cell is the cell's {\it level}.
All cells of one level represent geometric primitives of exactly the same size,
are aligned and do not overlap.
The $k$-\spacetree\ consequently yields a cascade of ragged
Cartesian grids.
The union of all these grids is an adaptive Cartesian grid.
Let $\Omega _{\ell }$ denote the grid of one \spacetree\ level $\ell$.  
The union $\Omega _h = \bigcup _\ell \Omega _{\ell }$ then yields an adaptive
Cartesian grid. 
A cell of $\mathcal{T}$ is a {\em leaf} if it does not
have a child.
If $a \sqsubseteq _{child\ of} b$, $b$ is a {\em parent} of $a$.
Due to the cascade-of-grids formalism, there may be multiple vertices in a
\spacetree\ at the same spatial location while they belong to different
grids
$\Omega _\ell$, i.e.~different levels.
Each vertex has up to $2^d$ adjacent cells on the same level. 
If it has less than $2^d$ adjacent cells, it is a {\em hanging vertex}.
Anticipating the \spacetree's partial order, a vertex has up to $2^d$ parent
vertices (Figure \ref{sketches:parent-relation}): 
A vertex $v_b$ is a parent to vertex $v_a$, if all cells that are adjacent to
$v_a$ have parent cells that are in turn adjacent to $v_b$.
While many tree-based codes deduce an adaptive grid from a \spacetree\ formalism
and then work basically on the spacetree's leaves, 
we preserve and maintain the whole tree as computational data structure though
the particle-grid interaction is often computed only in leaves.

\subsection{Dual \spacetree\ grid}
Besides the cascade of Cartesian grids, the \spacetree\ formalism also induces
dual grids. 
A dual grid $\Omega ^{(d)} _\ell $ of one level is defined as
follows:
It is a grid consisting of geometric primitives of exactly the
type as in $\Omega _\ell $. 
They are dilated such that each cell center of $\Omega _\ell $ coincides
with one non-hanging vertex of the dual grid.
$\mathcal{T}^{(d)}$ then is the cascade of dual grids to $\mathcal{T}$
(Figure~\ref{tree-grid-dual-grid}).
For odd $k$---we have $k=3$---we observe {\em dual grid consistency}:
A dual cell of one level either is contained completely within a dual cell of a
coarser level or does not intersect with coarsers cells at all.
The present algorithms did, in principle, not rely on this consistency, but
their implementation's simplicity benefits from $k=3$.

\begin{figure}  
\begin{center}
  \includegraphics[width=0.8\textwidth]{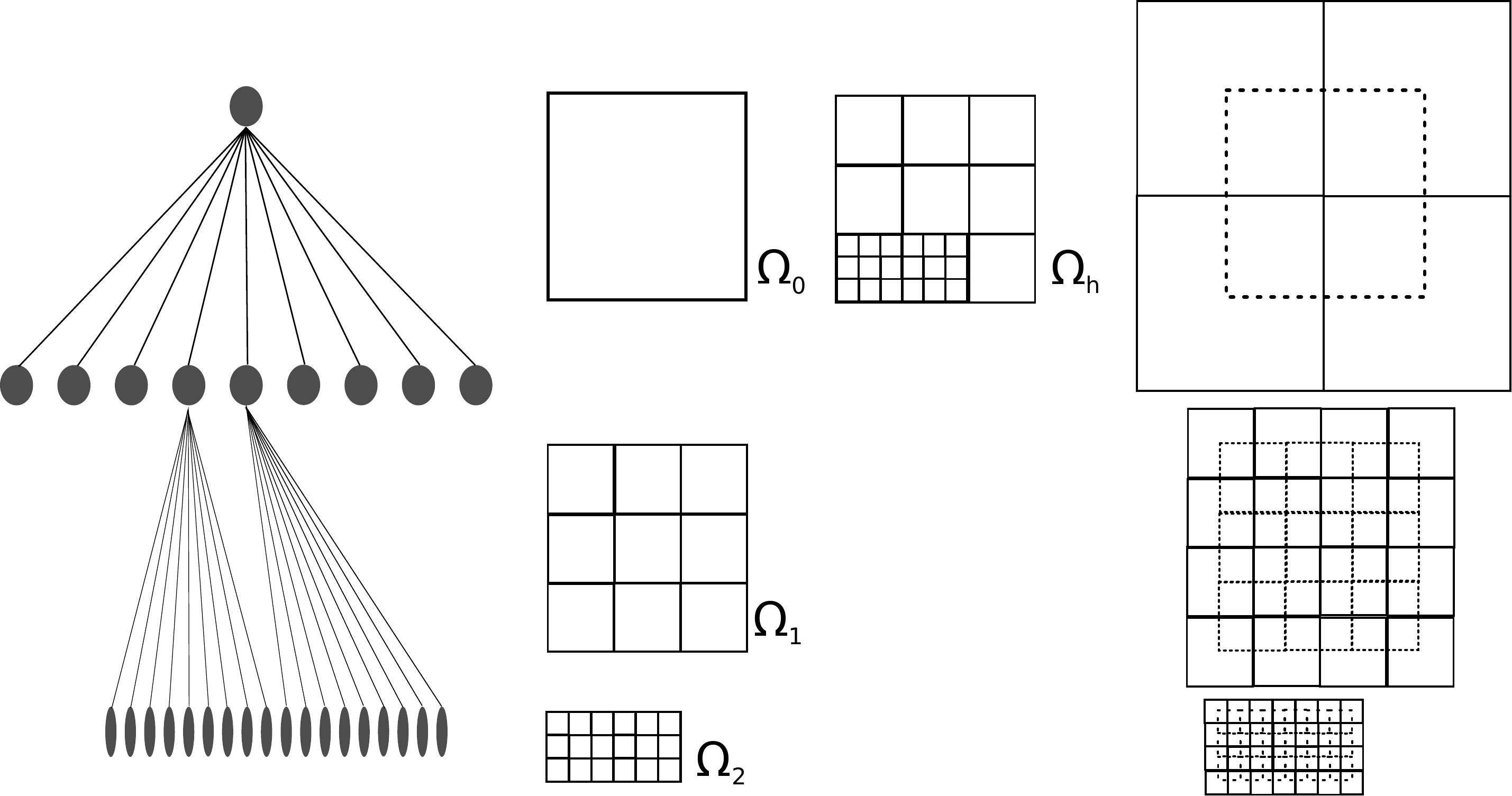}
  \vspace{-0.2cm}
  \caption{
    From left to right:
    A \spacetree\ is a directed graph $\sqsubseteq _{child\ of}$
    visualised bottom-up here.
    It yields a cascade of ragged Cartesian grids $\Omega _0, \Omega _1,
    \Omega _2$.
    The merger of the three is an adaptive Cartesian grid $\Omega _h$.
    The dual $k=3$-\spacetree\ yields a cascade of three dual grids
    (with the original grid as dotted lines).
  }
  \label{tree-grid-dual-grid}
  \vspace{-0.5cm}
\end{center}
\end{figure}

\subsection{Storing particles within the spacetree}

The present paper studies two choices to store the particle-grid relations:
either each cell holds the particles covered by it, or each vertex holds the 
particles whose positions are closer to this vertex than to any other vertex.
``Holds'' denotes that the grid entity basically links to an array of particles.
While we focus on the data handling and parallelisation, packed memory
algorithms \cite{Bender:12:PackedMemoryArray}, e.g., can replace these arrays
with more efficient realisation variants avoiding frequent reallocation.
Furthermore, we do not discuss global data layout optimisations to improve
the placement of the arrays in memory
\cite{Bender:12:PackedMemoryArray,Ihmsen:11:Sorting} but rather refer to \cite{Weinzierl:2009:Diss,Weinzierl:11:Peano} where we introduce a
multiscale ordering of the spacetree cells along the Peano space-filling curve.
Such an ordering of the cells transfers to an ordering of arrays.
We thus can assume reasonably efficient memory data access with respect to
caches.
Without loss of generality, these arrays are, for the time being, empty on the
coarse levels, whereas all particles reside within the finest grid resolution.
In our second storage scheme, particles are assigned to the vertex whose dual
cell covers their position.
It is a Voronoy-based particle assignment \cite{Springel:09:Arepo}.
As we are working in a spacetree environment, the two approaches are named
{\em particle in tree} (\pit) and {\em particle in dual tree} (\pidt).

\subsection{Parallel grid decomposition}

Let $\mathbb{P}=\{p_0,p_1,\ldots \}$ be the finite set of processes (ranks) on
a parallel computer.
$col: \mathcal{T} \mapsto \mathbb{P}$ is a {\em colouring} that assigns each 
\spacetree\ cell a colour, i.e.~a processor that is responsible for this cell. 
In practice, each processor holds only that part of a spacetree it is
responsible for \cite{Weinzierl:2009:Diss}.
Let $\sqsubseteq _{worker\ of}$ induce a tree topology on
$\mathbb{P}$ as follows:
\begin{equation}
  \forall a \sqsubseteq _{child\ of} b: \
    col(a) = p_i \wedge col(b) = p_j \Rightarrow
    p_i = p_j \vee 
    p_i \sqsubseteq _{worker\ of} p_j. 
    \label{subtree-relation}
\end{equation}
As the graph in (\ref{subtree-relation}) shall be free of cycles,   
different subtrees of the global \spacetree\ are
handled by different processes, i.e.~on different ranks.
We decompose the spacetree.
No refined cell is shared among multiple ranks, and 
$col$ introduces a master-worker relations with a distinguished global
master being responsible for the root.
This scheme differs from local essential tree constructions where coarse grid
cells are replicated among multiple nodes
(\cite{Lashuk:12:ParallelFMMOnHetergeneousArchitectures} and their references
to original work).
We rely on a unique rank responsibility for coarse grid spacetree cells
\cite{Weinzierl:15:Peano}.

Our tree colouring induces a multiscale non-overlapping \spacetree\
decomposition, i.e.~each cell is assigned to a rank uniquely, while vertices at
the partition boundaries are held and replicated on up to $2^d$ ranks.
The term non-overlapping refers to each individual level.
Let each cell of $\mathcal{T}^{(d)}$ be assigned to the
ranks that hold cells that are adjacent to the center of the cell in
$\mathcal{T}$.
The colouring then induces a multiscale 1-overlapping decomposition on
$\mathcal{T}^{(d)}$.

For \pic, it is convenient to decompose particles along $col$ as the particles
do not interact.
For \pidt, dual cells intersecting with the parallel boundary then are
replicated while the particles are never replicated but always are assigned to
one rank uniquely.


\subsection{Event-based formalism of the tree traversal}

A {\em tree traversal} is an algorithm running through $\mathcal{T}$.
An effective tree traversal shall have three properties:
\begin{enumerate}
  \item All data access is local within the grid/tree.
  \item All data access anticipates the domain decomposition.
  \item The data access scheme is efficient. In particular,
  any particle sorting shall be possible in one tree traversal even if the grid
  changes.
\end{enumerate}
For this, we process each cell of this set twice: an operation
\EventEnterCell\ is performed prior to an operation \EventLeaveCell.
We enforce 
\begin{eqnarray}
  \forall a \sqsubseteq _{child\ of} b: & \Rightarrow & 
    \mbox{\EventEnterCell}(b) \sqsubseteq _{before} \mbox{\EventEnterCell}(a) \
    \wedge \nonumber \\
    && \mbox{\EventLeaveCell}(a) \sqsubseteq _{before}
    \mbox{\EventLeaveCell}(b) \qquad \mbox{and} \nonumber \\
  \forall c: && \mbox{\EventEnterCell}(c) \sqsubseteq _{before}
  \mbox{\EventLeaveCell}(c)
  \qquad \mbox{for } a,b,c \in \mathcal{T}
    \label{traversal-constraint} 
\end{eqnarray}
where $\sqsubseteq _{before}$ is a partial temporal access order.
Obviously, both depth-first and breadth-first $k$-\spacetree\ traversal as well
as hybrid variants preserve (\ref{traversal-constraint}).

As the tree is a representation of the cascade of grids, a tree traversal
describes a strict element-wise multiscale processing of the whole grid cascade.
For any element-wise realisation of a PDE solver or a particle-based algorithm
it is then sufficient to specify which data are assigned to the $k$-\spacetree's 
vertices and cells, and to specify how individual {\em events}
\cite{Weinzierl:15:Peano} such as \EventEnterCell\ and \EventLeaveCell\ map onto algorithmic fragments often called compute kernels.
Our discussion restricts to element-wise algorithms as those algorithms fit
straightforwardly to non-overlapping domain decompositions.

Following the notion of element-wise processing, only records assigned to one 
cell and its adjacent vertices are available to an event.
Following the notion of a spacetree, the parent 
data are passed to events as well.
Besides the cell events, our implementations rely on two additional events:
\EventTouchVertexFirstTime\ is called once per \spacetree\ vertex per
traversal per rank before the vertex is used by this rank the very first time,
i.e.~before \EventEnterCell\ is invoked on any adjacent cell of this vertex.
\EventTouchVertexLastTime\ is called once per \spacetree\ vertex on each rank 
after the vertex has been used the very last time, i.e.~after
\EventLeaveCell\ has been invoked on all adjacent cells. 

In a parallel tree traversal, the global master starts to traverse the
tree. 
Its workers' tree traversals are successively started up as soon as 
(\ref{traversal-constraint}) allows for. 
This is a broadcast along a tree topology.
Whenever a tree traversal ascends again in the tree (processes \EventLeaveCell\
and \EventTouchVertexLastTime), it might have to wait for other colours to
finish their
share of the global tree due to (\ref{traversal-constraint}).
The startup of a remote colour is accompanied by the two events
\EventPrepareSendToWorker\ and \EventMergeWithWorker\ invoked on the master or
worker, respectively.
The other way round, \EventPrepareSendToMaster\ and
\EventMergeWithMaster\ are called.
Furthermore, events, i.e.~plug-in points for the algorithm, do exist to merge
vertices at the domain decomposition boundaries.
Vertex exchanges are triggered after a local vertex has been used for the
last time.
\EventMergeWithNeighbour\ events then are
invoked on each rank per parallel domain boundary vertex prior to any usage of this vertex.
As \EventMergeWithNeighbour\ is called prior to the next usage but copies are
sent after \EventTouchVertexLastTime, vertex data exchange along the rank
boundaries overlaps two iterations.
This exchange is asynchronous while the
information exchange between masters and workers is synchronous.

\section{Cell- and vertex-based particle movers}
\label{section:cell_and_vertex_based}

For each of the two particle storage schemes, \pic\ requires the scheme
to maintain the particle-to-grid mapping.
Whenever a particle moves, our code
has to analyse whether the particle remains within its cell or dual cell,
respectively, or it has to update the mapping otherwise.
While it might be straightforward to iterate first over all particles to move
them before we sort them in an update sweep, we propose to merge particle movement
and reassignment.
This way, we avoid an extra sorting step.
For both \pit\ and \pidt, we propose to move particles up and down
within the \spacetree\ to enable tunneling, but we enforce that all particles
are sorted into the leaf tree level prior to any subsequent operation on the
particle in the next traversal.
Our algorithmic sketches describe stationary grids. 
If \spacetree\ nodes are added dynamically, both approaches automatically move 
particles into the right grid entities as the sort algorithm makes all particles
reside on the finest grid level.
If \spacetree\ nodes are removed, both approaches move the particles associated 
with removed grid entities up in the spacetree and continue. 
Support of dynamic adaptivity hence is straightforward and not discussed
further.

\subsection{Particle in tree (\pit)}

\begin{figure}[tb]
 \begin{center}
  \includegraphics[width=0.8\textwidth]{./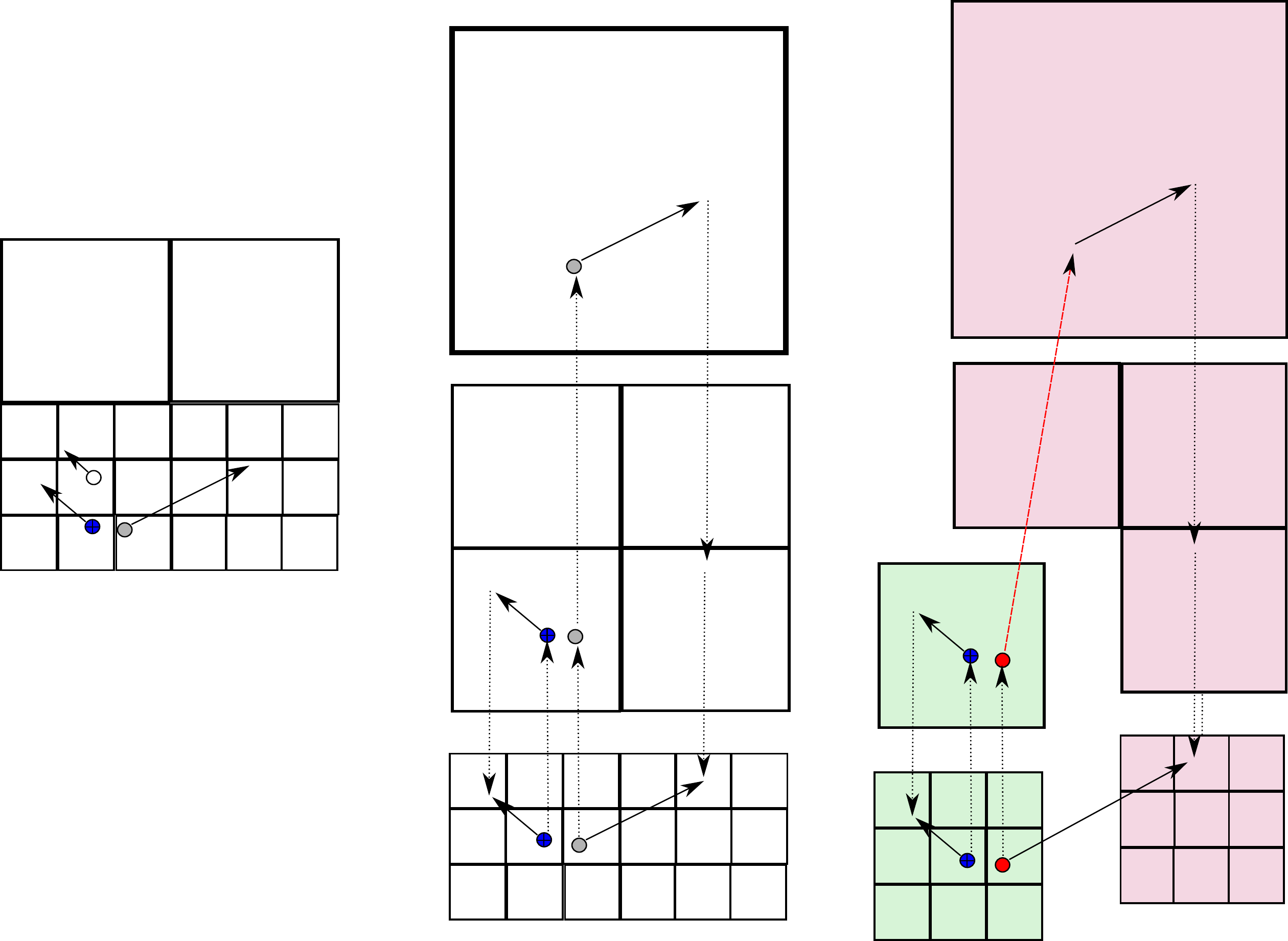}
  \caption{
    Left: Assigning particles to cells is trivial as long as particles due not
    leave `their' cell to the particle update (empty particle).
    If particles leave a cell (filled particles),
    they have to be reassigned to a new cell.
    Middle: \pit\ lifts these particles to the next coarser levels and then
    drops them back. 
    If particles tunnel (light gray), they have to be lifted several levels.
    Right: 
    A tree decomposed into two colours is traversed by \pit. 
    Lifts (red) and
    drops (blue) are embedded into this parallel traversal.
    Therefore, particles are exchanged in-between processors only along the
    parallelisation's master-worker topology. 
    }
  \label{figure:pit}
  \vspace{-0.5cm}
  \end{center}
\end{figure}

\pit\ maps each particle from the particle set $\mathbb{M}$ onto a
leaf of the \spacetree, i.e.~$\mathcal{X}_{\mbox{\tiny \pit}}: \mathbb{M} \mapsto \mathcal{T}$.
Only the particle position $x \in \mathbb{R}^d$ determines this mapping.
\pit's particle update then reads as Algorithm \ref{algo:pit} and
integrates into any tree traversal preserving (\ref{traversal-constraint})
(Figure \ref{figure:pit}).

\begin{algorithm}
  \caption{Particle-in-tree algorithm.}
    \label{algo:pit}
    {\footnotesize
    \begin{algorithmic}[1]
       \Function{\EventEnterCell}{cell $c \in \mathcal{T}$}
         \If{$c$ refined}
           \ForAll{$m \in \mathbb{M}$ associated to $c$}
           \Comment preamble
             \State identify $c' \sqsubseteq _{child\ of} c$ containing $m$
             \State assign $m$ to $c'$ \Comment drop
           \EndFor
         \EndIf
         \State invoke application-specific operations
         \If{$c$ unrefined}
           \ForAll{$m \in \mathbb{M}$ associated to $c$}
             \State update position $x(m)$ 
             \Comment move
           \EndFor
         \EndIf
         \State invoke application-specific operations
       \EndFunction

       \Function{\EventLeaveCell}{cell $c \in \mathcal{T}$}
         \State invoke application-specific operations
         \ForAll{$m \in \mathbb{M}$ associated to $c$}
           \Comment epilogue
           \If{position $x$ of $m$ not contained in $c$}
             \State assign $m$ to $c'$ with $c \sqsubseteq _{child\ of} c'$
             \Comment lift
           \EndIf
         \EndFor
       \EndFunction
   \end{algorithmic}
   }
\end{algorithm} 


Each global particle sorting is split among two traversals. 
Let traversal $t_2$ follow $t_1$. 
All lift operations with respect to particle updates in $t_1$ are embedded into
the traversal $t_1$, whereas the drops and the particle position updates are embedded into $t_2$.
$t_2$ already realises the lifts of the subsequent time step whereas $t_1$ 
realises the drops of the previous one.
Hence, one (amortised) traversal per resort is sufficient---the first 1.5
traversals realise the first sort, the next 2.5 the second, \ldots---and
the following statement holds:

\begin{theorem}
\label{theorem:one}
Whenever \EventEnterCell\ is invoked on a leaf $a$ and all its preamble
operations terminate, all particles within the cell have a position
$x$ covered by $a$.
\end{theorem}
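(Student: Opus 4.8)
The plan is to prove the invariant by tracking where a particle can reside at the moment \EventEnterCell\ is invoked on a leaf $a$, and showing that the preamble drop step, combined with the traversal ordering constraint (\ref{traversal-constraint}), forces every particle associated with $a$ to have its position covered by $a$. First I would fix the two-traversal accounting described just before the statement: in traversal $t_1$ the lifts of the previous sort are completed, and in the current traversal $t_2$ we perform drops and position updates. The key fact to establish is that every particle whose position lies in (the closure of) $a$ is associated with some ancestor of $a$ (possibly $a$ itself) at the time the traversal reaches $a$'s coarsest relevant ancestor, and that the drops then walk it down level by level until it lands in $a$.

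The core argument proceeds by induction on the \spacetree\ level, following the temporal order imposed by (\ref{traversal-constraint}). I would state the strengthened invariant: when \EventEnterCell($c$) finishes its preamble for any cell $c$ (leaf or refined), every particle associated with $c$ has position $x$ covered by $c$. The base case is the root, where trivially every particle's position lies in the root cube. For the inductive step, consider a cell $c'$ with $c' \sqsubseteq _{child\ of} c$. By (\ref{traversal-constraint}), \EventEnterCell($c$) precedes \EventEnterCell($c'$); by the inductive hypothesis every particle associated with $c$ after $c$'s preamble has position in $c$. The preamble of \EventEnterCell($c$) (lines identifying $c' \sqsubseteq _{child\ of} c$ containing $m$ and the drop step) then reassigns each such particle to the unique child containing its position. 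Since the children tessellate $c$ and the particle's position lies in $c$, exactly one child $c'$ receives it, and that position is covered by $c'$. Hence at the moment \EventEnterCell($c'$) begins, every particle associated with $c'$ already has position in $c'$; its own preamble either drops further (if $c'$ is refined) or, if $c'$ is a leaf, leaves the association intact, so the claim holds for $c' = a$.

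The one subtlety I would address carefully is that no particle can be associated with $a$ without having passed through the drop chain from an ancestor, i.e.\ that the only way a particle becomes associated with a cell is via a drop from its parent (in the current traversal) or via a lift from the previous traversal that has since been re-dropped. Here I would invoke the two-traversal amortisation: any lift performed in $t_1$ moves a particle to a strictly coarser ancestor, and because position updates (\emph{move}) occur only in unrefined cells and only in $t_2$ after the drops, a particle's position is never changed between the instant it is dropped into $a$ and the instant $a$'s preamble completes. I expect the main obstacle to be precisely this bookkeeping across the traversal boundary: one must rule out a particle being associated with $a$ whose position escaped $a$ during the preceding move but has not yet been lifted. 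This is resolved by observing that for a leaf the move in $t_2$ happens \emph{after} the preamble, and the subsequent lift in the epilogue (\EventLeaveCell) removes any escapee before the next traversal, so at the next \EventEnterCell($a$) no stale particle with an out-of-cell position remains. Making this temporal separation between preamble-drop, move, and epilogue-lift explicit, and checking it respects (\ref{traversal-constraint}) at the parallel master-worker boundaries, is the crux of the proof.
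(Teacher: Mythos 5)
Your proof follows essentially the same route as the paper's: the two-traversal split (lifts completed in $t_1$, drops and position updates in $t_2$), a level-by-level induction showing that the preamble drop deposits each particle into the unique child covering its position, and the observation that the epilogue lifts together with the order (\ref{traversal-constraint}) guarantee that by the start of $t_2$ every particle sits at an ancestor covering its new position. The only difference is presentational: the paper formalises your ``subtlety'' as a second explicit induction over the levels during traversal $t_1$ (terminating at the root, which trivially covers every position), whereas you leave that part as bookkeeping --- but the content is identical.
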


\begin{proof}
  We restrict to a
  single particle $m \in \mathbb{M}$, and we assume that the theorem holds prior
  to traversal $t_1$ with $\mathcal{X}_{\mbox{\tiny \pit}}(m)$ being a leaf.
  We first show that $x \in \mathcal{X}_{\mbox{\tiny \pit}}(m)$ as soon as $t_1$
  terminates though the image can be a refined spacetree cell.
  $x \in \mathcal{X}_{\mbox{\tiny \pit}}(m)$ denotes that the spacetree cell to which $m$
  is mapped to covers the position $x$.
  For this, we make a simple top down induction on the tree depth.
  If $\mathcal{X}_{\mbox{\tiny \pit}}(m)$ is the root, $x \in
  \mathcal{X}_{\mbox{\tiny \pit}}(m)$ is trivial as the root spans the whole computational domain. 
  Otherwise we distinguish two cases for \EventLeaveCell\ invoked for a
  \spacetree\ cell of level $\ell +1$:
  \begin{itemize}
    \item $x \in \mathcal{X}_{\mbox{\tiny \pit}}(m)$: the particle resided in
    $\mathcal{X}_{\mbox{\tiny \pit}}(m)$ prior to a position update and doesn't leave the cell.
    \pit\ does not modify $\mathcal{X}_{\mbox{\tiny \pit}}$.
    \item $x \not\in \mathcal{X}_{\mbox{\tiny \pit}}(m)$: the particle leaves the cell in
    which it was contained before. In \EventLeaveCell, we assign it to its
    parent, i.e.
    \begin{eqnarray*}
      \mathcal{X}_{\mbox{\tiny \pit}} & \mapsto &
      \mathcal{X}^{\mbox{\tiny (new)}}_{\mbox{\tiny \pit}} \qquad \mbox{with} \\
      \mathcal{X}^{\mbox{\tiny (new)}}_{\mbox{\tiny \pit}}(m) & = & p  \qquad \mbox{while} \\
      \mathcal{X}_{\mbox{\tiny \pit}}(m) & \sqsubseteq _{child\ of} & p.
    \end{eqnarray*} 
    Due to (\ref{traversal-constraint}),
    this happens after \EventEnterCell\ is invoked for $m$ in $t_1$ and
    does not harm our initial assumption. Due to (\ref{traversal-constraint}),
    this transition is triggered before the \EventLeaveCell\ call for the parent
    on level $\ell$. The statement for level $\ell$ holds by induction.
  \end{itemize}
  For the subsequent traversal $t_2$, we again use an induction over the tree
  height and distinct two cases. However, we argue bottom-up. For trees of
  height zero the algorithm's correctness is trivial. Let the root be refined.
  \begin{itemize}
    \item $\mathcal{X}_{\mbox{\tiny \pit}}(m)$ on level $\ell$ is a leaf: The theorem holds. 
    \item $\mathcal{X}_{\mbox{\tiny \pit}}(m)$ on level $\ell$ is refined. The algorithm
    bucket sorts the particle into the child cell covering its new position, 
    i.e.~
    \begin{eqnarray*}
      \mathcal{X}_{\mbox{\tiny \pit}} & \mapsto &
      \mathcal{X}^{\mbox{\tiny (new)}}_{\mbox{\tiny \pit}} \qquad \mbox{with} \\
      \mathcal{X}_{\mbox{\tiny \pit}}^{\mbox{\tiny (new)}}(m) & = & c  \qquad \mbox{while} \\
      c & \sqsubseteq _{child\ of} & \mathcal{X}_{\mbox{\tiny \pit}}(m).
    \end{eqnarray*} 
    $c$ has level $\ell +1$. Due to (\ref{traversal-constraint}), \EventEnterCell\ for the parent is
    invoked prior to \EventEnterCell\ for the \spacetree\ node on level $\ell
    +1$.
    The statement for level $\ell+1$ holds by induction.
  \end{itemize}
\end{proof}

\noindent
A distributed memory parallel version of \pit\ adds two case distinctions. 
Whenever \pit\ lifts a particle from a local root,
i.e.~a \spacetree\ node whose parent is assigned to a different colour, this
particle is sent to the parent's rank by the event
\EventPrepareSendToMaster.
In return, \EventMergeWithMaster\ receives and inserts it into the local data
structure.
Whenever the tree traversal encounters a cell of a different colour, 
\EventPrepareSendToWorker\ replaces the drop by a send to the worker rank.
\EventMergeWithWorker\ receives them and continues to drop them (Figure
\ref{figure:pit}).
All parallel data flow is aligned with the tree topology on $\mathbb{P}$,
i.e.~particles are only sent up and down within the \spacetree.
All data exchange is synchronous. 

\subsection{Particle in dual tree (\pidt)}
\label{particle-in-dual-tree}

\begin{figure}  
\begin{center}
 \includegraphics[width=0.5\textwidth]{./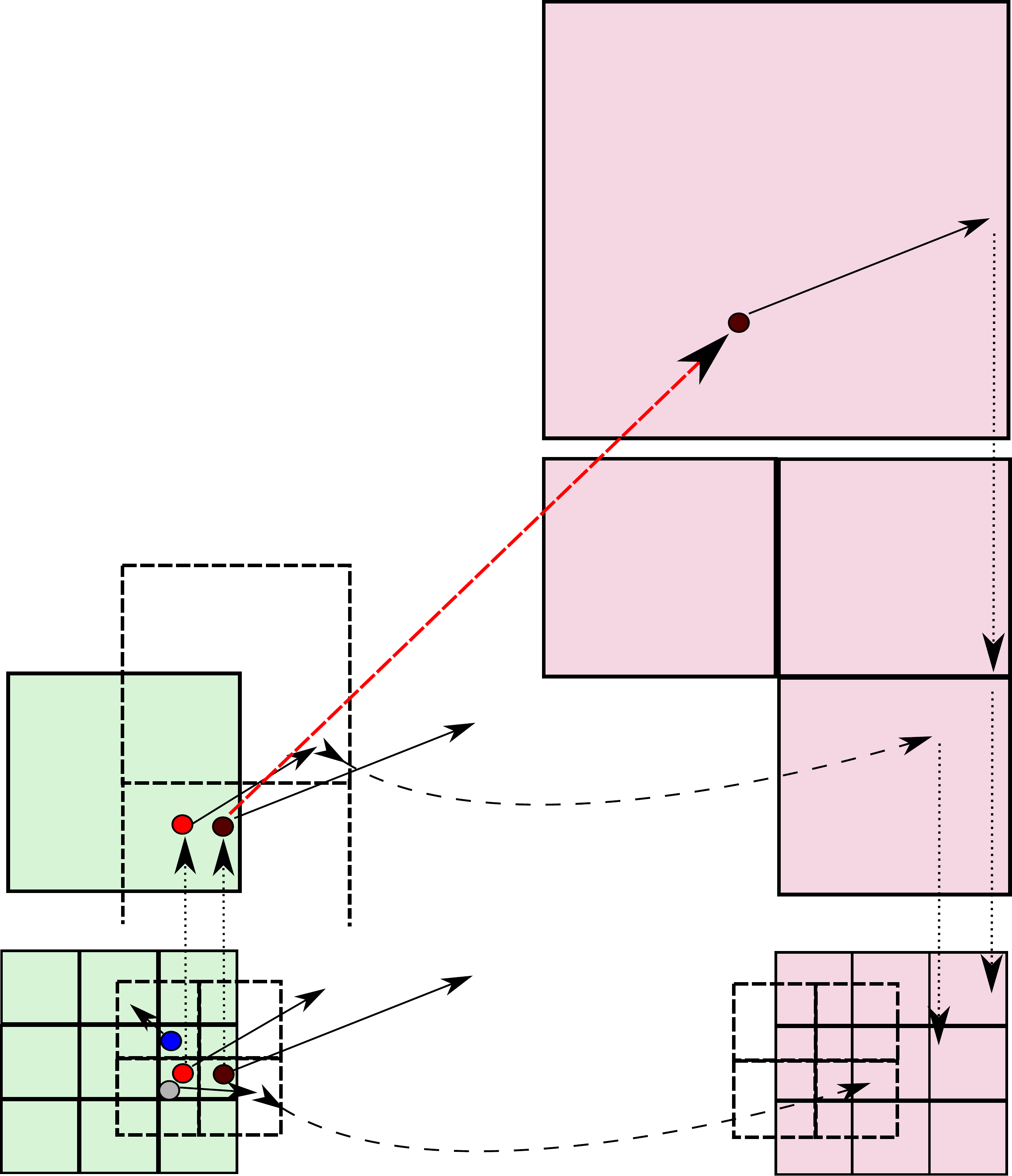}
  \caption{
    Parallel \pidt: 
    There are three different types of particle movements.
    Particles might remain in their dual cell (blue), might move to an
    adjacent dual cell (gray) or tunnel (red and dark red).
    Only for the latter, multilevel movement comes into play.
    Copies of particles leaving through vertices without further tunneling 
    are sent away immediately throughout the traversal but are not received 
    prior to the next iteration (asynchronous communication, dotted arrows
    from left to right).
    Only very few tunneling particles are communicated synchronously among the
    tree topology (dark red particle, red dotted diagonal arrow).
  }
  \label{figure:pidt}
 \vspace{-0.5cm}
\end{center}
\end{figure}

\pidt\ augments each particle by a boolean flag $moved \in \{\top,\bot\}$
(``has moved already'' $\top$ and ``has not not moved yet'' $\bot$) and
maps each particle onto a vertex of the \spacetree,
i.e.~$\mathcal{X}_{\mbox{\tiny \pidt}}: \mathbb{M} \mapsto \mathcal{T}^{(d)}$.
The particle update then reads as as Algorithm \ref{algo:pidt} and
integrates into any tree traversal preserving (\ref{traversal-constraint}) (Figure \ref{figure:pidt}).
For efficiency reasons, the particle loops in \EventEnterCell\ and
\EventLeaveCell\ can be merged for leaf cells.

\begin{algorithm}
  \caption{Particle-in-dual-tree algorithm (continued in 
  Algorithm \ref{algo:pidt2}).}
    \label{algo:pidt}
    {\footnotesize
    \begin{algorithmic}[1]
       \Function{\EventTouchVertexFirstTime}{$v$}
         \If{$v$ refined} \Comment i.e.~all surrounding cells are refined
           \ForAll{$m \in \mathbb{M}$ associated to $v$}
           \Comment preamble
             \State $moved(m) \gets \bot$
           \EndFor
         \EndIf
         \State invoke application-specific operations
       \EndFunction

       \Function{enterCell}{cell $c \in \mathcal{T}$}
         \If{$c$ refined}
            \Comment preamble
            \ForAll{$2^d$ adjacent vertices $v$}
              \ForAll{$m \in \mathbb{M}$ associated to $v$}
               \If{$moved(m)=\bot$ and $x(m)$ contained in $c$}
                 \State identify child vertex $v'$ whose dual cell holds $x(m)$
                 \State assign $m$ to $v'$
                 \Comment drop
                \EndIf
              \EndFor
            \EndFor
         \EndIf
         \State invoke application-specific operations
         \If{$c$ unrefined}
           \ForAll{$2^d$ adjacent vertices $v$}
             \ForAll{$m \in \mathbb{M}$ associated to $v$}
               \If{$moved(m) = \bot \wedge x(m)$ contained in $c$}
                 \State update position $x$ of $m$  \Comment move
                 \State $moved(m) \gets \top$
               \EndIf
             \EndFor
           \EndFor
         \EndIf
         \State invoke application-specific operations
       \EndFunction
   \end{algorithmic}
   }
\end{algorithm}

\begin{algorithm}
  \caption{Particle-in-dual-tree algorithm continued from Algorithm
  \ref{algo:pidt}.}
    \label{algo:pidt2}
    {\footnotesize
    \begin{algorithmic}[1]
       \Function{leaveCell}{cell $c \in \mathcal{T}$}
         \State invoke application-specific operations
         \ForAll{$2^d$ adjacent vertices $v$}
           \Comment epilogue
           \ForAll{$m \in \mathbb{M}$ associated to $v$}
             \If{$moved(m) = \top \wedge x(m)$ contained in dual cell of
             other $v'$ adjacent to $c$}
               \State assign $m$ to $v'$
               \Comment linked-list type reassignment 
               \State
               \Comment (no tunneling)
             \EndIf 
           \EndFor
         \EndFor
       \EndFunction

       \Function{\EventTouchVertexLastTime}{$v$}
         \State invoke application-specific operations
         \ForAll{$m \in \mathbb{M}$ associated to $v$}
           \Comment epilogue
           \If{$moved(m)=\top \wedge x(m)$ not contained within dual cell of
           $v$}
             \State assign $m$ to parent vertex 
             \Comment lift (cmp.~Figure \ref{sketches:parent-relation})
             \State
             \Comment tunneling
           \EndIf 
         \EndFor
       \EndFunction
   \end{algorithmic}
   }
\end{algorithm}

\begin{theorem}
Whenever \EventTouchVertexFirstTime\ is invoked on a vertex $v$ and all its
$preamble$ operations have terminated, all particles within the dual cell of
$v$ have a position covered by this dual cell.
\end{theorem}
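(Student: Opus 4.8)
The plan is to mirror the two-traversal induction that establishes Theorem \ref{theorem:one} for \pit, transporting it from the primal cells to the dual tree $\mathcal{T}^{(d)}$. First I would note that the $preamble$ of \EventTouchVertexFirstTime\ only resets $moved(m)\gets\bot$ and, by (\ref{traversal-constraint}), runs before any cell adjacent to $v$ is entered; it therefore moves no particle, and the claim reduces to showing that, at the instant $v$ is first touched, every particle currently assigned to $v$ already lies in the dual cell of $v$. As for \pit, I would amortise one resort over two successive traversals: $t_1$ carries out the position updates, the adjacent-cell reassignments of \EventLeaveCell\ and the tunneling lifts of \EventTouchVertexLastTime, while $t_2$ performs the drops of the \EventEnterCell\ preamble that return lifted particles to the leaves. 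The inductive hypothesis is that before $t_1$ every $m$ sits at a non-hanging finest-level vertex (one whose $2^d$ adjacent cells are all leaves) whose dual cell covers $x(m)$.

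For $t_1$ I would trace a single particle and distinguish the three movement types of Figure \ref{figure:pidt}, with the $moved$ flag as the decisive device. Since $v$ is a finest-level vertex, \EventEnterCell\ updates $x(m)$ and sets $moved(m)\gets\top$ exactly once, namely in the unique unrefined adjacent cell that contains the old $x(m)$, because every subsequent visit finds $moved(m)=\top$ and skips the move. After the update either the new $x(m)$ still lies in the dual cell of the current vertex, so nothing happens; or it lies in the dual cell of another vertex $v'$ sharing a cell $c$ with $v$, in which case the \EventLeaveCell\ epilogue of $c$ performs the linked-list reassignment to $v'$; or it lies outside every such neighbouring dual cell, i.e.\ the particle tunnels, and \EventTouchVertexLastTime$(v)$ lifts it to a parent vertex. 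I would then argue, by induction on the level at which \EventTouchVertexLastTime\ fires and using the dual-grid consistency of the $k=3$ \spacetree, that iterated lifts terminate at a (possibly coarse) vertex whose dual cell covers $x(m)$: each coarser dual cell either contains or is disjoint from the finer one, so $x(m)$ sits in exactly one coarse dual cell at every level and the lift is directed to the matching parent among the up to $2^d$ candidates of Figure \ref{sketches:parent-relation}. By (\ref{traversal-constraint}) each lift on level $\ell+1$ precedes \EventTouchVertexLastTime\ of the parent on level $\ell$, so the induction closes and, at the end of $t_1$, every $m$ is assigned to a vertex whose dual cell covers $x(m)$, albeit possibly a refined one.

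For $t_2$ I would argue top-down. A coarse vertex touched first still holds only its tunneled particles, which are covered by the preceding $t_1$ lift, so the theorem already holds there; when the enclosing refined cell is entered, the \EventEnterCell\ preamble drops each such particle to the child vertex $v'$ whose dual cell holds $x(m)$, a target that is well defined by the same consistency argument. Because (\ref{traversal-constraint}) orders \EventEnterCell\ of a parent cell before that of its children, and \EventTouchVertexFirstTime$(v')$ is triggered only before the first adjacent child cell of $v'$, the drop deposits $m$ in $v'$ before $v'$ is first touched; descending level by level, every particle present when a finest-level vertex is first touched is therefore either inherited and covered by hypothesis or freshly dropped and covered by construction. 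This both discharges the theorem and re-establishes the finest-level hypothesis for the next $t_1$.

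The hard part, I expect, will not be the dual-grid geometry but the bookkeeping around $moved$ and the mutual exclusivity and exhaustiveness of the three movement cases. I would have to verify that each particle is advanced exactly once across the $2^d$ cells sharing a vertex, that the adjacent reassignment in \EventLeaveCell\ and the tunneling lift in \EventTouchVertexLastTime\ never both fire for the same $m$, and that a particle reassigned to $v'$ in \EventLeaveCell$(c)$ is still inspected by a later \EventTouchVertexLastTime$(v')$, all of which hinge on the precise interleaving (\ref{traversal-constraint}) of \EventTouchVertexFirstTime, \EventEnterCell, \EventLeaveCell\ and \EventTouchVertexLastTime\ on a vertex and its adjacent cells. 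The secondary difficulty is selecting, among the up to $2^d$ parent or child vertices, the unique one whose dual cell covers $x(m)$; here I would lean on the dual-grid consistency of the $k=3$ \spacetree\ to guarantee that this target exists and is unique at every level.
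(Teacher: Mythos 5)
Your proposal takes exactly the route the paper does: its entire proof of this theorem is the single sentence that it ``follows the proof of Theorem \ref{theorem:one}'', i.e.\ the same two-traversal lift/drop induction transported to the dual tree, which is precisely your plan. You in fact supply considerably more detail than the paper (the $moved$-flag bookkeeping, the three movement cases, and the use of dual-grid consistency to pick the unique parent/child vertex), and these elaborations are consistent with Algorithms \ref{algo:pidt}--\ref{algo:pidt2} and the ordering constraint (\ref{traversal-constraint}).
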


\begin{proof}
The proof follows the proof of Theorem
\ref{theorem:one}.
\end{proof}

\noindent
Again, parallel \pidt\ is straightforward.
Whenever the grid traversal has invoked \EventTouchVertexLastTime\ on one
parallel rank for a vertex, each particle of this vertex falls into one of four
categories:
\begin{itemize}
  \item It is to be lifted but the vertex does not belong to the coarsest level
  on the local rank. 
  These particles are neglected by the parallelisation as the particle is lifted 
  locally with \EventTouchVertexLastTime\ being called bottom-up throughout the
  traversal.
  \item It is to be lifted and the vertex is adjacent to the coarsest cell held by a 
  rank. 
  Such a particle is sent to the rank's master node where it is received in the 
  same iteration prior to the master's \EventLeaveCell.
  \item It belongs to the vertex's dual cell and intersects the local domain. 
  Such a particle already is assigned to the correct vertex and skipped.
  \item It belongs to the vertex's dual cell but it is not covered by the 
  local domain. 
  In the latter case, the particle is removed from the local vertex and sent to
  the respective destination rank merging all received particles into the local
  vertices prior to \EventTouchVertexFirstTime\ in the next tree traversal.
\end{itemize}

The data flow from workers to masters is aligned with the tree topology on $\mathbb{P}$.
This data flow comprises only particles that have to be lifted between the
coarsest worker cell and its parent residing on the master. 
It is synchronised with the tree traversal.
The drop mechanism also uses synchronous particle exchange whenever a rank 
descends into a cell not handled locally.
The exchange through the vertices follows a Jacobi-style update:
particles are sent away in one traversal and received prior to \EventTouchVertexFirstTime\ 
of the subsequent traversal.
That is convenient, as the particle lift and drop are split among two tree
traversals.
The latter data exchange is asynchronous and can be realised in the background
in parallel to the tree traversal.

\subsection{Remarks}

While \pit\ and \pidt\ realise the same fundamental ideas, they differ in code
complexity and the handling of not-tunneling particles.
If particles do not tunnel but move from one cell to a neighbouring
cell, \pit\ moves them up in the tree one level and down one
level.
\pidt\ exchanges them directly mirroring linked-list techniques.
This pattern applies in a multiscale sense for \pidt.
\pidt\ thus is a multiscale linked-list approach, where its 
``radius of operation'' is twice the grid width.
It is expected to have fewer lifts and drops compared to \pit.
Contrary, \pidt's higher algorithmic complexity mirrors directly to a more complex code.
While this might not influence the runtime, the redundant check of particles
does:
we need a flag $moved$ to mark updated particles which might double the number
of particle loads, i.e.~memory accesses.

Finally, we antedate experimental insight. 
Both algorithms suffer from a synchronisation of 
masters and workers.
Workers cannot start their tree traversal prior to their master having entered
the parent cell, as particles might drop from coarser resolution levels.
Masters cannot continue their tree traversal before the workers have finished,
as particles might have to be lifted from finer resolution levels.
Both synchronisation points introduce a latency and a bandwidth penalty
and make perfectly balanced load a must.
The bandwidth penalty scales with the number of particles to be exchanged.
It is hence less significant for \pidt\ than for \pit.
Both master-worker and worker-master communication describe a global all-to-all
communication that is realised as tree communication.
We state that the resulting global synchronisation were needless if no particles
would be lifted or dropped.

\subsection{Restriction-avoiding \pidt\ (\rapidt)}

To identify cases where we could skip the global communication, 
we introduce a marker 
$v_{max}: \mathcal{T} \mapsto \mathbb{R}^+_0$.
On any leaf of the spacetree, $v_{max}$ shall hold the maximum velocity 
component of all particles associated to the $2^d$ adjacent vertices. 
$v_{max}$ is a cell-based value though it results from vertex-associated 
data.
It is the maximum norm of all velocity components of all particles 
contained within the $2^d$ dual cells intersecting the current cell.
This value is determined by \EventLeaveCell\ in each traversal.
On a refined cell, we update the value with
\begin{equation}
  v_{max}(c) \gets max \{ v_{max}(c') : c' \sqsubseteq _{child\ of} c \}.
  \label{equation:vmax}
\end{equation}
Here, $v_{max}(c)$ is an analysed tree attribute
\cite{Knuth:90:AttributeGrammar} (re-)computable on-the-fly.

Given any \spacetree\ cell of width $h=3^{-level}$, the time step size
$\Delta t$ and a correct $v_{max}$, we know that no particles will be removed
from this cell or any of its children and successors due to lifts if all
particles contained have a velocity smaller than $h / \Delta t$.
We can check this due to $v_{max}$.
We augment \pidt\ with the following mechanisms:
\begin{itemize}
  \item Each rank holds a boolean map for all its workers.
  \item On \EventEnterCell\ for a cell deployed to a worker, we first update the
  local $v_{max}$ due to all particles dropped into this cell.
  \item The updated $v_{max} \leq h / \Delta h$ identifies a priori 
  whether particles will be lifted again from the remote subtree.
  We store this information within the local boolean map.
  \item We then continue with \pidt, i.e.~start up the remote rank.
  \item On \EventLeaveCell\ on the respective deployed cell, we have two
  opportunities:
  \begin{enumerate}
    \item If lifts are to be expected, we receive all particles from the
    worker, redetermine (\ref{equation:vmax}) and continue.
    \item If no lifts are to be expected, we continue with the tree
    traversal without waiting for any worker data. The reduction is avoided.
  \end{enumerate}
\end{itemize}
The receive branch is mirrored with the same rules on the worker to ensure that
no worker sends data in the reduction-avoiding case at all.

\section{Results}
\label{section:performance}

Our case studies were executed on SuperMUC at the Leibniz
Supercomputing Centre and the N8 Polaris system at Leeds.
SuperMUC hosts Sandy Bridge E5-2680 processors clocked to 2.3 GHz, Polaris
hosts Sandy Bridge E5-2670 processors at 2.6 GHz plus Turbo Boost.
Both have two eight-core processors per node.
SuperMUC runs Intel MPI 4.1, and every 8192 cores (512 two-processor nodes) are
called an island and are connected via a fully non-blocking
Infiniband network.
Beyond that core count, SuperMUC relies on a 4:1 blocking network. 
Polaris runs Open MPI 1.6.1, and every 192 cores are
fully non-blocking connected to one Mellanox QDR InfiniBand switch. 
Beyond that core count, Polaris realises 2:1 blocking.
Our code relies on good, i.e.~close to optimal, placement, i.e.~it uses as few
switches as possible.
All performance data is specified as particle updates per second, all
machine sizes are specified in cores.

For the performance tests, we remove the PDE solver part as
well as the interplay of particles and
grid, and thus focus exclusively on the particle handling.
The appendix presents a real-world run.
We study worst-case setups where the particle handling's performance
characteristics are not interwoven with other application phases.
Yet, we artificially impose, where highlighted, a fixed number of 0, 128, 256,
1024 or 4096 floating point operations (flops) onto each particle move to
quantify the impact of arithmetics on the runtime.
All particles are initially placed randomly and homogeneously within  
the unit square $(0,1)^d \subset \mathbf{R}^d$  
or a subdomain $(0.1,0.1)^d \subset (0,1)^d$.
The total number of particles results from the particle density $\rho
_{\mbox{\footnotesize particles}}$ within the initially populated area. 
Each
particle is assigned a random velocity $0 \leq |v| \leq 1$ uniformly distributed
(Figure \ref{fig:eyecatcher}).
This way, we make the
characteristic particle movement per step depend only on one quantity---the time
step size.
A confusion with particle-grid mapping in the PIC blueprint of Section
\ref{section:pic}~is out of question as we neglect the PDE. We hence write
$\rho = \rho _{\mbox{\footnotesize particles}}$.
Time step size $\Delta t$ and the maximum number of particles per cell
(\texttt{ppc}) both determine the particle movement/tunneling and the grid structure.
Whenever the given \texttt{ppc} is overrun in a particular spacetree cell, 
this cell is refined and the particles are hence sorted into the new spacetree 
nodes. 
This mirrors dynamic adaptivity assuming that the particle density correlates
to the smoothness of $E$ in (\ref{equation:pic:pde}).
Whenever multiple children of one refined spacetree node can be coarsened 
without violating the \texttt{ppc} constraint, we remove these children and 
lift the particles into the formerly refined cell.
All experimental code supports dynamic AMR.
All experiments apply reflecting boundary conditions at the border of
the unit square or cube, respectively, but do not change the particle 
velocities otherwise.
All experimental data result from a code with a static domain decomposition
deriving $\mathbb{P}$ from the spacetree due to graph partitioning.
The experiments are, if not stated otherwise, stopped after few time steps and
thus do not suffer (significantly) from ill-balancing. 
All domain decomposition/initialisation overhead is removed from the
measurements.

With all these parameters at hand, we can study the impact of total particle
count relative to the computational domain, we can study setups with extremely
inhomogeneous particle distribution vs.~homogeneous particle distributions, 
we can analyse the impact of the ratio of particle speed to minimal grid size,  
and we can study the interplay of particle density, \texttt{ppc} and time step size.

\subsection{Algorithmic properties}
\label{section:properties}

\begin{figure}[!ht]
\centering
  \includegraphics[width=0.44\linewidth]{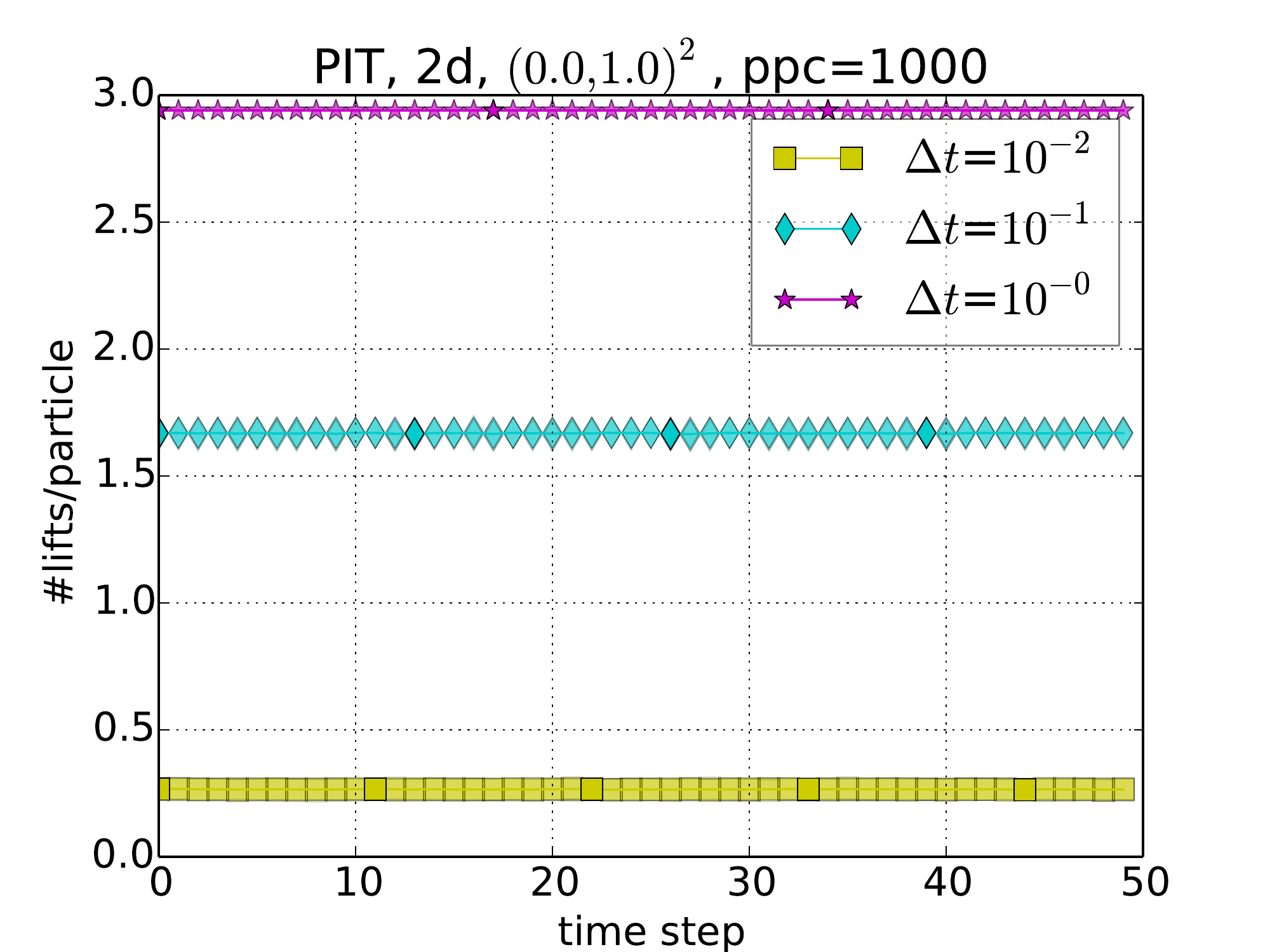}
  \includegraphics[width=0.44\linewidth]{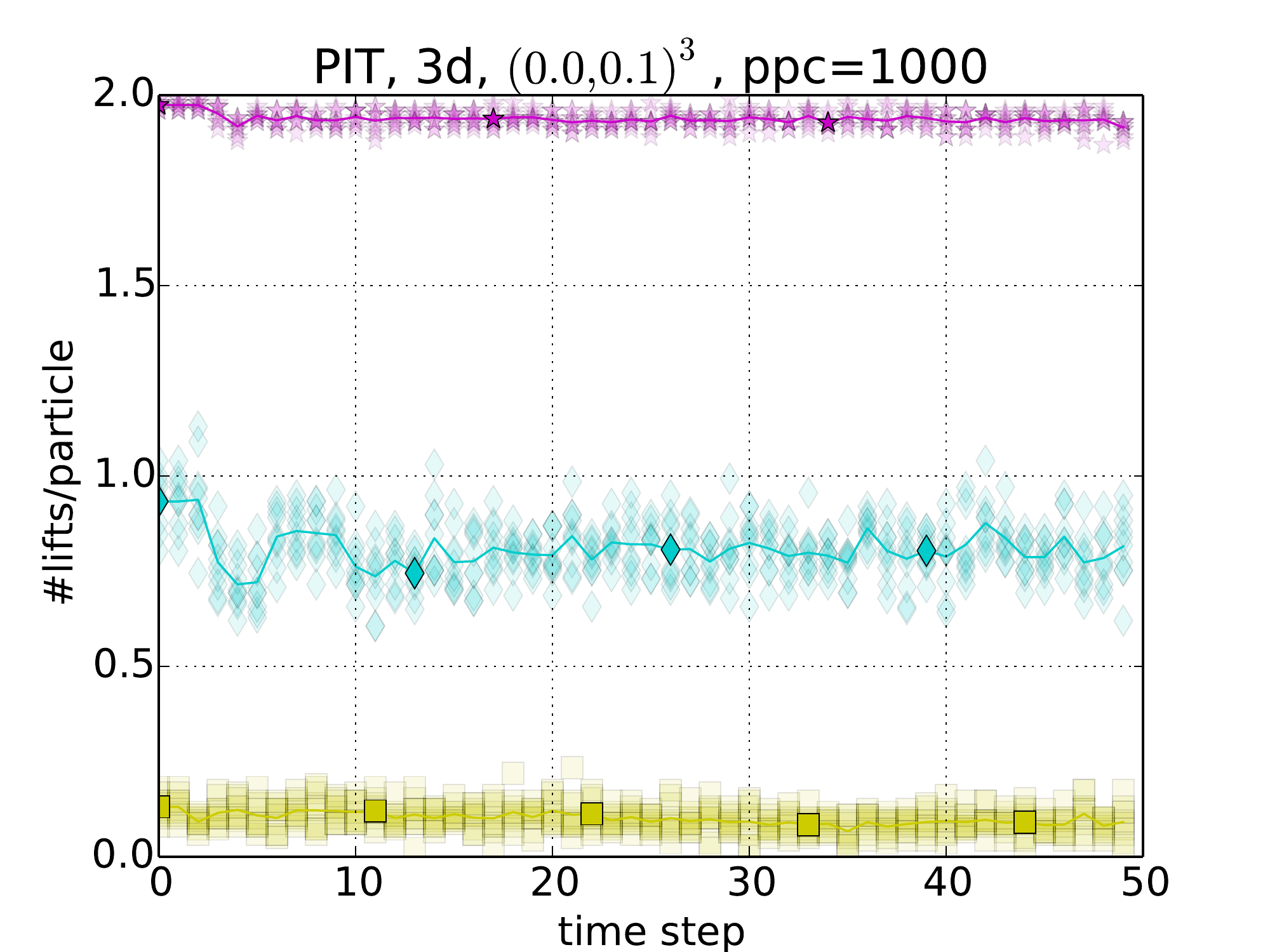}
  \includegraphics[width=0.44\linewidth]{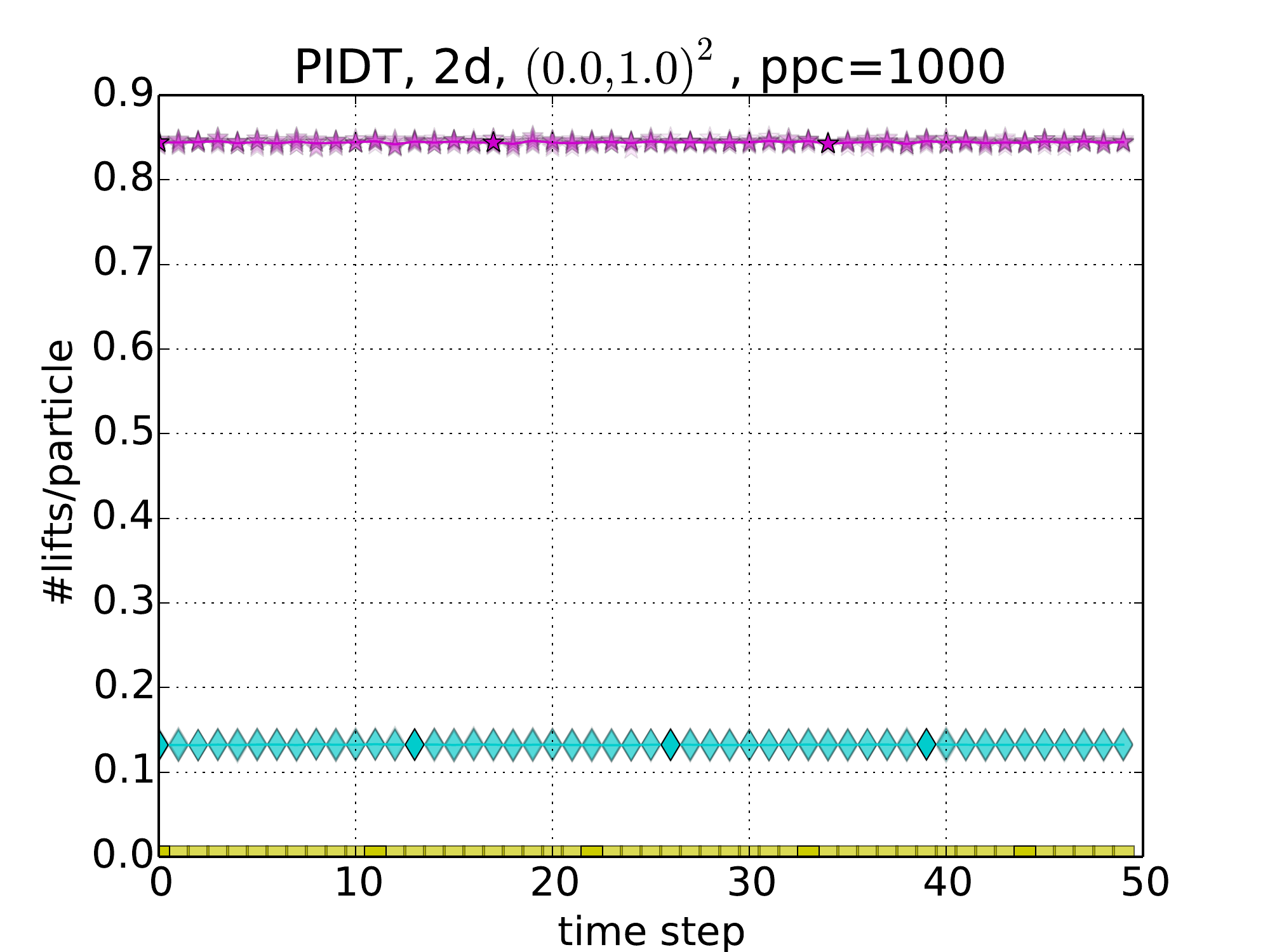}
  \includegraphics[width=0.44\linewidth]{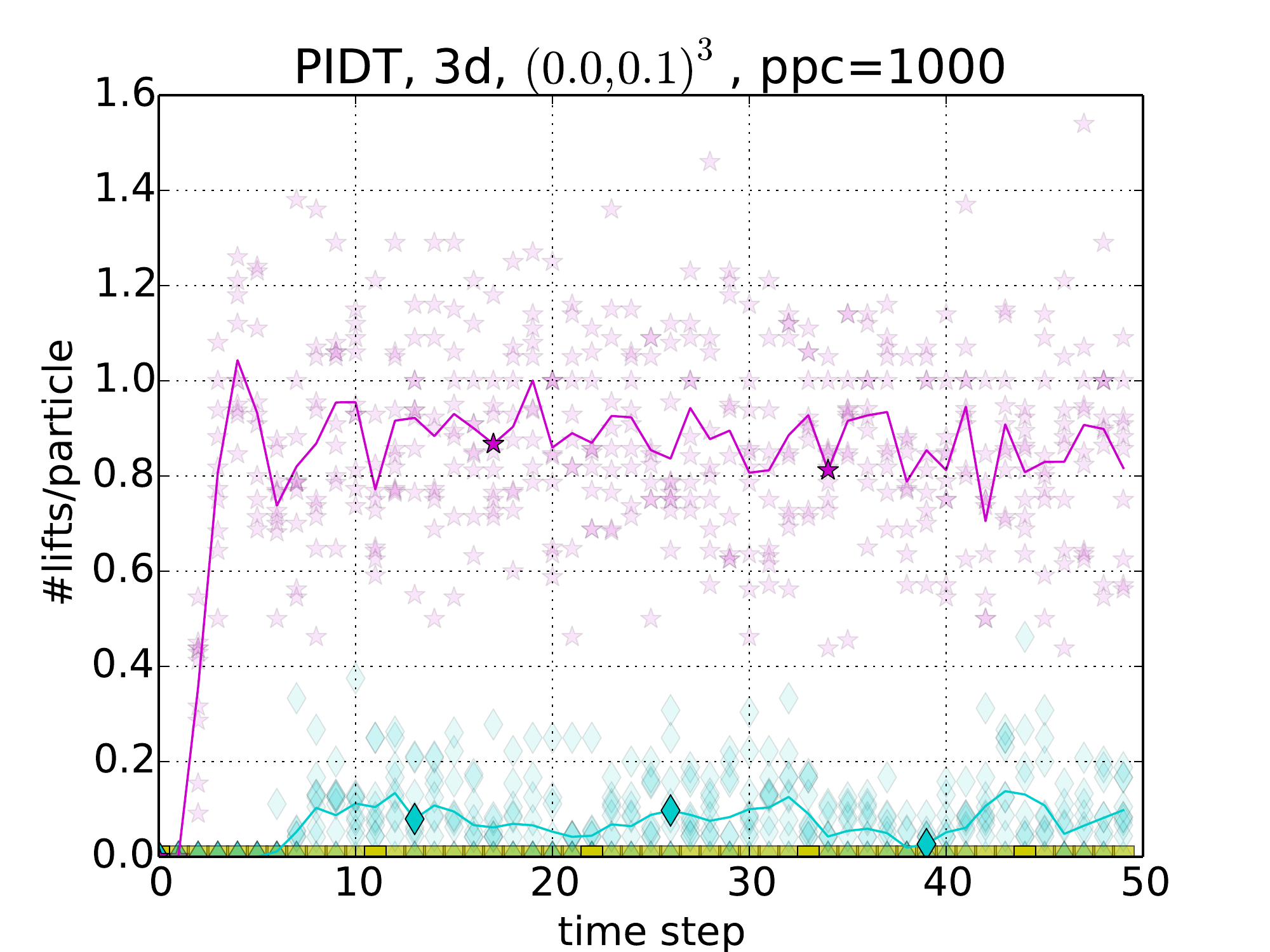}
  \caption{
    Lifts per particle per time step (\texttt{ppc}=1000).
    Left column: particles are homogeneously distributed
    among the unit square ($d=2$).
    Right column: particles are initially homogeneously distributed in
    $(0,0.1)^3 \subset \mathbb{R}^3$, i.e.~the grid is dynamically adaptive. 
    Results from ten random initial setups (blurry) with averages as solid
    lines.
    No lifts are observed at all for $\Delta t \leq 10^{-3}$.
  }
\label{figure:lifts-per-particle-with-different-dt}
\end{figure}

Prior to measuring the runtime, we focus on the lift behaviour for different
setups and count the average number of lifts per particle during the first 50
time steps.
Drop observations deduce from these.
All measurements are almost independent of (sufficiently big)
$\rho $---here set to $10^7$.
If a particle is lifted $n$ levels up in the spacetree, we count this as $n$
independent lifts.
When we fix \texttt{ppc} (\texttt{ppc}=1000 in Figure
\ref{figure:lifts-per-particle-with-different-dt}, e.g.), we observe that the
number of lifts remains de facto invariant throughout the simulation for a
globally homogeneous particle distribution.
For an inhomogeneous initial distribution, lifts occur only sporadically
(unless the time step size is very large), i.e.~the average number of
lifts is below one.
For both realisation variants, the number of lifts scales with the time step 
size, i.e.~the bigger the time step size the more lifts per particle, while 
\pit\ yields significantly more lifts than \pidt.  
If the time step sizes underrun a certain threshold,
\pidt\ comes along completely without lifts while \pit\ has a very low lift
count.

%

If we fix in turn the time step size and make
\texttt{ppc} a free parameter (see \ref{section:additional-experiments}), we
observe that any \texttt{ppc} yields, on average, time-independent behaviour for homogeneous
setups.
The bigger \texttt{ppc} the fewer lifts, and the number of lifts per
particle is always bigger for \pit\ than for \pidt\ where the number of lifts
is negligible for reasonably big \texttt{ppc} or coarse grids, respectively.
Larger \texttt{ppc} make each leaf represent larger geometric domains and thus
explain the \texttt{ppc} dependence.
If the particle distribution is inhomogeneous, the corresponding grid at setup
time is strongly adaptive if \texttt{ppc} is reasonably small.
Initial time steps of \pidt\ then face almost no lifts; similar as in the
homogeneous setup.
However, the number of lifts increases as the grid becomes more regular
(and coarser), and the particles distribute more homogeneously, until the lift
count drops again to its regular characteristics.
With increasing \texttt{ppc} the curve flattens out and shifts
to the right.

Due to the invariant velocity profile besides the reflecting boundary
conditions, particles are expelled from the subarea by the large (electric)
field.
Therefore, the global grid determined by \texttt{ppc} becomes coarser and more
regular (Figure \ref{fig:eyecatcher}).
It smoothens out.
As a consequence, subtrees of a certain depth hold more and more particles with
velocities of the same magnitude.
In return, the average particle density within each cell decreases.
The further (fast) particles move away from the initially dense
area, the more often they hit adaptivity boundaries, i.e.~hanging nodes.
\pidt\ has to lift them there.
For \pit, these additional lifts make almost no difference compared to the lifts
required anyway.
Particles hit adaptivity boundaries often if \texttt{ppc} is small.
As the grid smoothens out, also the lift counts drop.

Referring to real-world runs (\ref{section:validation}), such a
behaviour mirrors a setup where initially all particles are held within the subarea where the solution to (\ref{equation:pic:pde}) depending on the particle density is non-smooth and yields large particle accelerations.
The push out from this area then results from a mixed neutralising positively charged
background and negatively charged particles.

\subsection{Memory throughput}

\begin{table}[!ht]
  \caption{
    Stream particle throughput, i.e.~particles per second, for different
    particle counts $p$ using 1,2,4,12,16 cores or 16 cores plus
    hyperthreading (32) on SuperMUC.
    The upper section gives results for $d=2$, the lower for $d=3$. Best case throughputs per row are bold.
  }
  \label{table:memory-throughput-supermuc}
  \begin{center}
  { 
    \tiny
    \begin{tabular}{c|cccccc|c}
 $p$ & 1 & 2 &      4 &      8 &      12 &      16 &      32 \\
 \hline 
 $10^4$ & $\mathbf{3.03\cdot 10^7}$ & $2.98\cdot 10^7$ & $4.14\cdot 10^6$ & $2.46\cdot 10^7$ & $1.46\cdot 10^7$ & $2.21\cdot 10^7$ & $1.94\cdot 10^7$   \\ 
 $10^5$ & $6.66\cdot 10^7$ & $1.03\cdot 10^8$ & $\mathbf{1.35\cdot 10^8}$ & $1.06\cdot 10^8$ & $8.96\cdot 10^7$ & $1.02\cdot 10^8$ & $6.15\cdot 10^7$   \\ 
 $10^6$ & $7.03\cdot 10^7$ & $1.16\cdot 10^8$ & $1.58\cdot 10^8$ & $1.56\cdot 10^8$ & $\mathbf{2.25\cdot 10^8}$ & $2.41\cdot 10^8$ & $5.51\cdot 10^7$   \\ 
 $10^7$ & $7.14\cdot 10^7$ & $1.25\cdot 10^8$ & $1.63\cdot 10^8$ & $2.44\cdot 10^8$ & $\mathbf{2.63\cdot 10^8}$ & $2.42\cdot 10^8$ & $7.99\cdot 10^7$   \\ 
 $10^8$ & $7.17\cdot 10^7$ & $1.26\cdot 10^8$ & $1.68\cdot 10^8$ & $2.28\cdot 10^8$ & $2.65\cdot 10^8$ & $\mathbf{2.84\cdot 10^8}$ & $2.59\cdot 10^8$   \\ 
 \hline 
 $10^4$ & $2.48\cdot 10^7$ & $3.22\cdot 10^7$ & $\mathbf{3.25\cdot 10^7}$ & $2.39\cdot 10^7$ & $2.42\cdot 10^7$ & $2.27\cdot 10^7$ & $1.44\cdot 10^7$   \\ 
 $10^5$ & $5.11\cdot 10^7$ & $8.43\cdot 10^7$ & $1.14\cdot 10^8$ & $\mathbf{1.23\cdot 10^8}$ & $7.67\cdot 10^7$ & $2.44\cdot 10^7$ & $2.43\cdot 10^7$   \\ 
 $10^6$ & $4.93\cdot 10^7$ & $9.15\cdot 10^7$ & $1.18\cdot 10^8$ & $1.73\cdot 10^8$ & $1.81\cdot 10^8$ & $\mathbf{1.94\cdot 10^8}$ & $4.56\cdot 10^7$   \\ 
 $10^7$ & $5.40\cdot 10^7$ & $9.30\cdot 10^7$ & $1.24\cdot 10^8$ & $\mathbf{1.82\cdot 10^8}$ & $1.77\cdot 10^8$ & $1.63\cdot 10^8$ & $9.24\cdot 10^7$  \\ 
 $10^8$ & $5.41\cdot 10^7$ & $9.33\cdot 10^7$ & $1.25\cdot 10^8$ & $1.65\cdot 10^8$ & $1.84\cdot 10^8$ & $\mathbf{1.92\cdot 10^8}$ & $1.65\cdot 10^8$   \\
\end{tabular}
  }
  \end{center}
\end{table}

To be able to put runtime measurements into context, we first run a 
benchmark holding an array of particles that is iterated once per time step without any grid.
Each particle position is updated according to an explicit Euler integration
step, it is reflected at the domain boundaries, and the resulting position is
written back to the corresponding array position. 
Besides the position update, no computation or assignment to grid entities is
done and no data is reordered.
The implementation is the same source code fragment we use in
the spacetree algorithms.
As we aim to compare it with our parallel implementation running multiple
MPI ranks per node, we parallelise this embarrassingly parallel benchmark with a
plain parallel-for along the lines of the Stream benchmark
\cite{McCalpin:95:Stream}.
This yields upper bounds, as our spacetree implementation relies on MPI only and
thus has message passing overhead.

We analyse the throughput behaviour at hands of SuperMUC (Table
\ref{table:memory-throughput-supermuc}).
Polaris exhibits qualitatively the same behaviour.
On both systems, a single core cannot exploit the memory subsystem alone.
The bigger the particle count the more cores can be used effectively and the higher the throughput.
However, the throughput does not scale linearly with the core count due to
bandwidth restrictions.
$3.00 \cdot 10^8$ for $d=2$ and $2.00 \cdot 10^8$ for 
$d=3$ are upper bounds for the throughput of our particle codes in
the absence of any solver, i.e.~of any `real' computation, on SuperMUC.
On Polaris, these best case thresholds have to be doubled due to the higher
clock rate and the Turbo Boost.
Besides for small particle numbers, less than
eight threads/ranks per node do not make the nodes run into bandwidth
saturation.
As a consequence, all parallel experiments deploy six MPI ranks per node from
hereon.
This heuristic choice valid for both SuperMUC and Polaris yields a reasonable
core usage while memory subsystem effects are not dominant.
It also anticipates that the presence of a grid to maintain increases the
average per-particle memory footprint.


\subsection{Single core results}

\begin{figure}[!ht]
\centering
\includegraphics[width=0.44\linewidth]{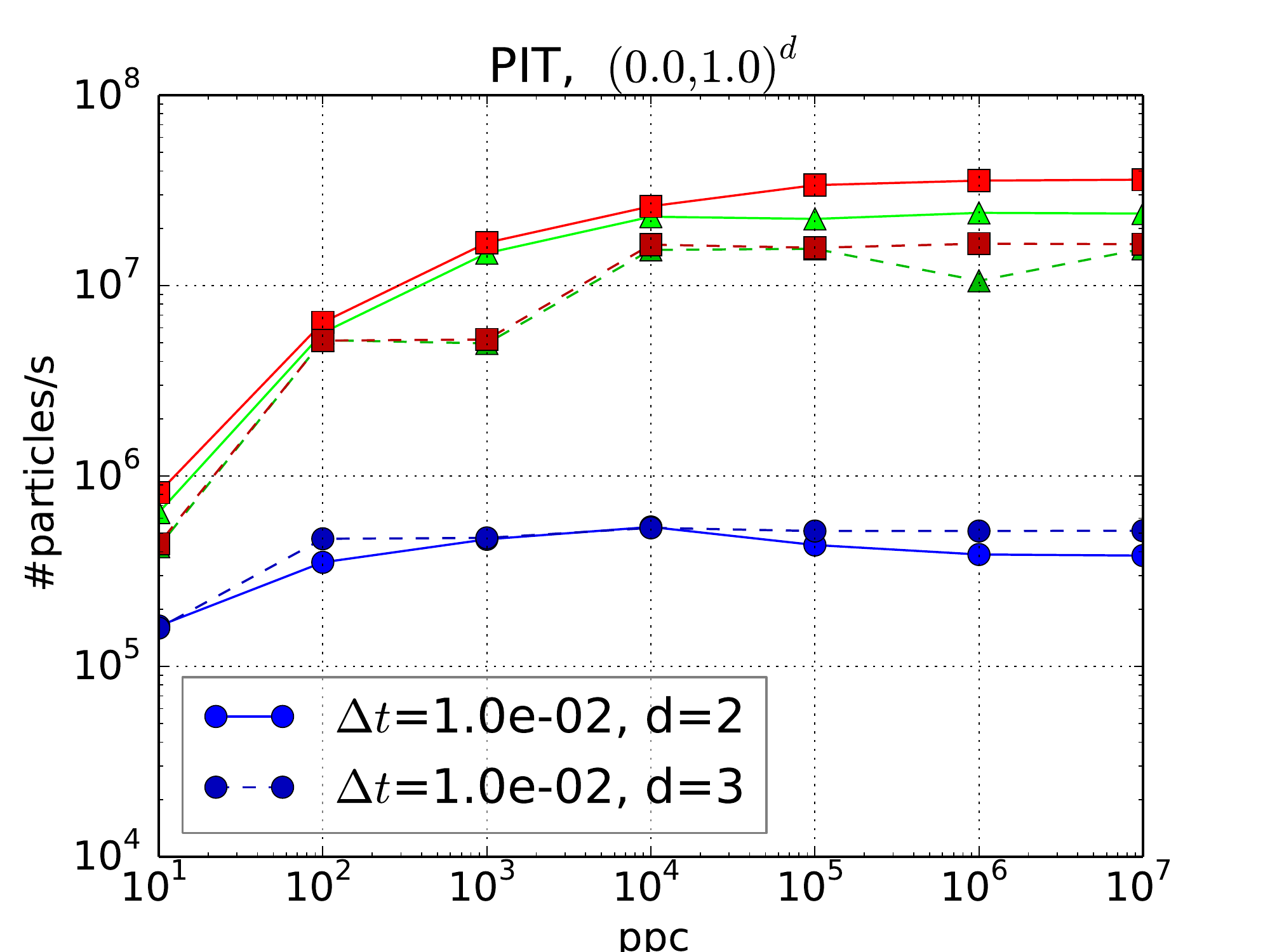}
\includegraphics[width=0.44\linewidth]{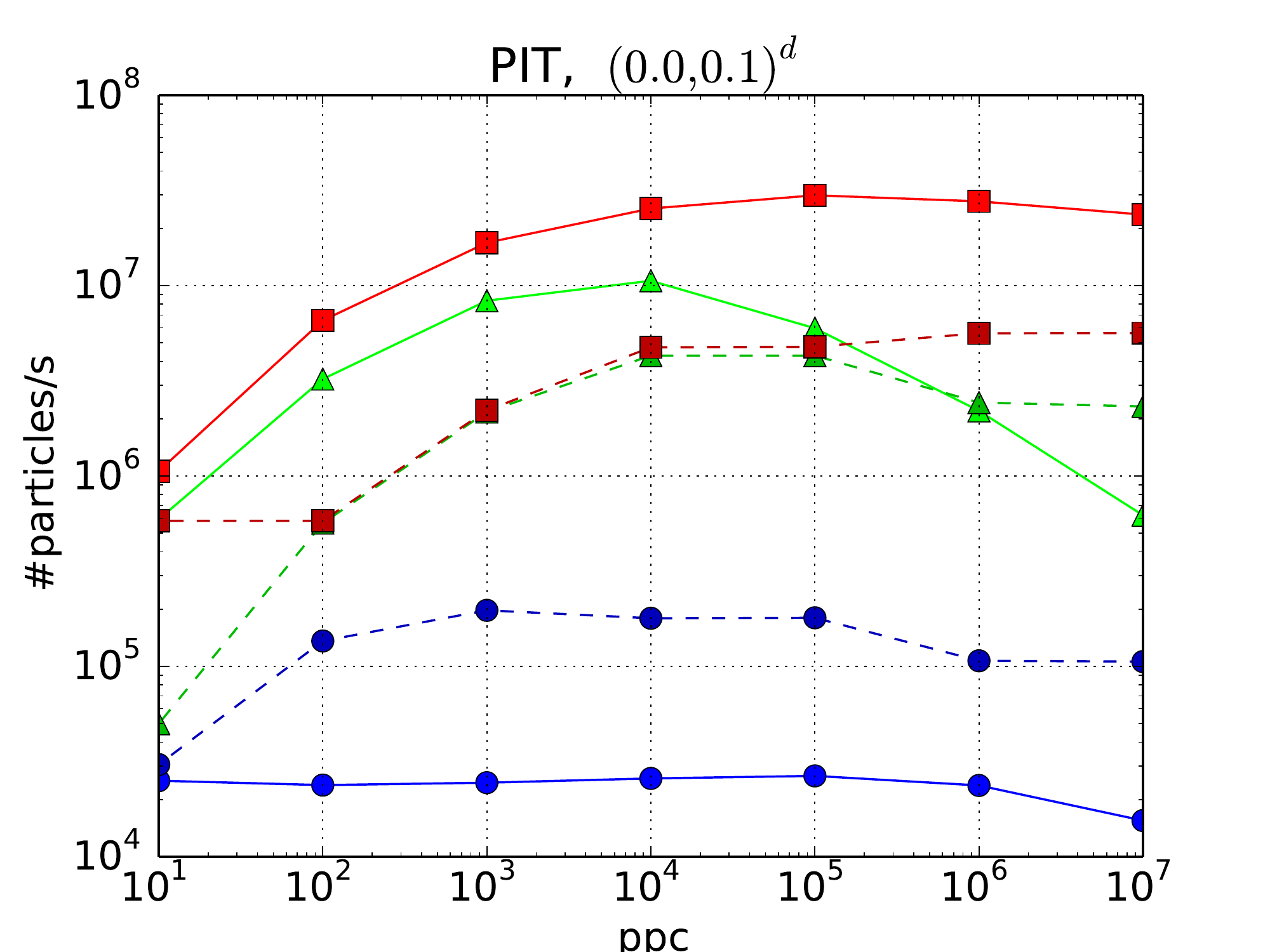}
\includegraphics[width=0.44\linewidth]{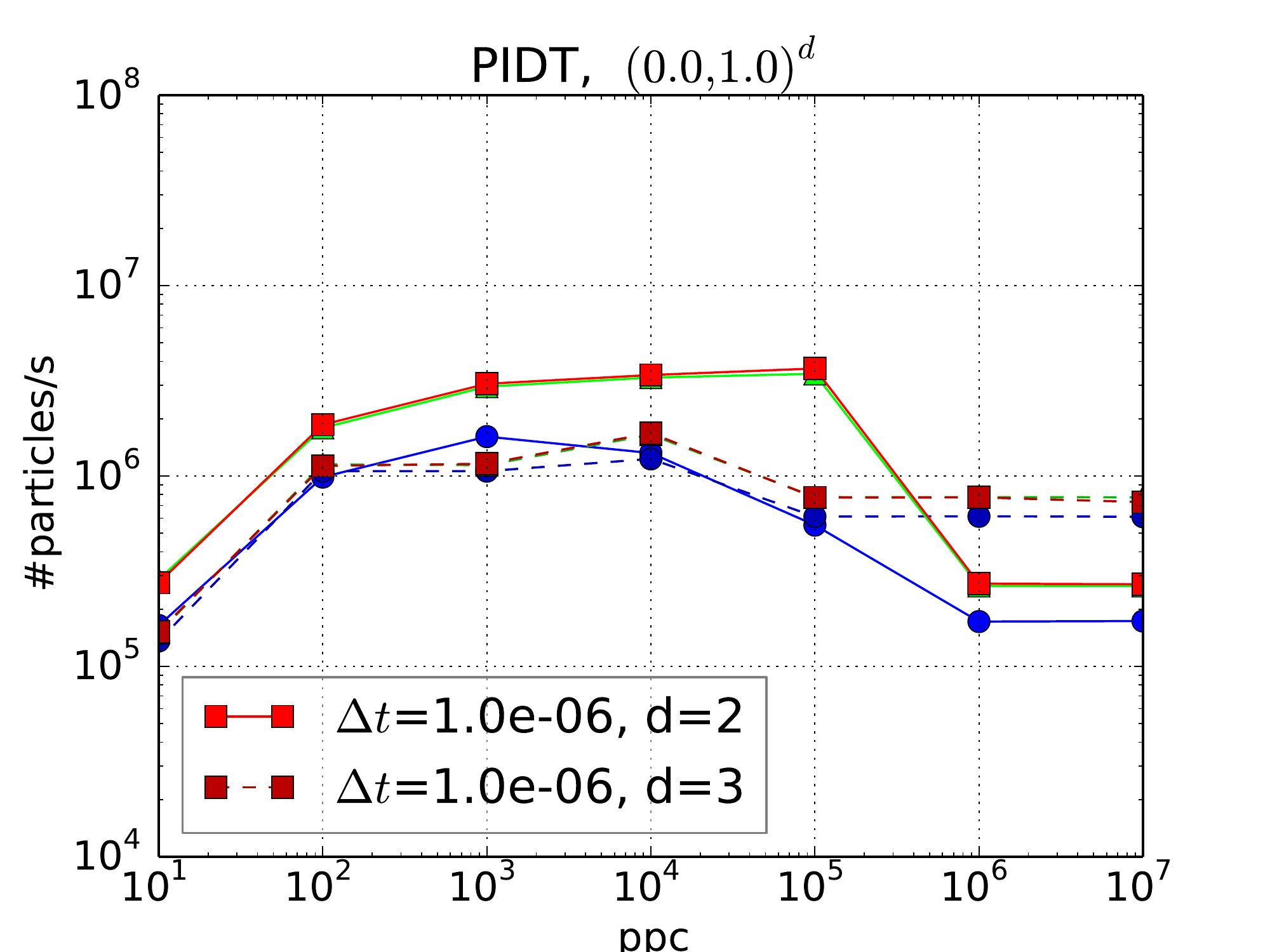}
\includegraphics[width=0.44\linewidth]{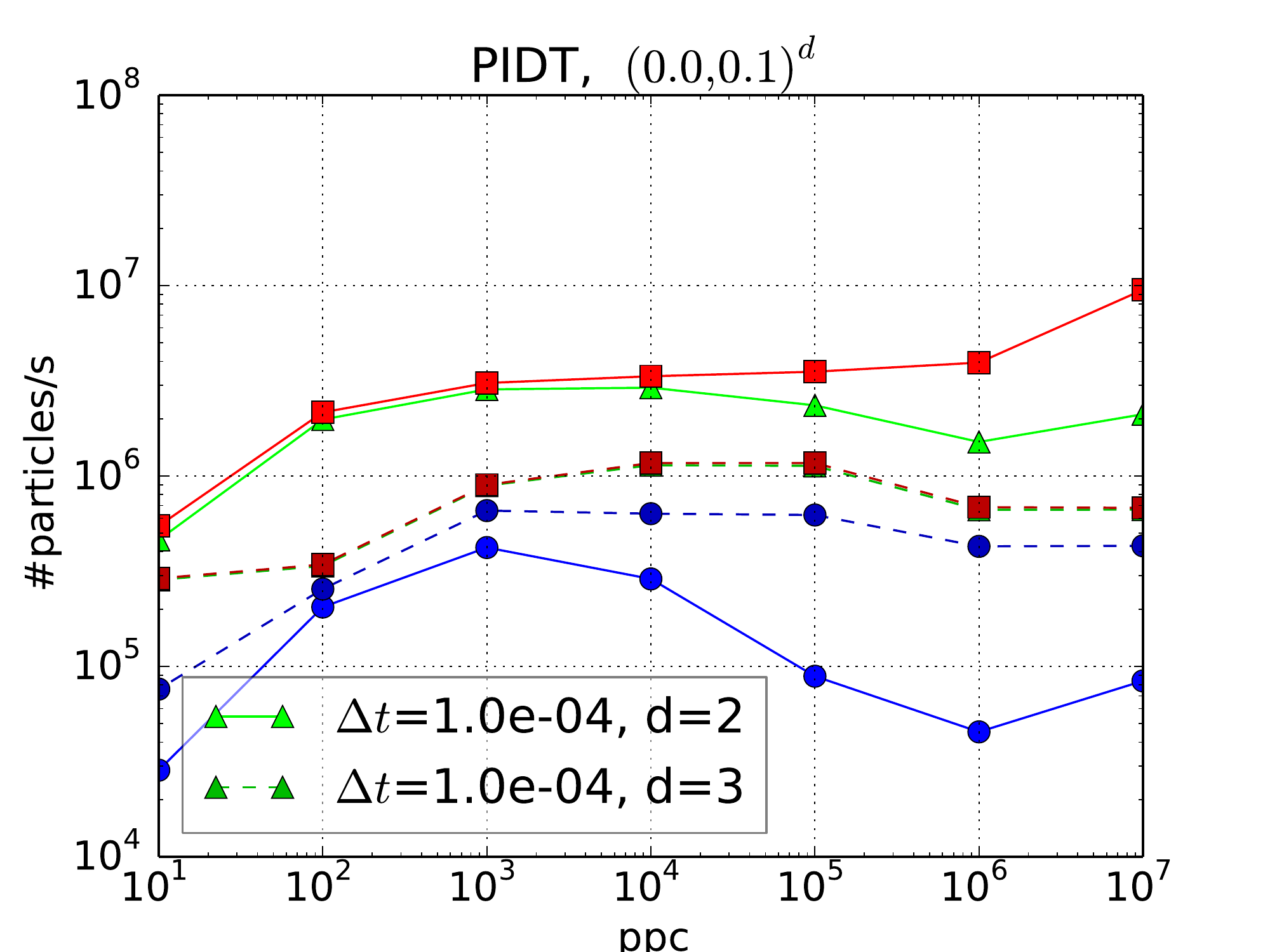}
\caption{
  Single core throughput with homogeneous (left) or inhomogeneous/breaking dam
  (right) start conditions for \pit\ (top) and \pidt\ (bottom).
  $d=2$ marked by solid lines and $d=3$ by dashed lines.  
  Figures compare impact of \texttt{ppc} choice on throughput for regular
  (homogeneous) and dynamically adaptive (breaking dam) grids.
  Results from SuperMUC.
  }  
\label{figure:single-core-throughput}
\end{figure}

We next measure the particle throughput for homogeneous and
inhomogeneous start scenarios with different $\Delta t$ on a single core with
$10^7$ particles for 50 time steps, while we still neglect arithmetics per particle (Figure \ref{figure:single-core-throughput}).
\pit's throughput monotonously increases with increasing reasonable \texttt{ppc}.
It decreases with increasing time step size. 
Any reduction of lifts due to a bigger \texttt{ppc} or smaller time step
sizes pays off for a homogeneous start setup.
For an inhomogeneous start setup, we observe a decreasing performance with
bigger time step sizes as the grid then changes faster.
This picture would change if we fixed the simulation time rather than the time
step count.
\pidt\ is typically outperformed by \pit\ due to the more complicated
algorithm despite in situations where particles move very fast in an adaptive
setting.
$\mbox{\texttt{ppc}}\approx1000$ here yields the best throughput.
Polaris and SuperMUC results do coincide if scaled with the clock rate,
i.e.~the duplication of throughput rates on Polaris cannot be observed again.

%
%

Hardware counter measurements with Likwid \cite{Treibig:10:Likwid} reveal
that cache misses for both approaches are negligible.
They resemble exactly the results reported in
\cite{Bader:13:SFCs,Weinzierl:2009:Diss,Weinzierl:11:Peano} and references
therein for other application areas.
Our algorithms' AMR code base Peano \cite{Software:Peano} relies on a
depth-first alike tree traversal \cite{Weinzierl:15:Peano}  that picks up
H\"older continuity properties of the Peano space-filling curve
\cite{Weinzierl:11:Peano}.
This implies advantageous spatial and temporal memory access characteristics.
However, we expect a reimplementation with another code base to
yield advantageous properties as well, as all \pic\ and \pidt\
ingredients follow a strict element-wise/local formulation and as the memory
subsystem in above measurements is underutilised.
A slight increase of cache misses thus does not neccessarily pollute runtime
results significantly, if proper prefetching is applied, while the basic
memory usage profile is advantageous since all operations are local---either
accessing neighbours or parents/children within the grid.

There is a sweet spot from which it pays off to use \pidt\ rather than
\pit.
This payoff point depends on time step size and \texttt{ppc}. 
For sufficiently big time steps and reasonable small \texttt{ppc}, \pidt\
outperforms \pit.
This is due to the reduction of lift operations, i.e.~due to \pit\ having more
particles tunneling.
\pit\ is around a factor of three slower than the pure particle throughput on a
single core if $\Delta t$ is very small.
\pidt\ is around a factor of $3^d$ slower due to the particle sorting overhead.
The remainder of the experiments runs all settings for $\Delta t=10^{-4}$ as
this is an interesting regime where \pit\ has not yet overtaken \pidt\ for the
majority of experiments.
Putting runtime evaluations into relation to the number of lifts always allows
us to predict properties of all particle handling algorithms.
The remainder of the experiments also restricts to homogeneous settings,
i.e.~the particles initially are distributed homogeneously among the computational
domain, as particle characteristics then remain invariant.
%
Putting runtime evaluations into relation to the number of characteristic
\texttt{ppc} allows us to predict properties of simulations where the
computational domain comprises different spatial regions with different particle
distributions.

\subsection{Parallel \pit}
\label{section:results:pit}

\begin{figure}[!ht]
\centering
  \includegraphics[width=0.49\linewidth]{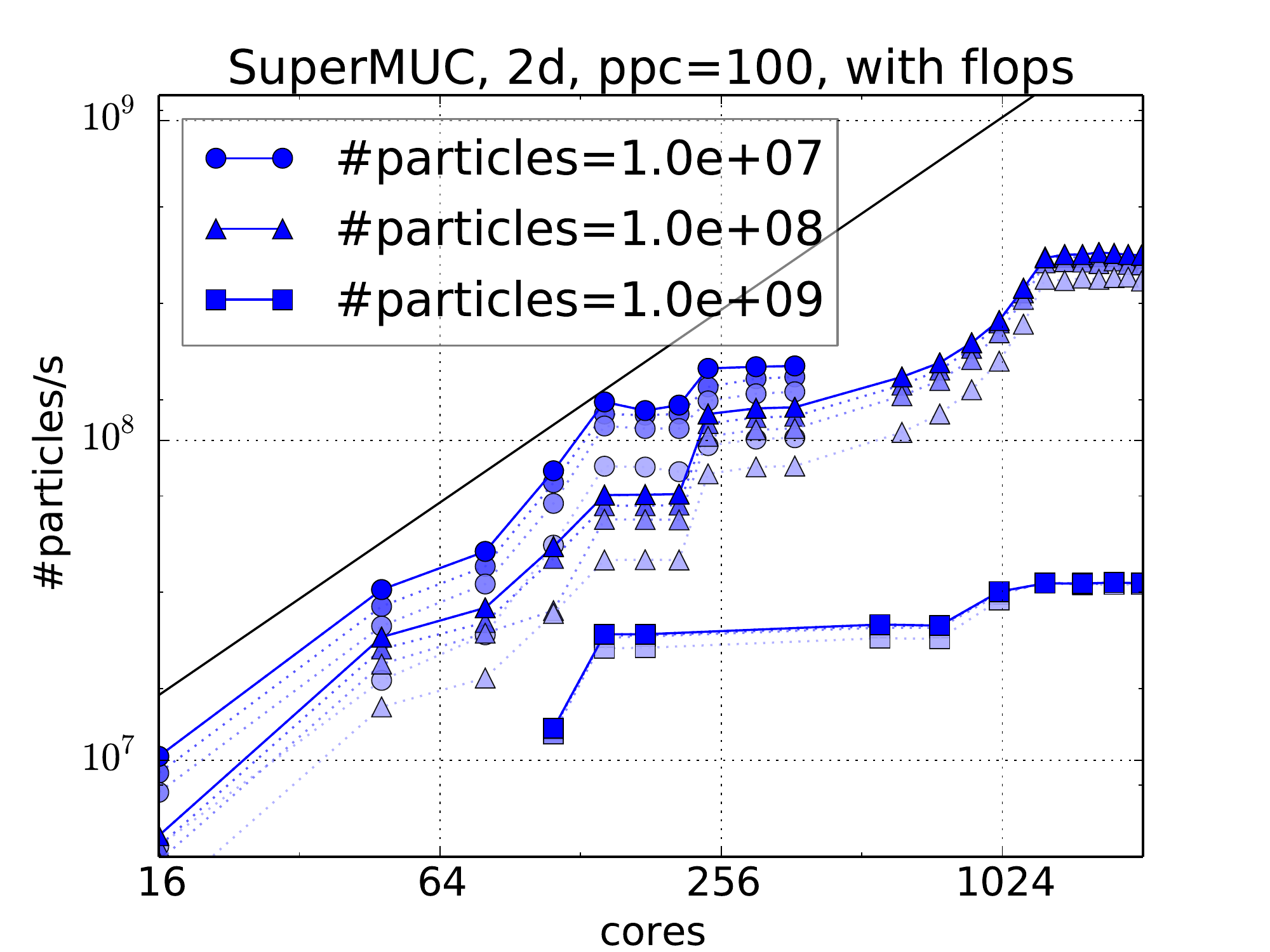}
  \includegraphics[width=0.49\linewidth]{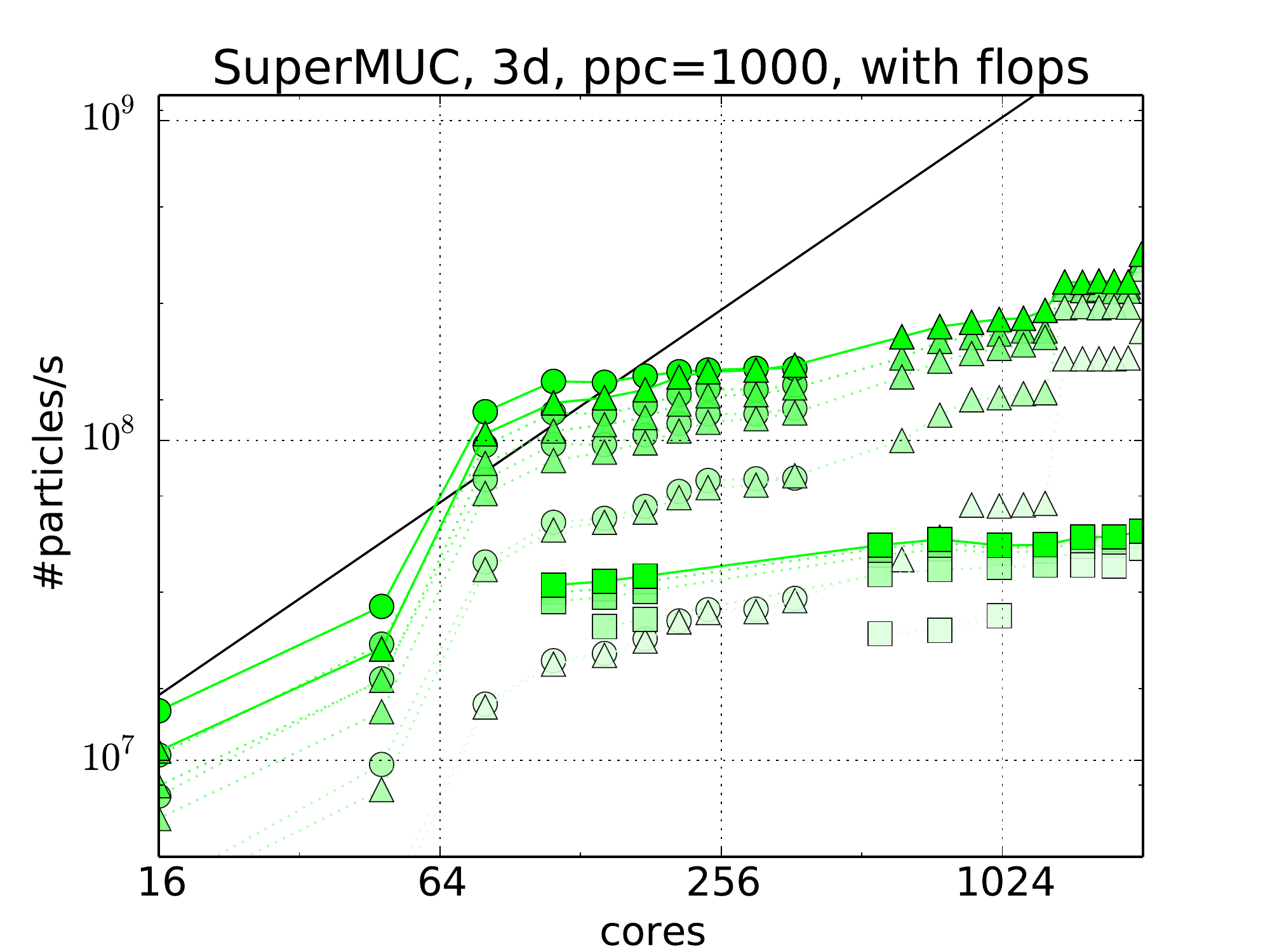}
  \\
  \includegraphics[width=0.49\linewidth]{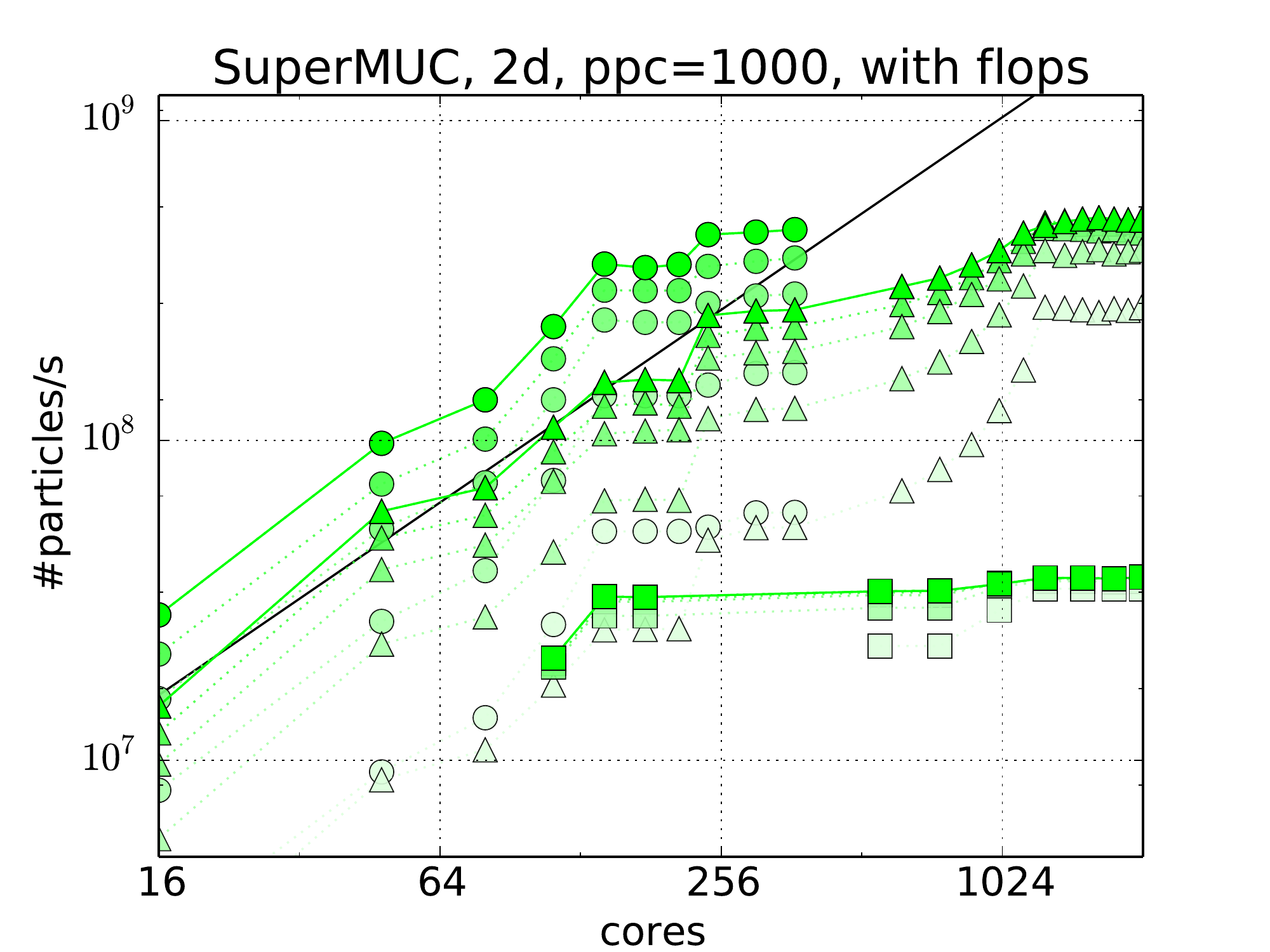}
  \includegraphics[width=0.49\linewidth]{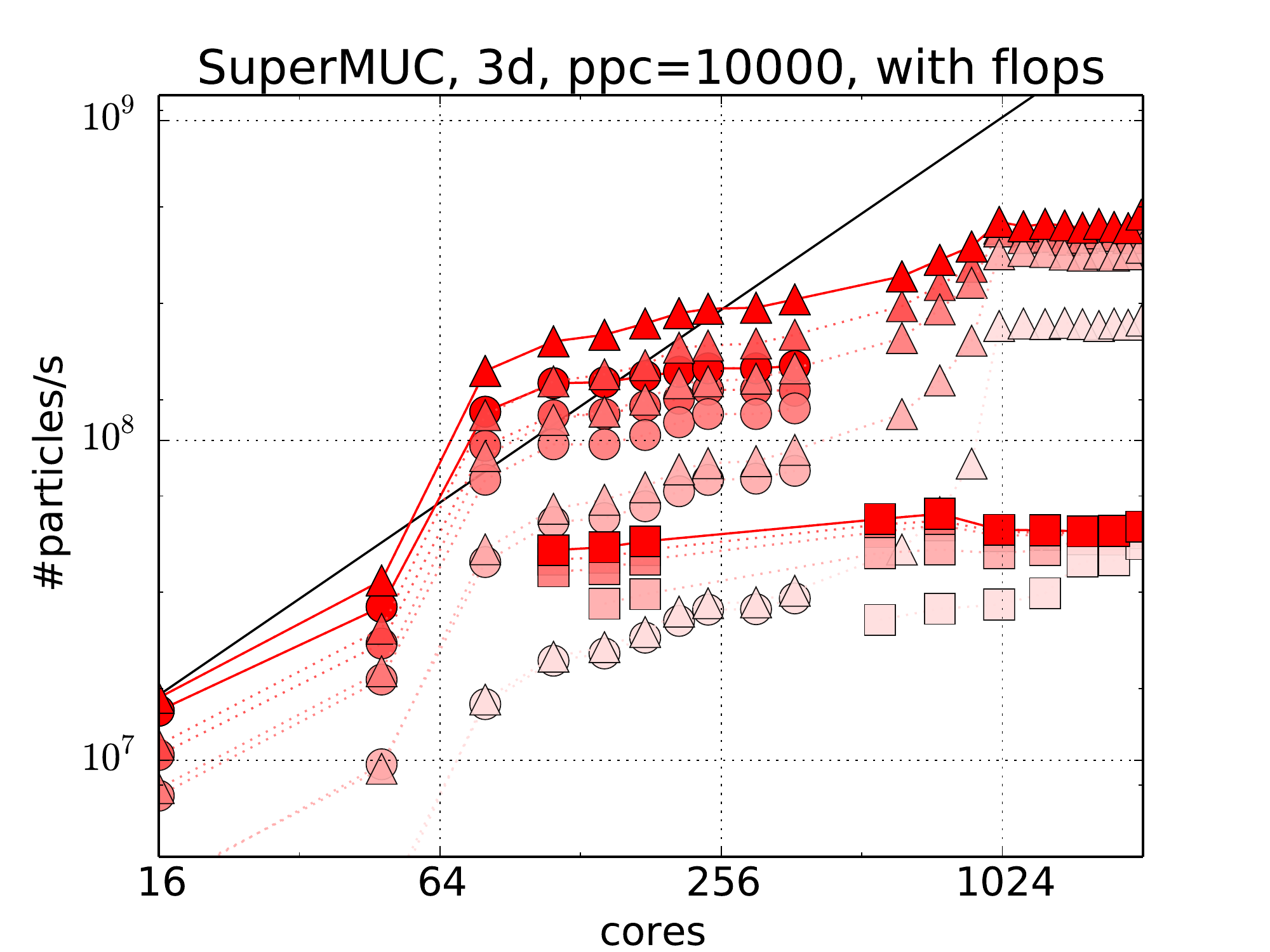}
  \caption{
    Parallel throughput of \pit.
    Each plot studies three different total particle numbers (different
    markers) for one fixed \texttt{ppc} and dimension $d$. 
    Each particle study is ran five times with 0, 128, 256, 1024 or
    4096 flops per particle.
    The brighter the points the higher the number of flops. 
    The saturated points with a solid line are measurements without any compute
    flops per particle, i.e.~solely the particle move and resort into the grid.
    Different colours here pick up the colouring from Figure
    \ref{figure:results:pit-polaris} and \pidt\ experiments.
  }
  \label{figure:results:pit-supermuc}
\end{figure}

We start our parallel studies with \pit\ and artificially perform an additional
0, 128, 256, 1024 or 4096 floating point operations per particle to emulate the
solving of a PDE and to be able to study the impact of this additional workload
on the particle performance. 
Each experiment hence was performed five times. 

Four properties become apparent from measurements on SuperMUC (Figure
\ref{figure:results:pit-supermuc}):
First, the more computation is done per particle the lower the throughput, but this difference almost vanishes for high core counts.
Second, the throughputs for $d=2$ and $d=3$ approach each other for high core
counts.
Third, \pit's $2d$ scaling is close to linear up to a given threshold.
For $d=3$, the results are rougher, but a similar threshold is hit for
higher core counts.
If the core count exceeds this threshold, the throughput stagnates.
Fourth, the more particles the lower the parallel efficiency.
Eventually, \pit\ does not scale for $10^9$ particles.
We observe an inverse weak scaling with respect to particle numbers.

The behaviour results purely from the particle handling, as the synchronisation
of the grid along subdomain boundaries is neglectable due to our realisation
from \cite{Schreiber:13:MetaData}.
Any master has to wait for all its workers before it may ascend, as the workers
might lift particles.
Any node triggers its send to its master after the traversal of its local domain
has finished.
This is a partial synchronisation of the ranks and a blocking data exchange
realising a reduction.
The latter is sensitive to latency and bandwidth restrictions.
As master-worker data exchange prelude or follow the actual local particle
handling, the computational work per particle does influence the runtime, as  
it cannot overlap with the communication. 
The fewer computation to be done, however, i.e.~the more cores participate in
the computation, the lower the impact of this work and the more severe 
communication bounds.
As the logical master-worker tree topology is broader for $d=3$ than for $d=2$,
the synchronisation runtime pressure at the workers is higher for $d=3$.
As latency is critical for the global reduction and its counterpart when we
start up the cores and drop particles, latency dominates the runtime if ``too''
many cores collaborate. 
The performance then stagnates.
As bandwidth is critical for the master-worker data exchange, more particles
slow down the reduction phase and its startup counterpart. 
\pit\ scales only in a very limited parameter regime.

%

 \begin{figure}[!ht]
 \centering
   \includegraphics[width=0.49\linewidth]{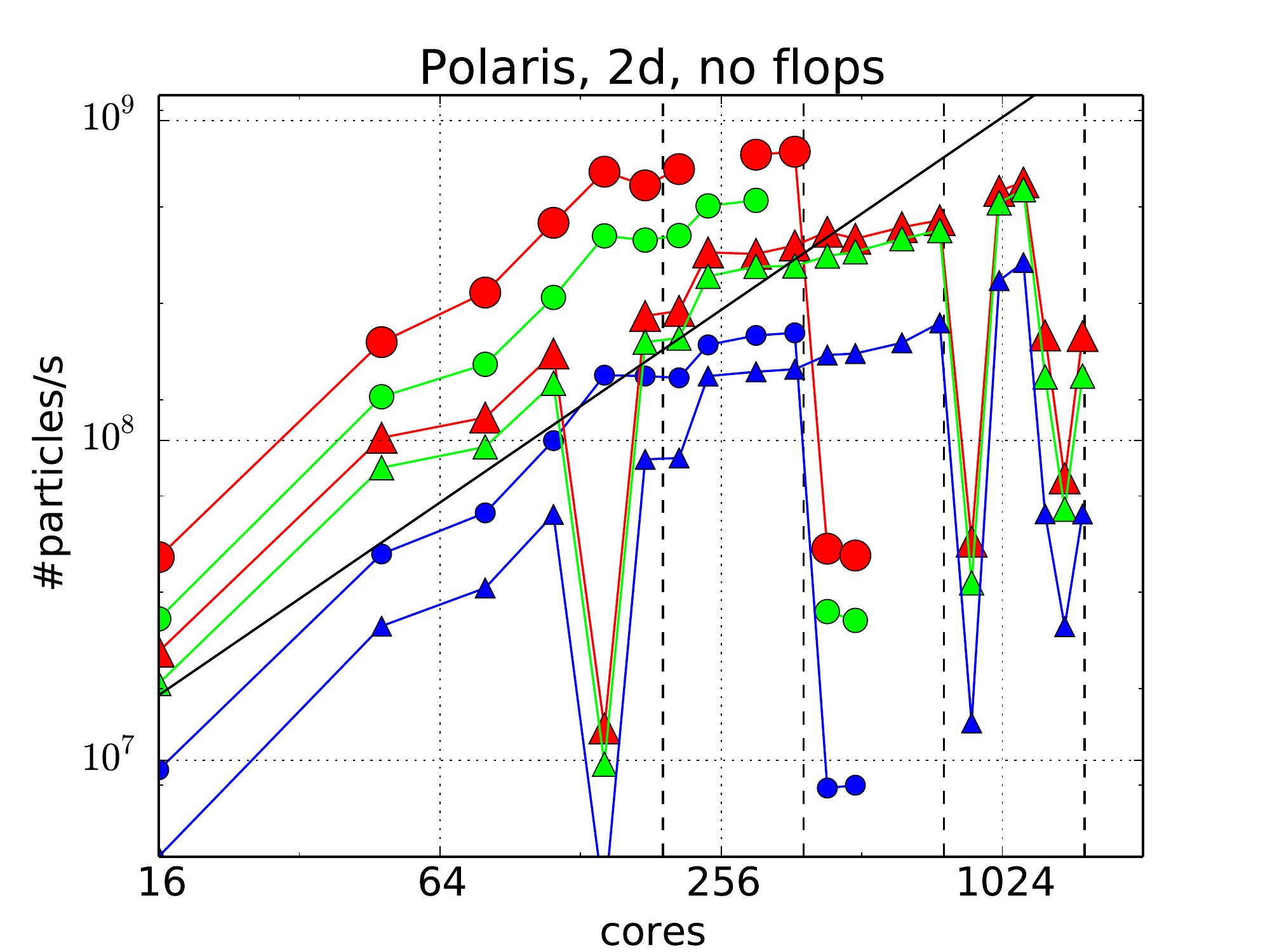}
   \includegraphics[width=0.49\linewidth]{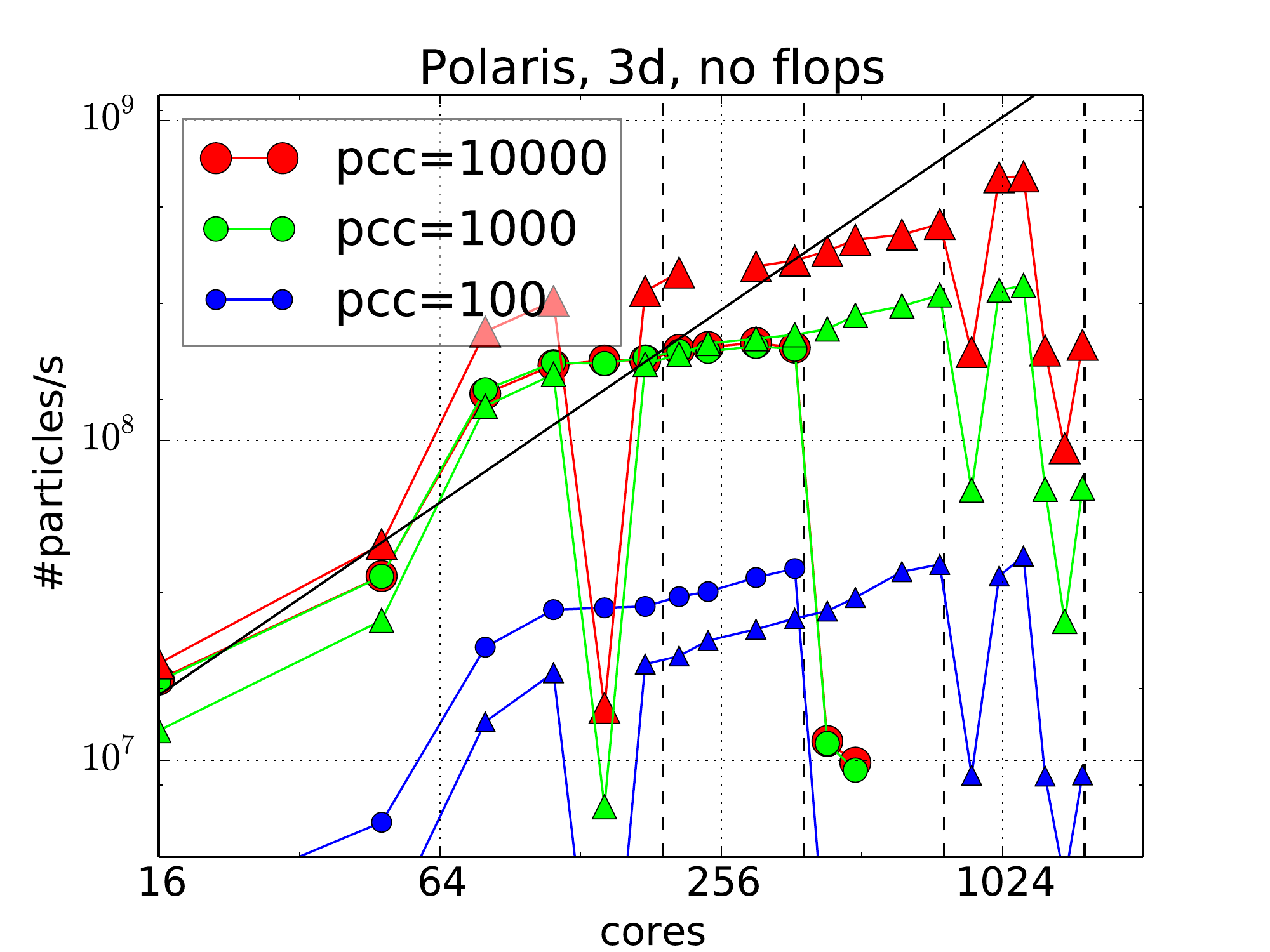}
   \caption{
     Results from
     Figure \ref{figure:results:pit-supermuc} without any additional flops rerun on Polaris. 
     The marker size/colour identifies the \texttt{ppc}, while circles plot
     $10^7$ particles and triangles $10^8$ particles.
     Vertical lines mark whenever the ports of one switch layer theoretically
     are exceeded.
     In practice, more switches are used for lower (small) core counts already.
   }
 \label{figure:results:pit-polaris}
 \end{figure}

%
%

Basically, its scaling is determined by the hardware characteristics plus the
tight synchronisation. 
This effect becomes evident for the same experiments on Polaris (Figure
\ref{figure:results:pit-polaris}).
Polaris exhibits slightly better results for moderate core counts.
However, the throughput drops whenever the core count requires the supercomputer
to employ another level of switches.
Due to a good placement that anticipates broken or overbooked nodes, the
resulting performance drops appear slightly prior to the theoretical switch
capacity.
Notably, the more restrictive network topology introduces an inverse weak
scaling effect for a high core count to particles ratio, where an increase of
cores introduces a degeneration of the throughput due to increased
communication/synchronisation pressure.

\subsection{Parallel \pidt}
\label{section:results:pidt}

%
%
%
%

\begin{figure}[!ht]
\centering
  \includegraphics[width=0.49\linewidth]{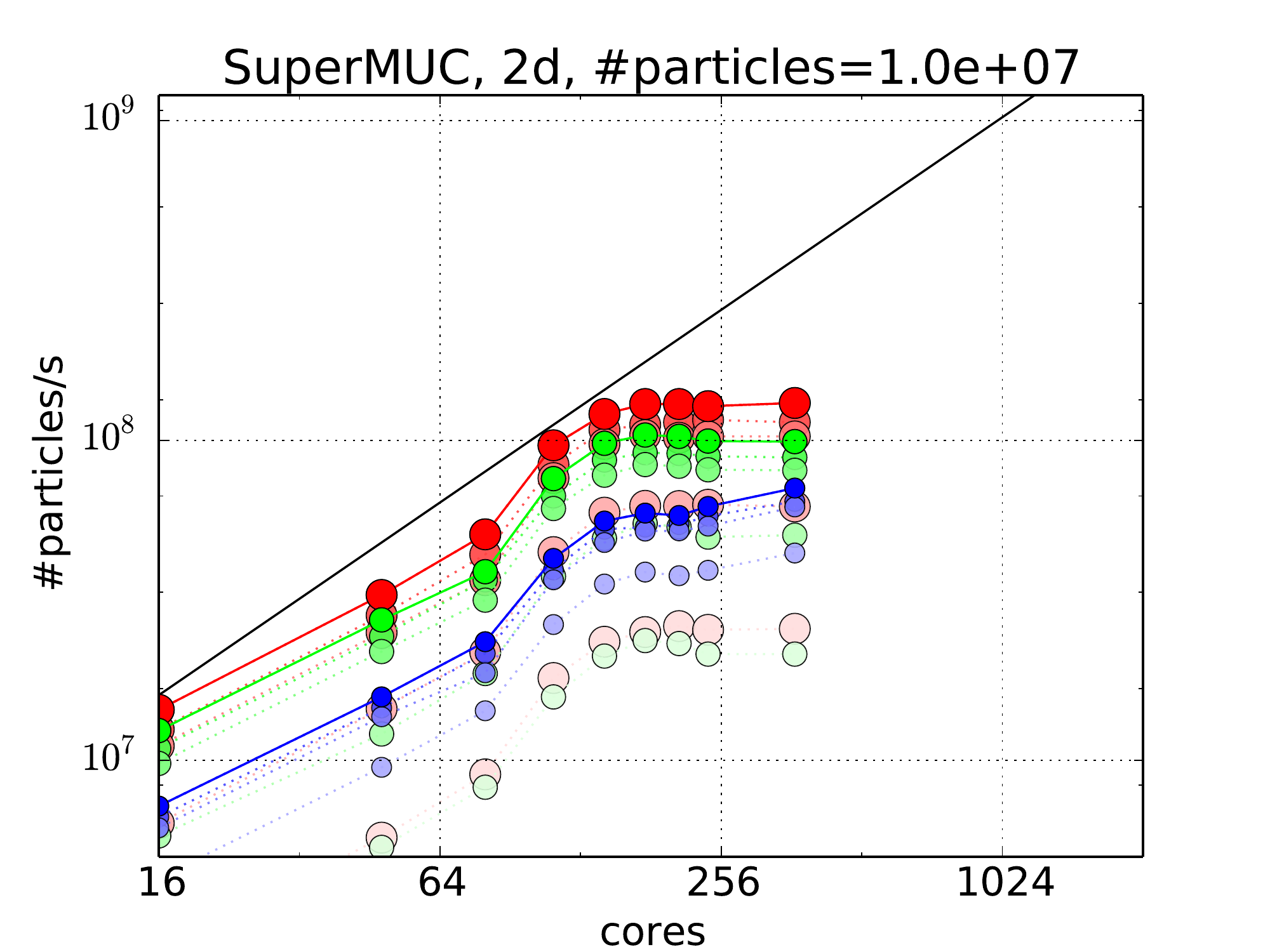}
  \includegraphics[width=0.49\linewidth]{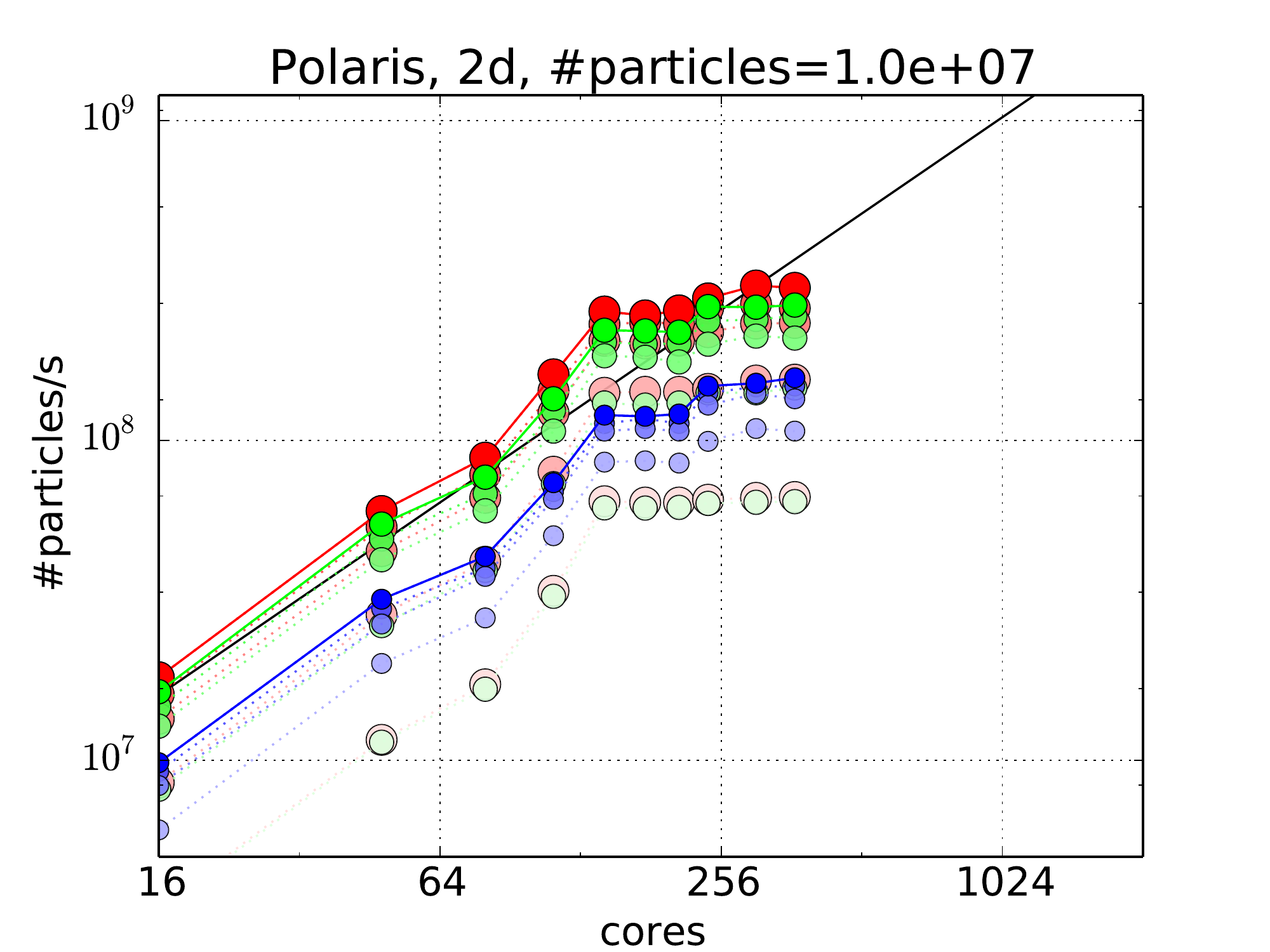}
  \\
  \includegraphics[width=0.49\linewidth]{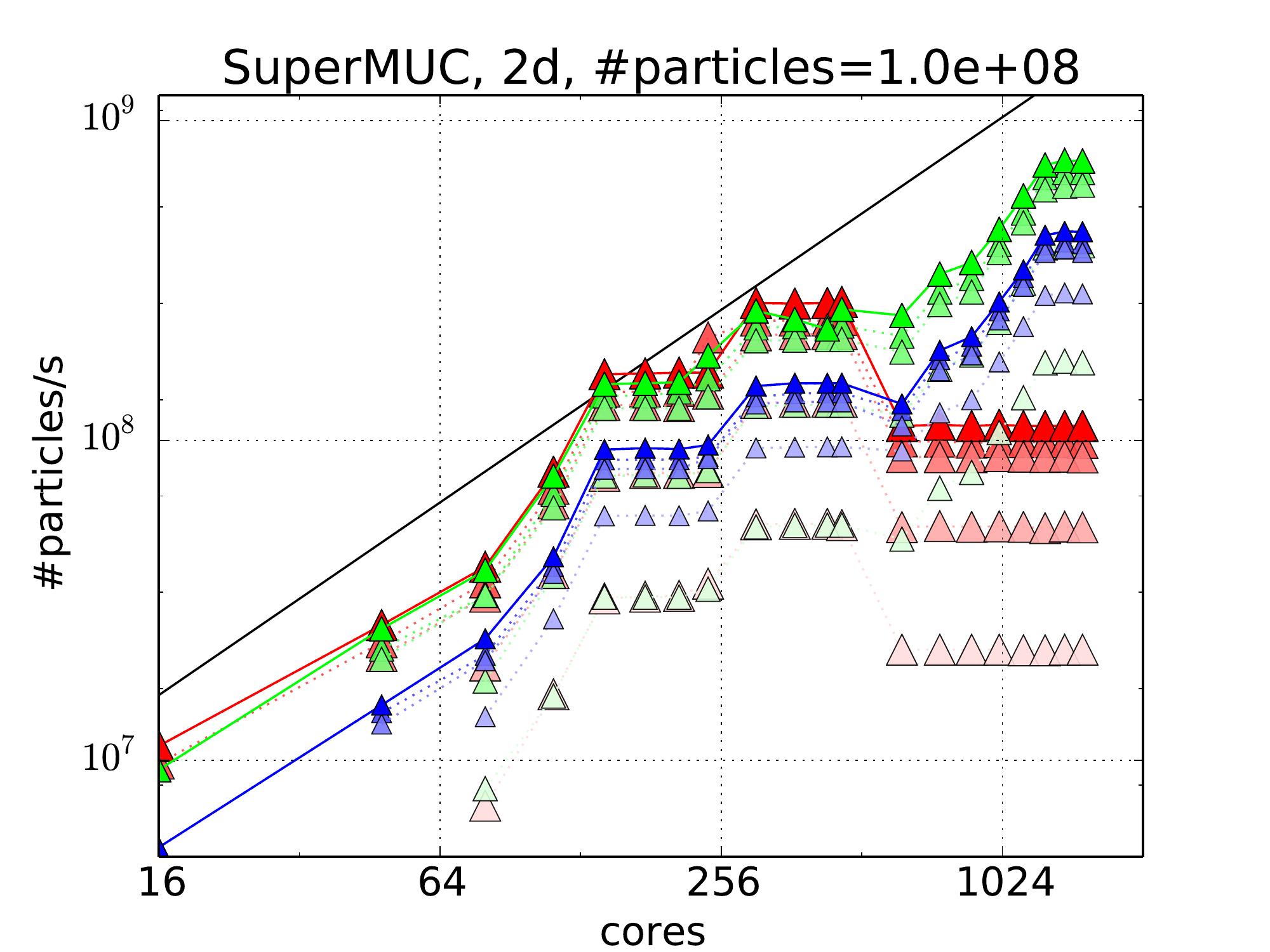}
  \includegraphics[width=0.49\linewidth]{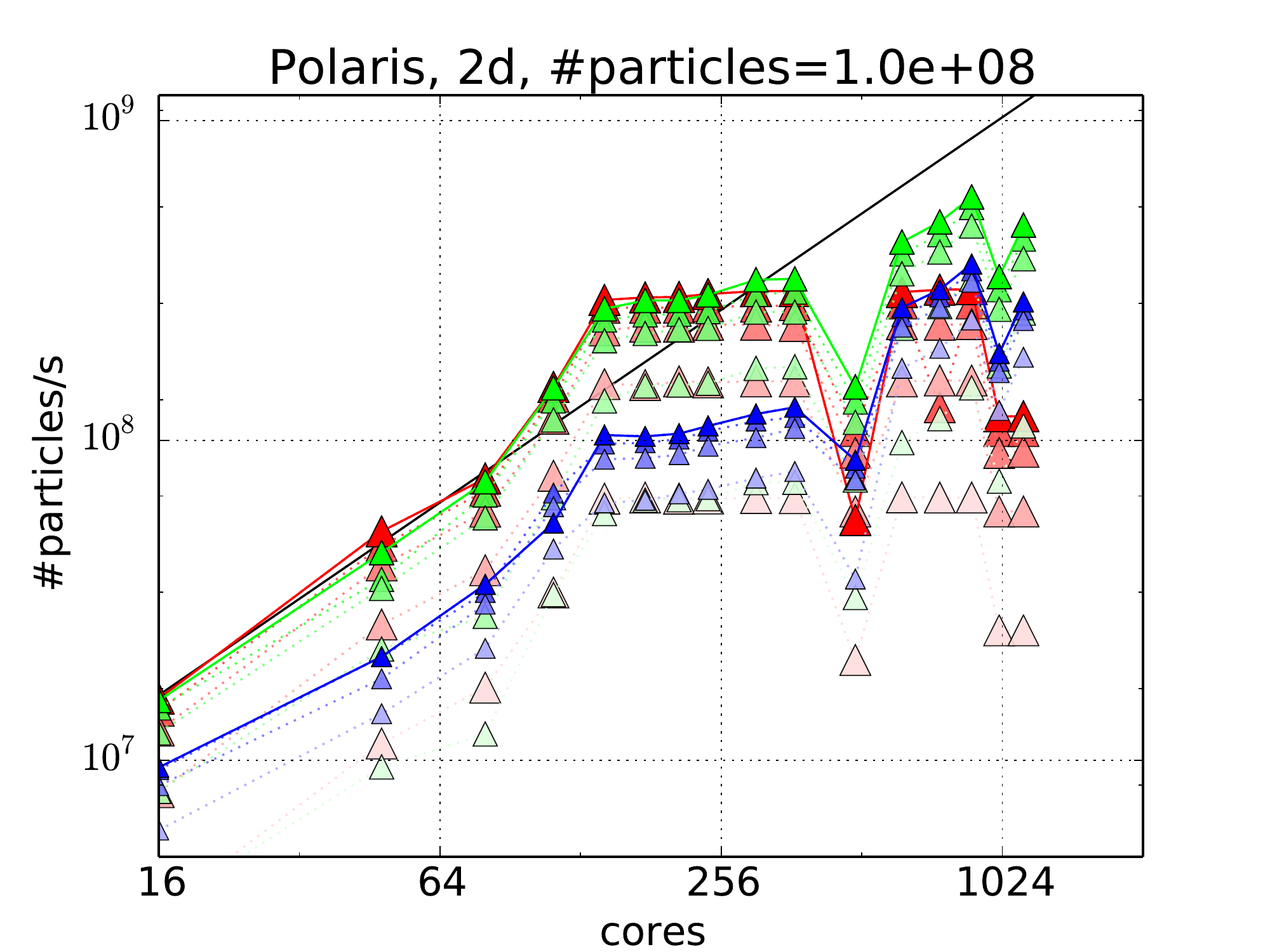}
  \\
  \includegraphics[width=0.49\linewidth]{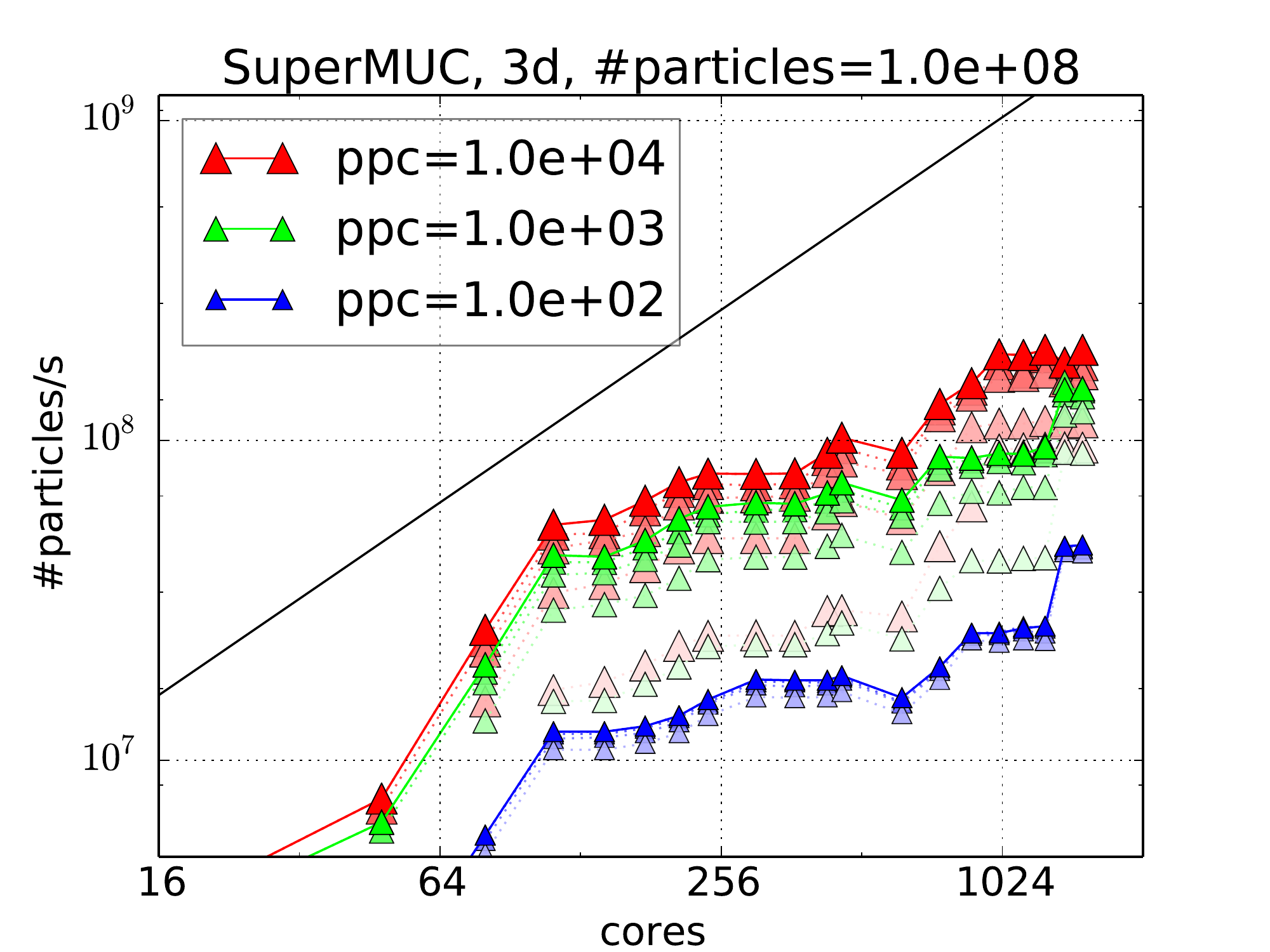}
  \includegraphics[width=0.49\linewidth]{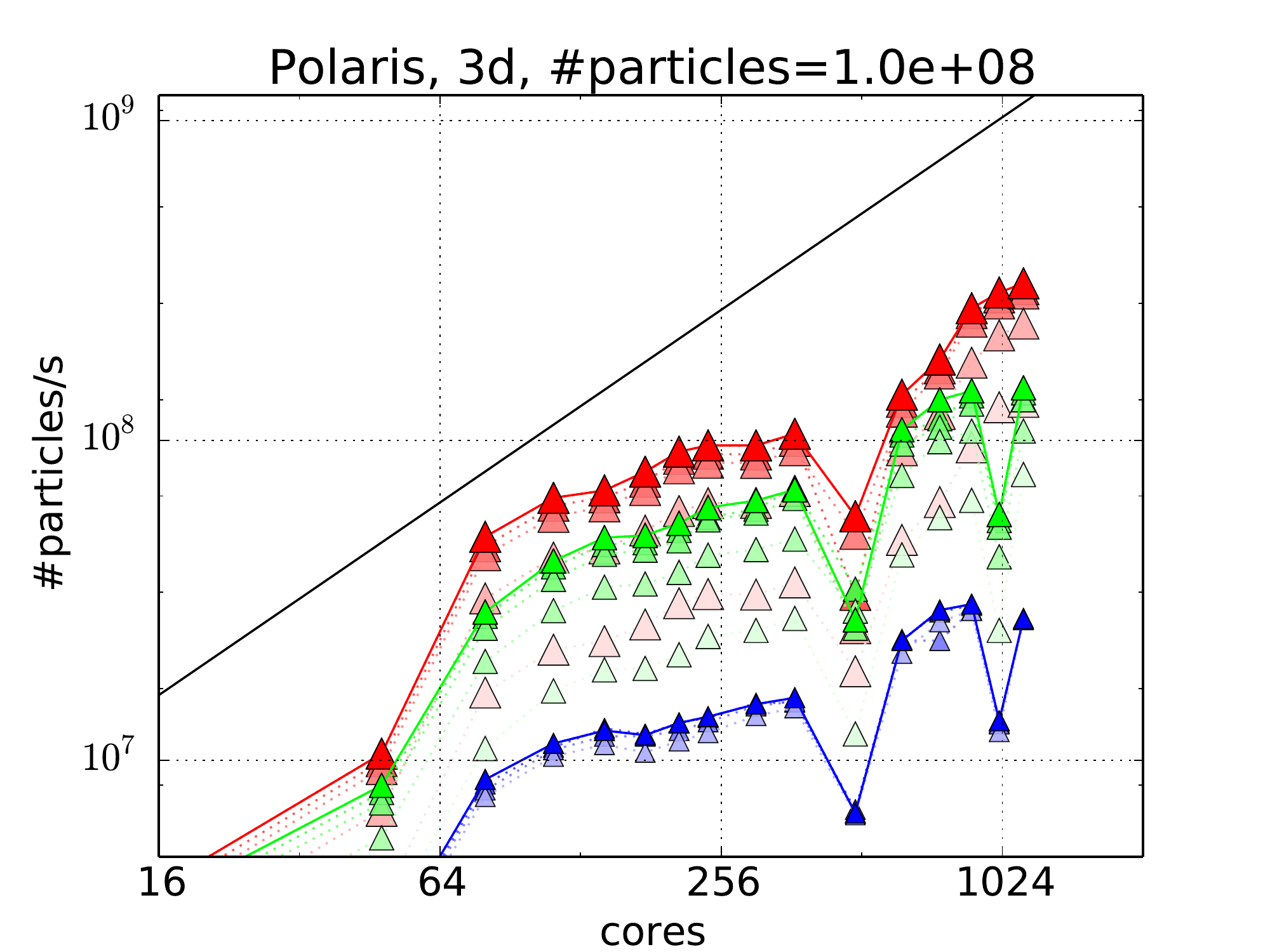}
  \\
  \caption{
    Parallel throughput of \pidt\ for SuperMUC (left) and Polaris (right).
    The darker the dots the fewer flops are added to each particle. 
    Solid lines are worst case setups with no computation per particle
    besides position updates.
    The bigger the marker the bigger \texttt{ppc} $\in \{10^2,10^3,10^4\}$
    (blue,green,red).
    All marker types pick up conventions of the other plots.
  }
\label{figure:results:pidt}
\end{figure}

We next rerun all experiments for \pidt\ and compare the outcomes to the \pit\
results.
Three differences become apparent (Figure \ref{figure:results:pidt}):
First, the impact of the operations per particle diminishes. 
Yet, lower arithmetic intensity still means higher throughput.
Second, \pidt\ inherits its lower throughput compared to \pit\ for low core
counts.
It still is slower.
Third however, \pidt\ scales better than \pit\ if the number of particles is
reasonably big and \texttt{ppc} is small.
There still is a stagnation threshold, but this threshold is higher than for
\pit, i.e.~\pidt\ overtakes its sibling algorithm for a decent core count.

The improvement of \pidt\ compared to \pit\ stems from the fact that \pidt\
exchanges particles both via master-worker relations and along subdomain boundaries.
The exchange along these boundaries can be realised asynchronously.
This allows \pidt\ to hide computations behind non-blocking MPI calls.
Such a hiding is the more effective the more computational workload per
particle.
However, it also depends on the dimensionality as the domain boundary is a
$d-1$-dimensional submanifold.
Still, \pidt\ restricts data along the spacetree each iteration and thus is
vulnerable to latency effects.
However, the actual number of lifts along the master-worker hierarchy is
significantly smaller than for \pit\ and thus attenuates the impact of bandwidth
constraints.
Big \texttt{ppc} make the underlying spacetree more shallow and thus counteract
to this effect while small \texttt{ppc} lead to deep spacetrees where boundary
data exchange and parallel domain handling gain weight in the total runtime
profile.

\subsection{\rapidt}

\begin{figure}[!ht]
\centering
  \includegraphics[width=0.49\linewidth]{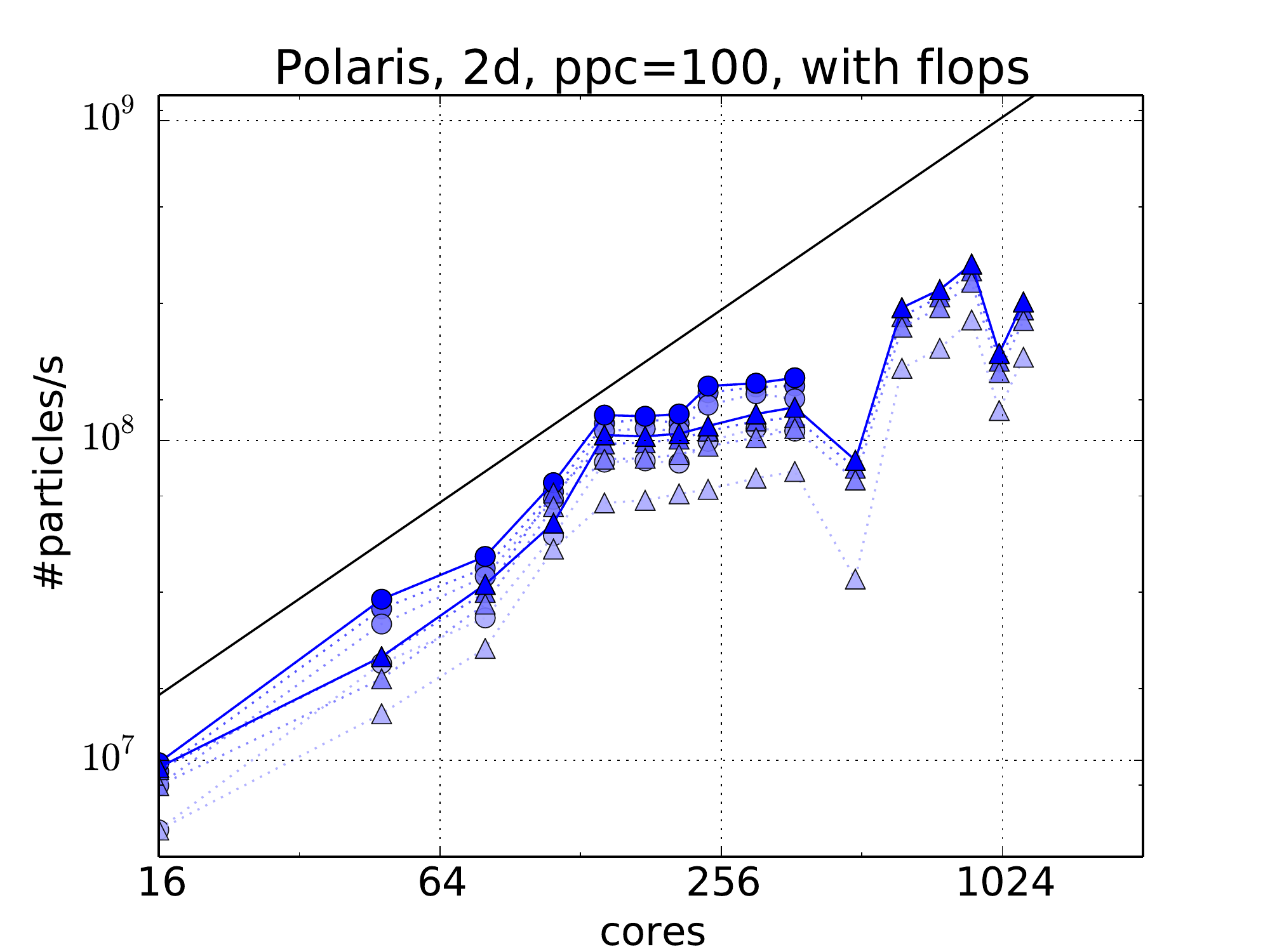}
  \includegraphics[width=0.49\linewidth]{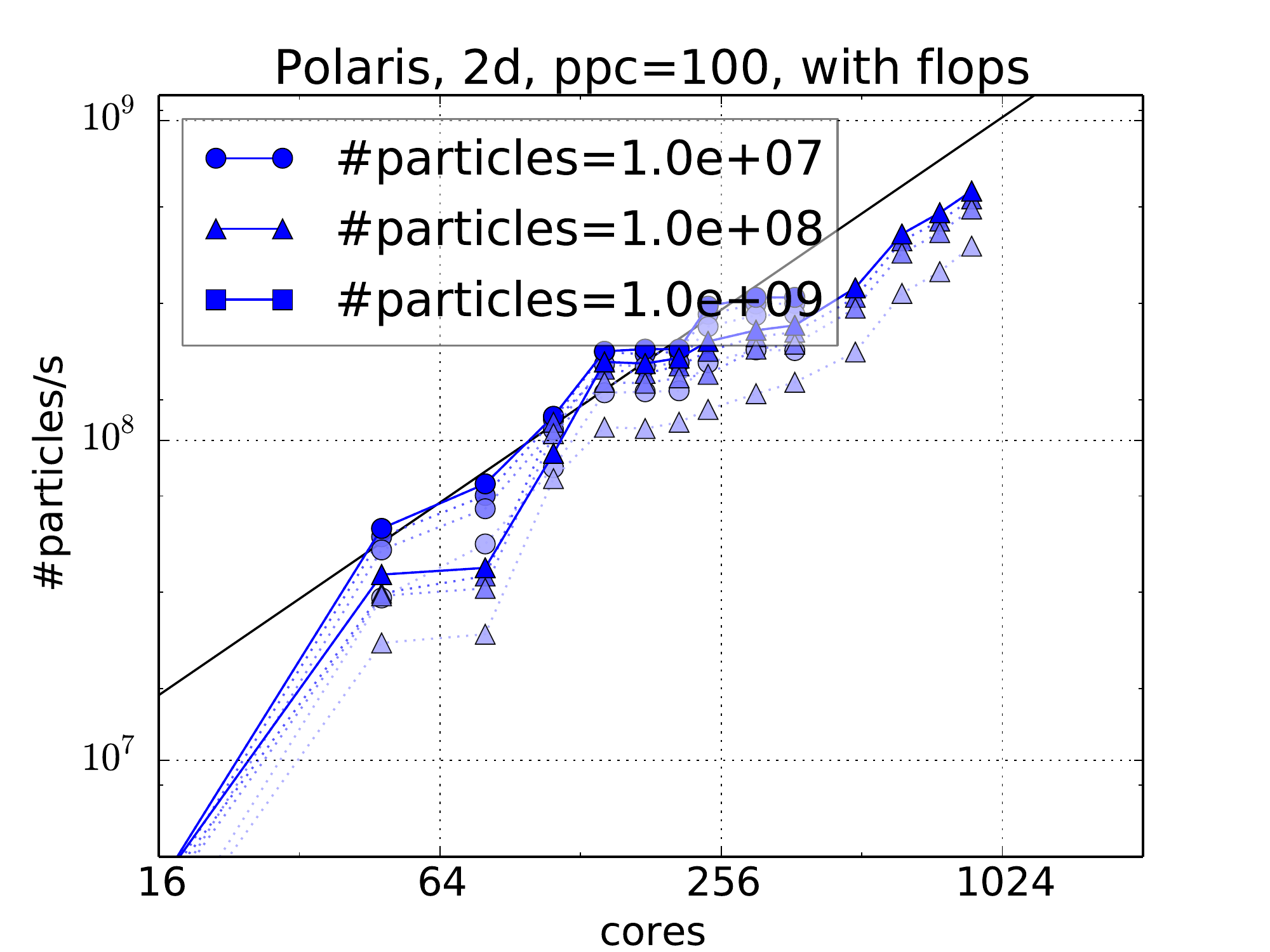}
  \\
  \includegraphics[width=0.49\linewidth]{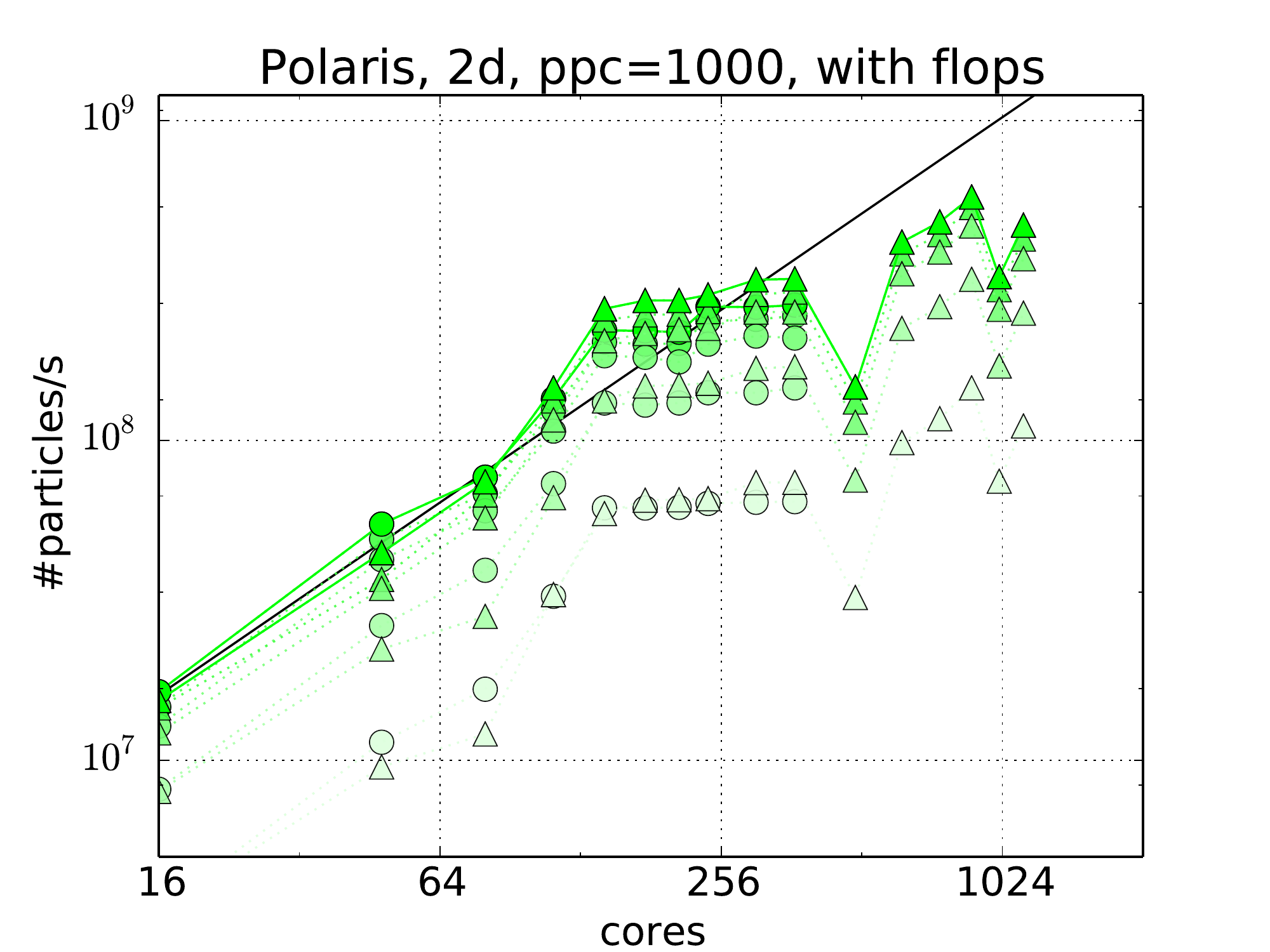}
  \includegraphics[width=0.49\linewidth]{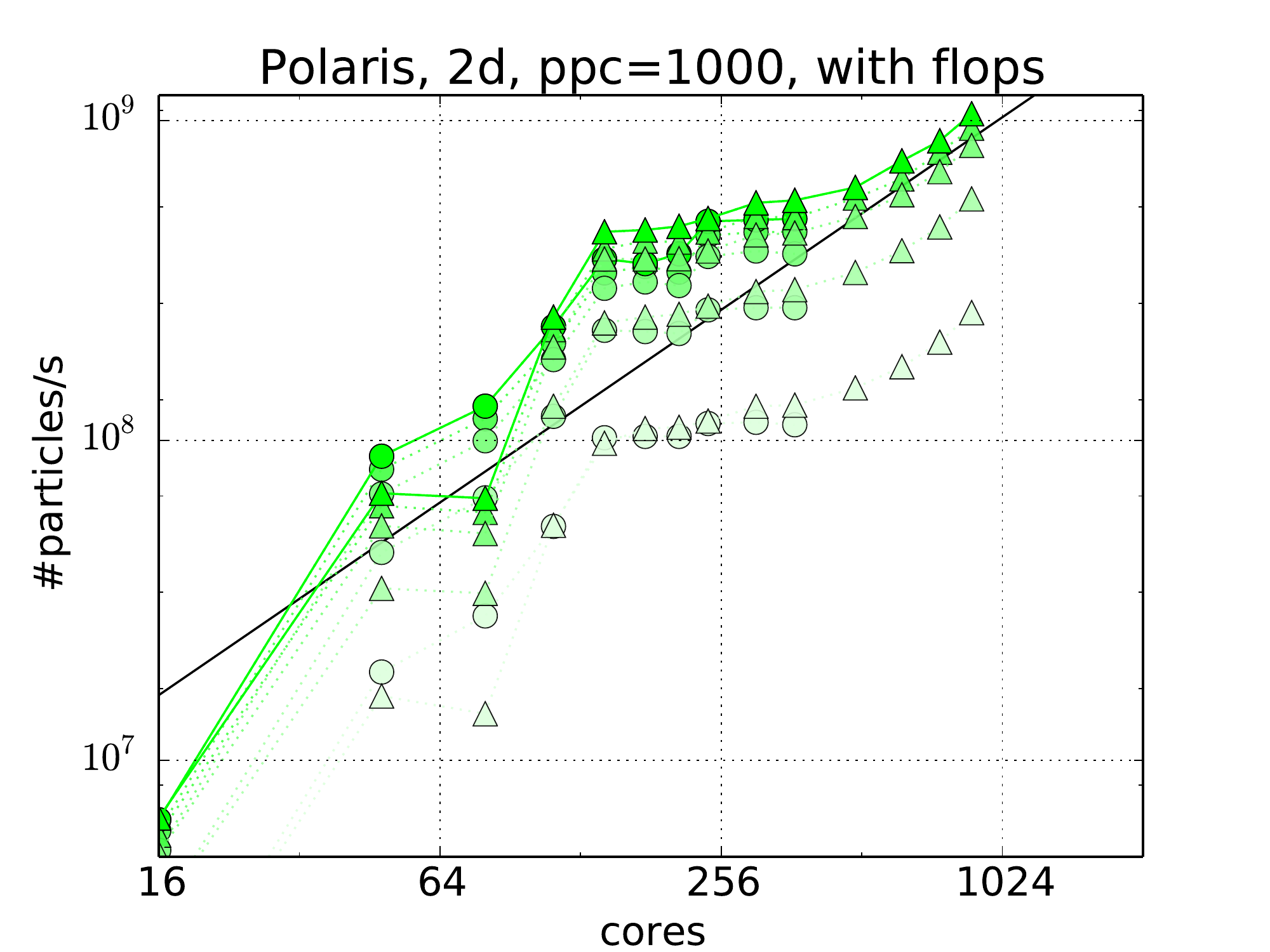}
  \\
  \caption{
    Parallel throughput of \pidt\ (left) and \rapidt\ (right) on Polaris.
  }
\label{figure:results:rapidt-small-core-counts}
\end{figure}

\begin{figure}[!ht]
\centering
    \includegraphics[width=0.49\linewidth]{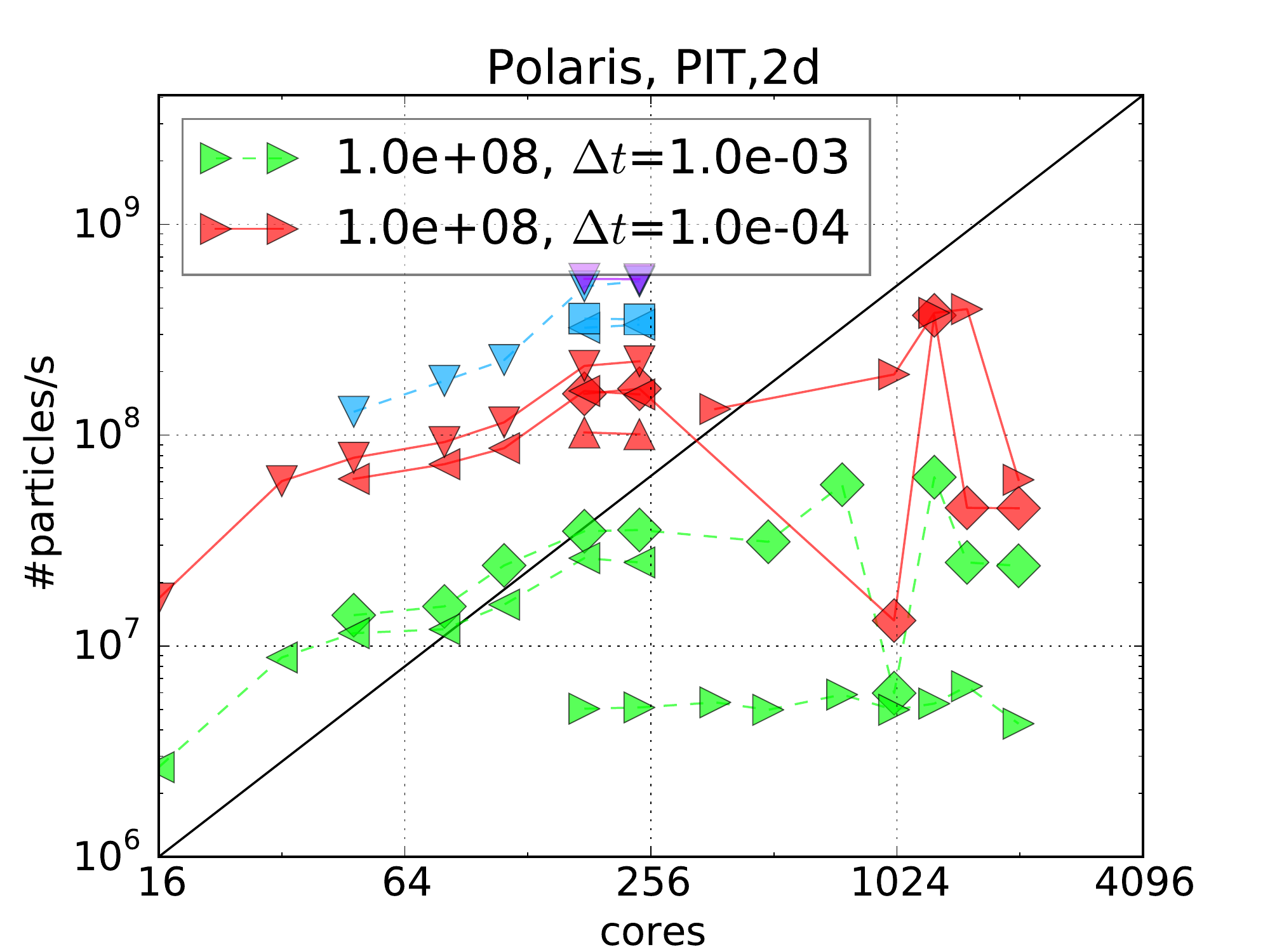}
    \includegraphics[width=0.49\linewidth]{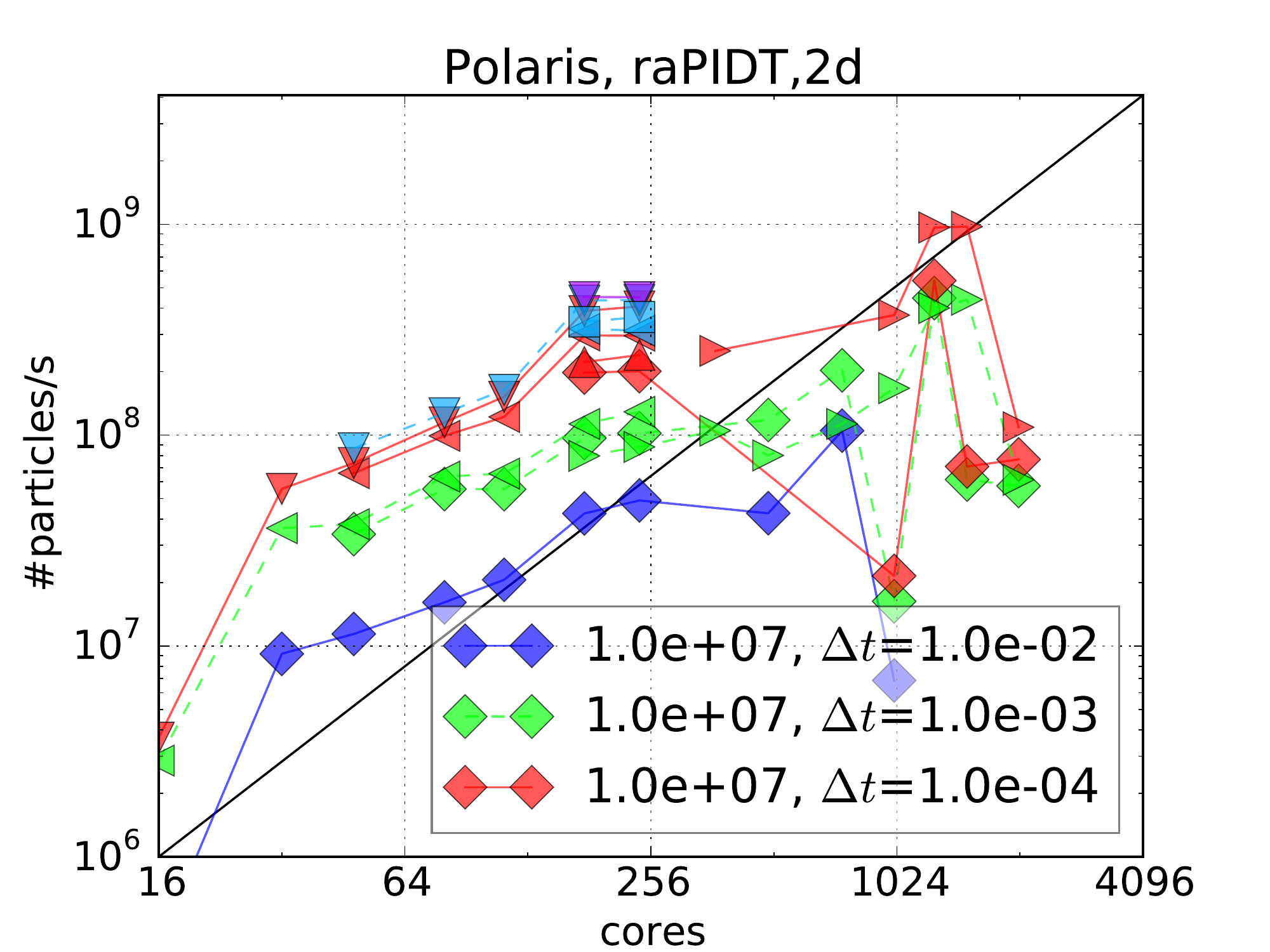}
   \\
    \includegraphics[width=0.49\linewidth]{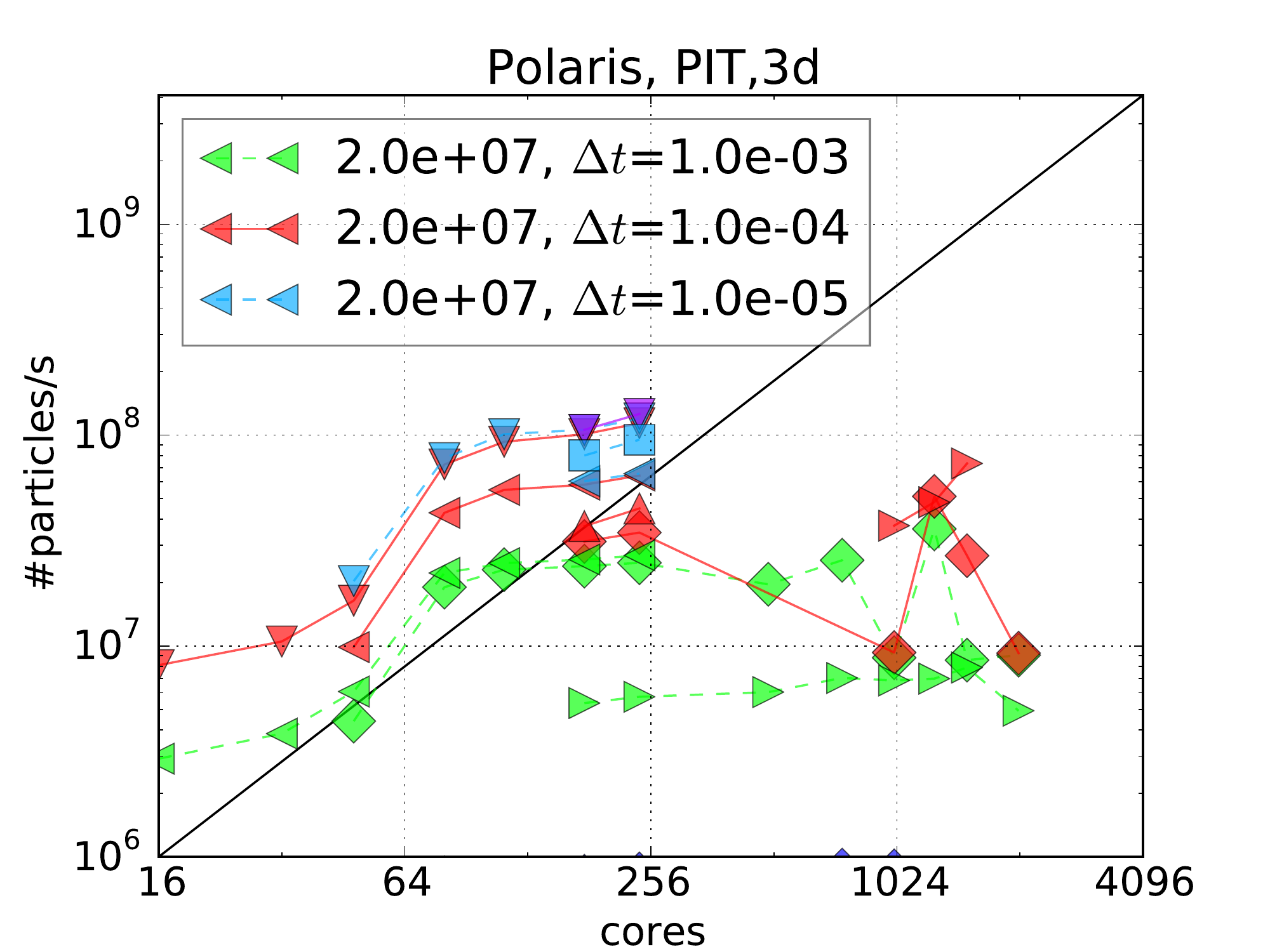}
    \includegraphics[width=0.49\linewidth]{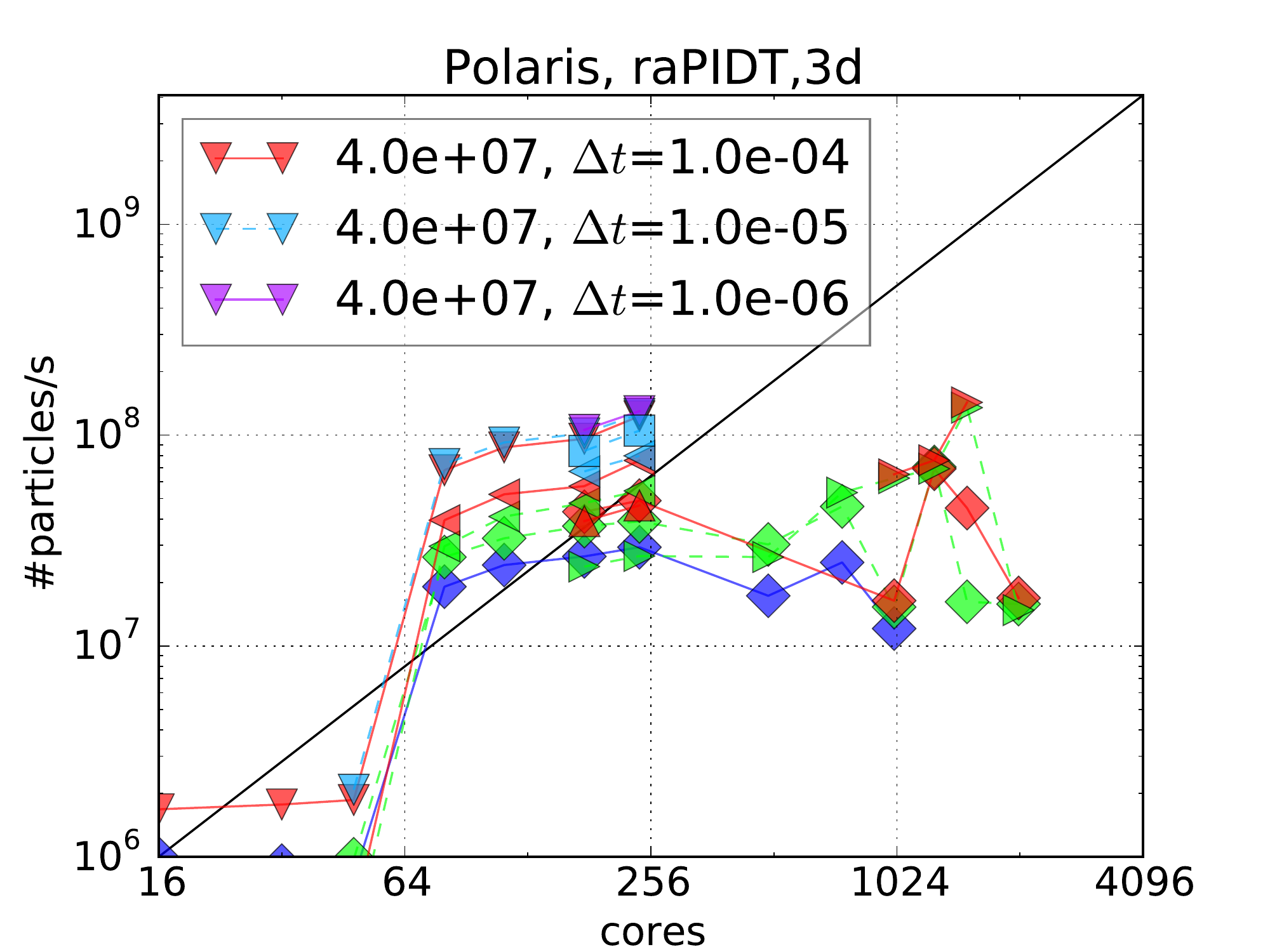}
   \\
    \caption{
    Scaling results for \texttt{ppc}=100 regarding different time step sizes
    for $d=2$ (top) and $d=3$ (bottom).
    \pit\ (left) is outperformed by \rapidt\ (right) while both approaches
    profit from decreasing time step sizes.
    Large core count measurements become dominated by network effects that
    eventually stop any scaling for both approaches on Polaris as opposed to
    Figure \ref{figure:results:rapidt-scaling-supermuc}.
  }
\label{figure:results:rapidt-scaling-polaris}
\end{figure}

\begin{figure}[!ht]
\centering
    \includegraphics[width=0.49\linewidth]{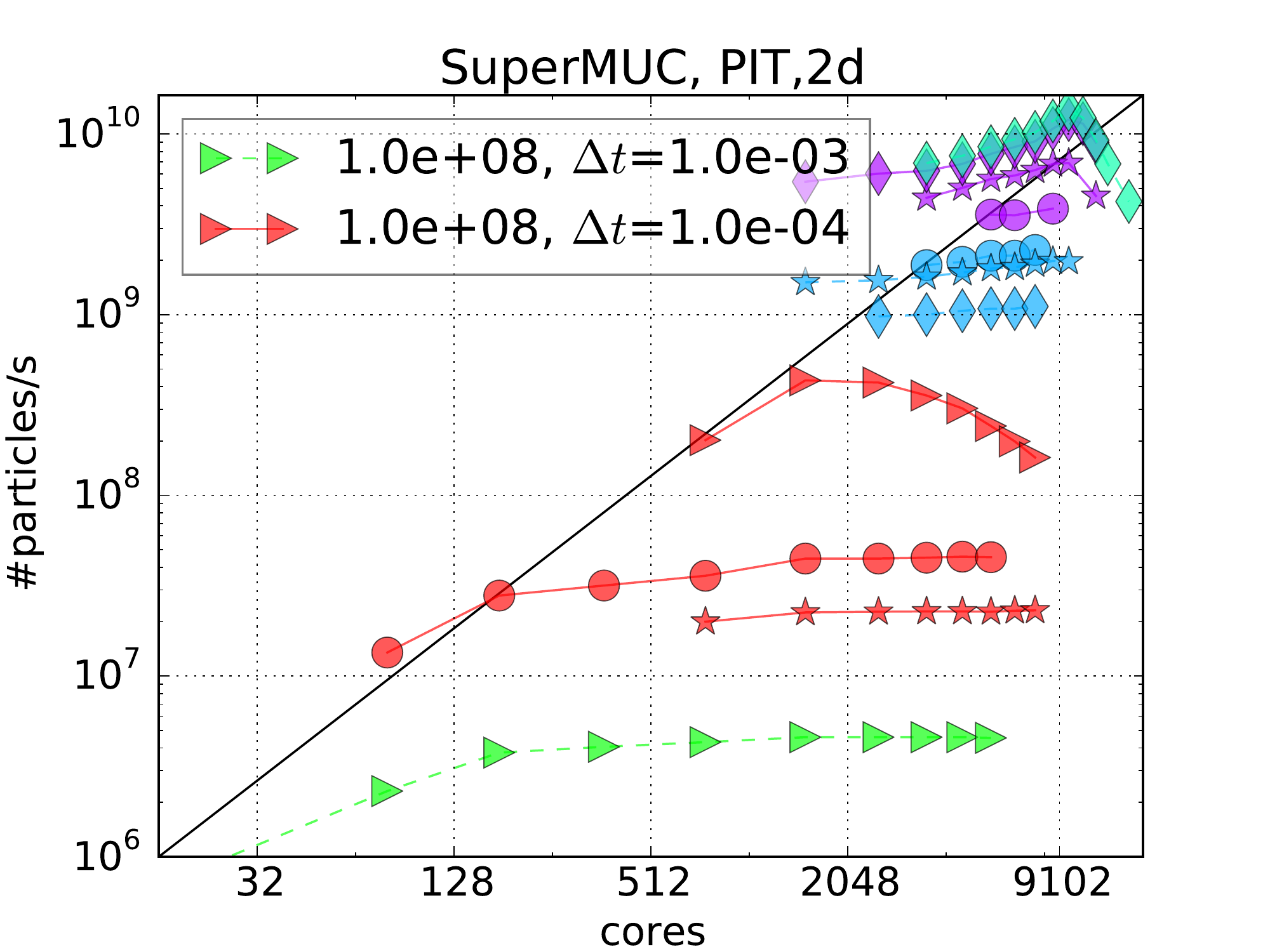}
    \includegraphics[width=0.49\linewidth]{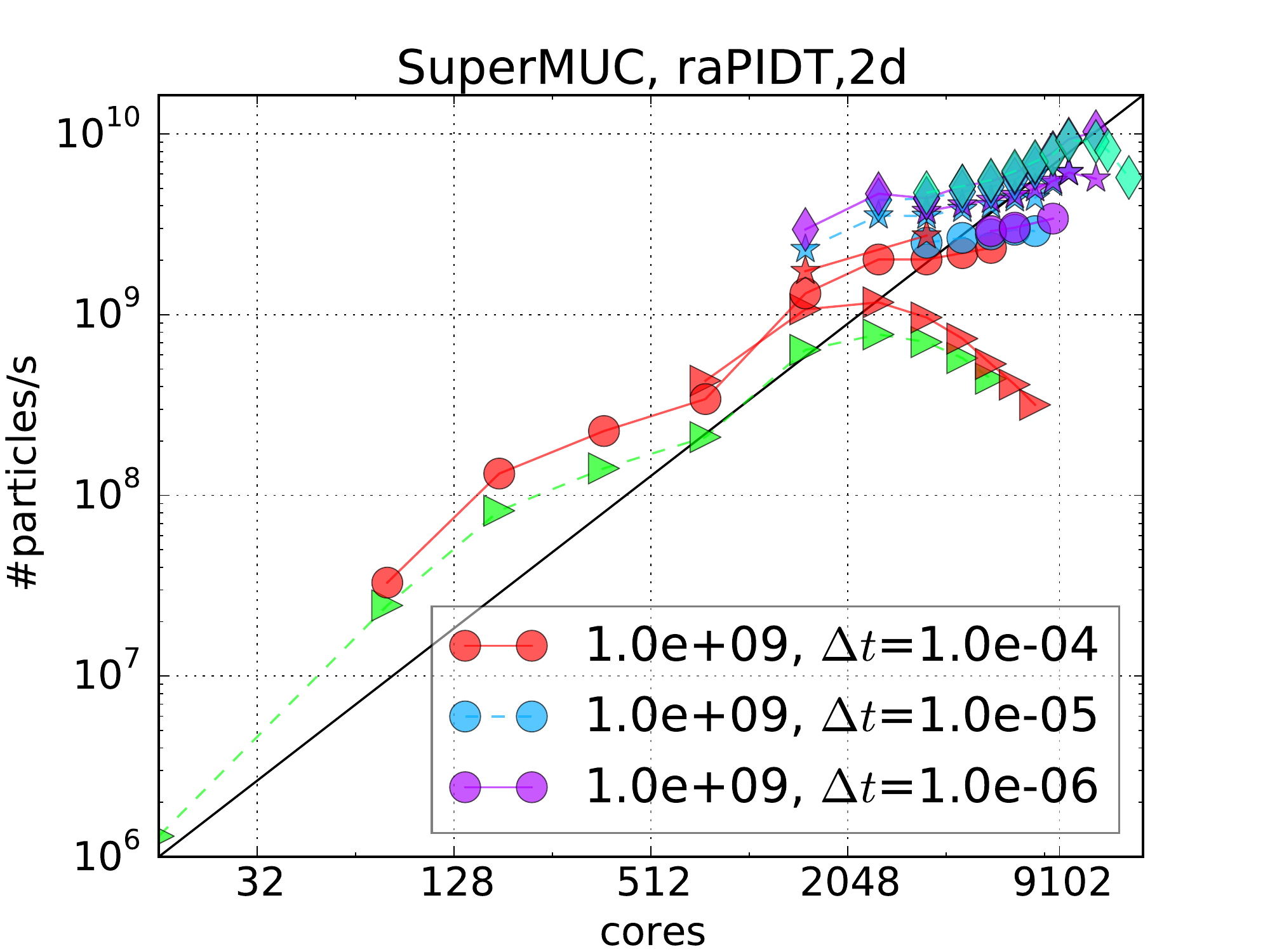}
   \\
    \includegraphics[width=0.49\linewidth]{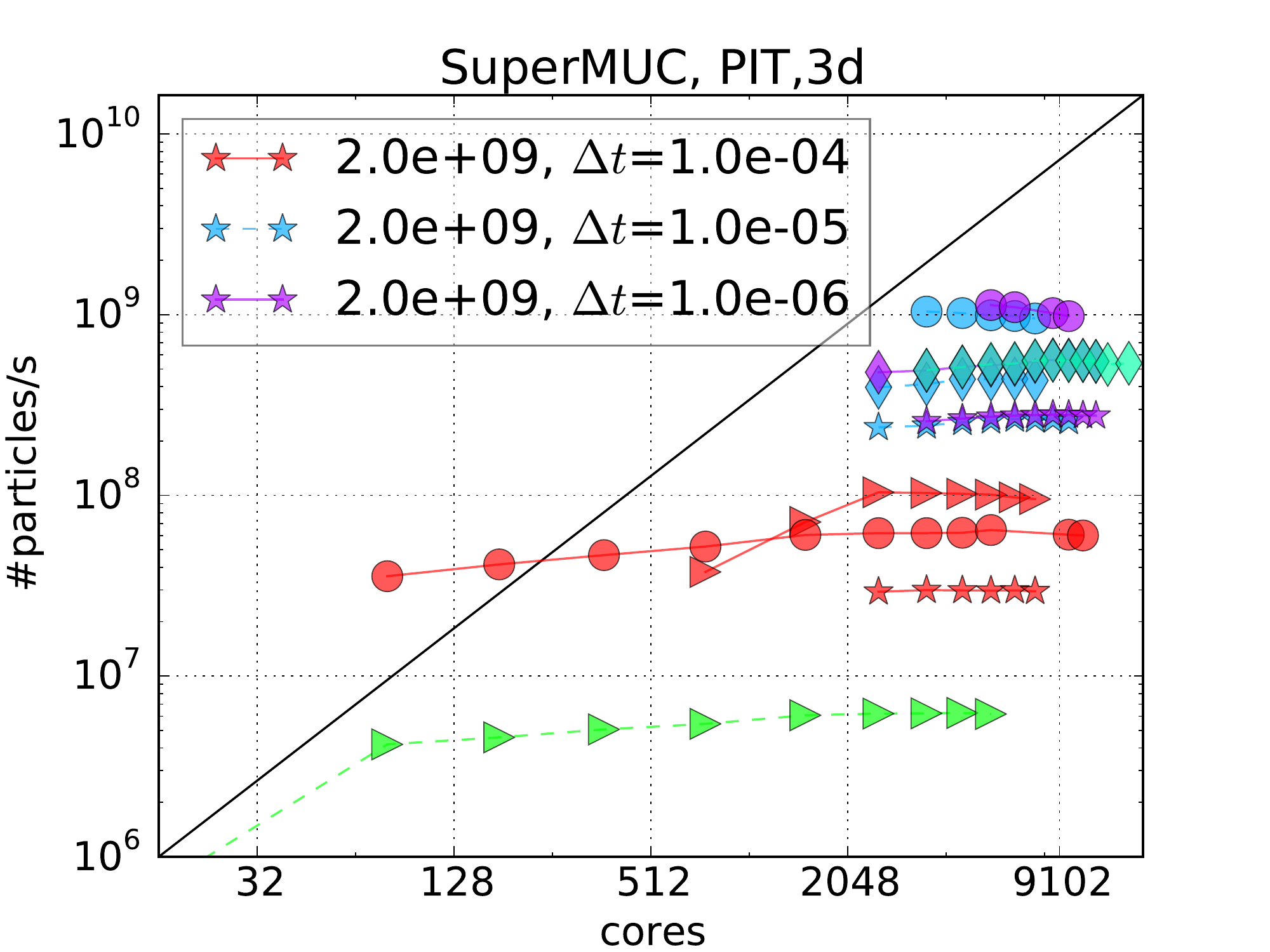}
    \includegraphics[width=0.49\linewidth]{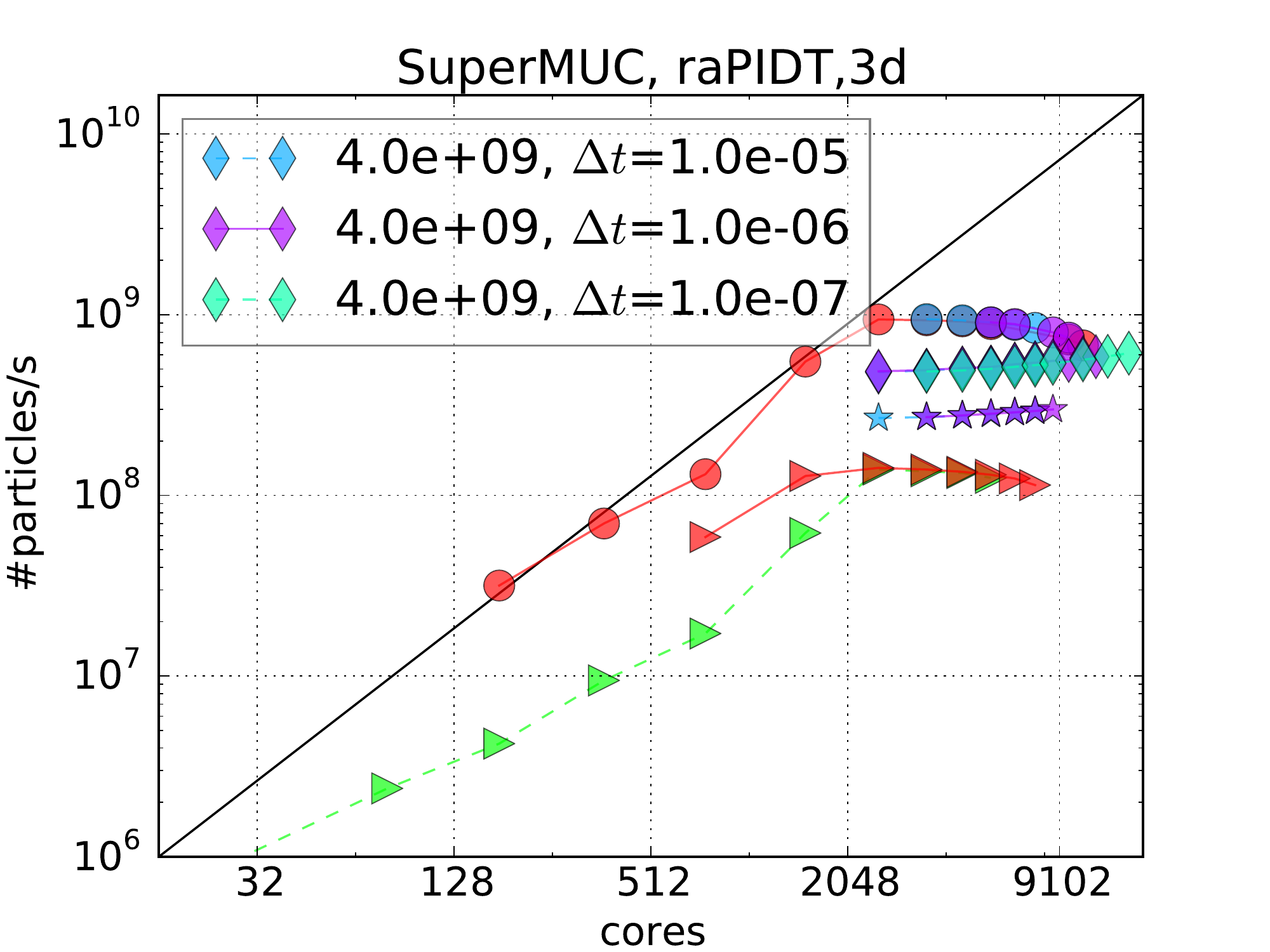}
    \caption{
    Experiments from Figure \ref{figure:results:rapidt-scaling-polaris} for
    different particle counts rerun on SuperMUC.
   }
\label{figure:results:rapidt-scaling-supermuc}
\end{figure}

\rapidt\ tackles the reduction challenge of our particle
management, i.e.~skips them if possible, and thus yields improved
throughput rates for the previously studied setups (Figure
\ref{figure:results:rapidt-small-core-counts}) where the time step size is kept
constant.
It makes the \pidt\ idea to yield throughput rates comparable to \pit.
Notably, it weakens the impact of the network topology.

An interpretation of \rapidt\ on a bigger core scale requires us to
emphasise how the scenario properties normalise experiment parameters.
Due to \texttt{ppc}, we control the particle density per spacetree cell,
while the total particle count determines the spacetree depth, i.e.~the cell
sizes.
Since we standardise the domain to the unit square or cube, prescribe the time
step size and have a uniform velocity distribution, the relative velocity of particles to cell sizes
scales with the total particle count by roughly $\sqrt[d]{\#particles}$.
An increase of the particle count induces an increase of the particles'
velocity.
While a study of weak scaling with respect to particles in
Section \ref{section:results:pit} and \ref{section:results:pidt} is reasonable
to understand algorithmic properties, our subsequent performance studies focus
on weak scaling where the overall computational domain is successively
increased.
To enable use to compare results with previous results, we 
stick with the unit length domain but decrease the time step size
when we increase the particle count.

Once we apply this normalisation, we observe that \rapidt\ outperforms \pit\
(Figures \ref{figure:results:rapidt-scaling-polaris} and
\ref{figure:results:rapidt-scaling-supermuc}) for reasonable big time step
sizes.
For small time step sizes (relative to the particle count, i.e.~the grid
structure), both perform with the same throughput.
\rapidt\ is more robust than \pit\ on Polaris (Figures
\ref{figure:results:rapidt-scaling-polaris}).
The $d=3$ throughput is lower than the two-dimensional counterpart for both
setups, as the typical grid structure for fixed \texttt{ppc} differs.
If we normalised with respect to the grid depth, $d=2$ and $d=3$ yield similar
results.
On SuperMUC, the 4:1 blocking topology for more than 8192 cores stops both
variants to pass the $10^{10}$ particles per second threshold and
makes the throughput stagnate or degenerate.
\rapidt\ here does not perform significantly more robust than \pit.

\section{Comparison with other algorithms}
\label{section:comparison}

\noindent
We finally compare \pit, \pidt\ and \rapidt\ to alternative implementations.
Obviously, only alternatives that support both tunneling and
dynamically adaptive grids candidate.  
The most prominent class of spacetree AMR codes is the family of spacetree
algorithms that rely on space-filling curves for the partitioning
\cite{Bader:13:SFCs,Hamada:09:GordonBell,Lashuk:12:ParallelFMMOnHetergeneousArchitectures,Rahimian:2010:GordonBell,Sampath:08:Dendro}.
Textbook variants of SFC codes typically hold information about their
subpartitioning on each rank---they basically store where the SFC's preimage is
cut into pieces.
Each rank's tree construction starts from its locally owned cells, i.e.~the
cells in-between the rank's start index and its end index along the SFC.
From hereon, the rank constructs the local spacetree in a bottom-up
manner:
if a cell is held by a rank, also its parent is held by this rank which yields, on
coarser levels, a partial replication.
Detailed comparisons to the present Peano approach used in our \pit\ and
\rapidt\ implementation can be found in \cite{Weinzierl:15:Peano}.

If the cuts along the SFC are known, it is possible due to the SFC code
\cite{Weinzierl:15:Peano} to compute per particle the preimage of their
position, i.e.~an index along the SFC.
This immediately yields the information which rank holds a particle.
One thus can send any particle directly to the right rank after the move.
Though our implementation base relies on a space-filling curve as well 
\cite{Schreiber:13:MetaData,Weinzierl:2009:Diss,Software:Peano,Weinzierl:11:Peano},
it does neither exhibit curve information to the application nor does
it hold information about the global partitioning on any rank.
However, we can manually expose this information to the ranks and thus
reconstruct a SFC-based implementation.
This allows us to compare an SFC-based code directly to the present
alternatives:
Our implementation using SFC cuts runs through the grid and moves the particles
as for \pit.
Whenever a particle leaves the local domain, we move it into a buffer.
For each rank, there's one buffer.
Upon termination of the local traversal, all buffers are sent away. 
As there is no information available whether particles tunnel, each rank then
has to wait for a notification from all other ranks.
Though this is a global operation, our implementation follows
\cite{Sundar:13:HykSort} and realises it with point-to-point exchange. 
If one rank finishes prior to other ranks, the in-between time is then already
used for data exchange.
Throughout our experiments, any overhead to maintain the global decomposition
data is neglected, i.e.~we restrict purely to the particle sorting.
However, we emphasise that a rank receiving particles does not hold any
additional information about the particle position, i.e.~where within the local
domain it has to be inserted. 
We thus rely on the drop mechanism.
Comparisons to a direct insert based upon SFCs again (the particle's cell is
identified due to the preimage and inserted right into the correct cell) show
that both variants yield comparable results.
This is an important difference to \pit\ and \pidt\ as the latter schemes encode
remote sorting information implicitly within their send order.

\begin{figure}[!ht]
\centering

\includegraphics[width=0.44\linewidth]{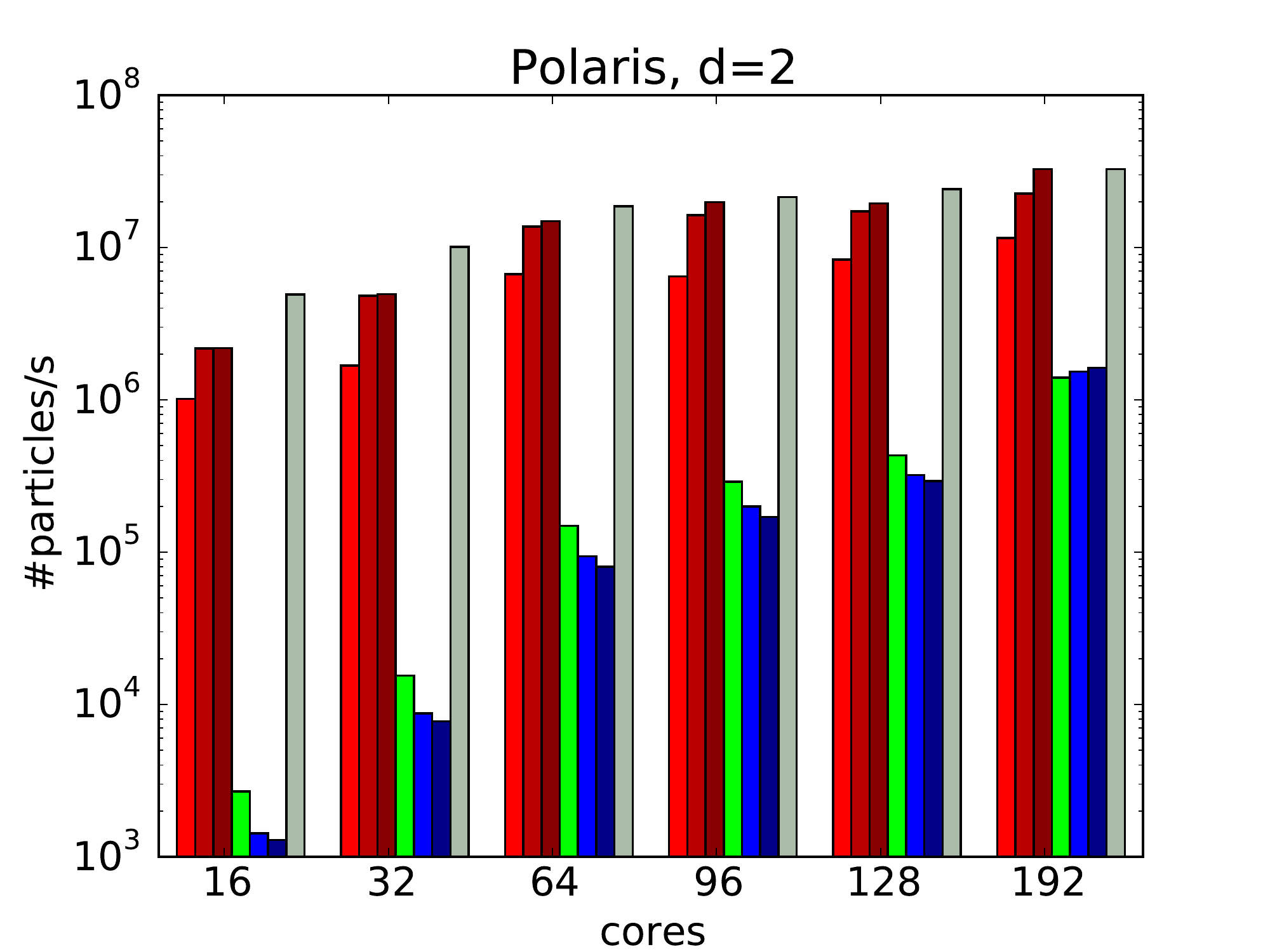}
\includegraphics[width=0.44\linewidth]{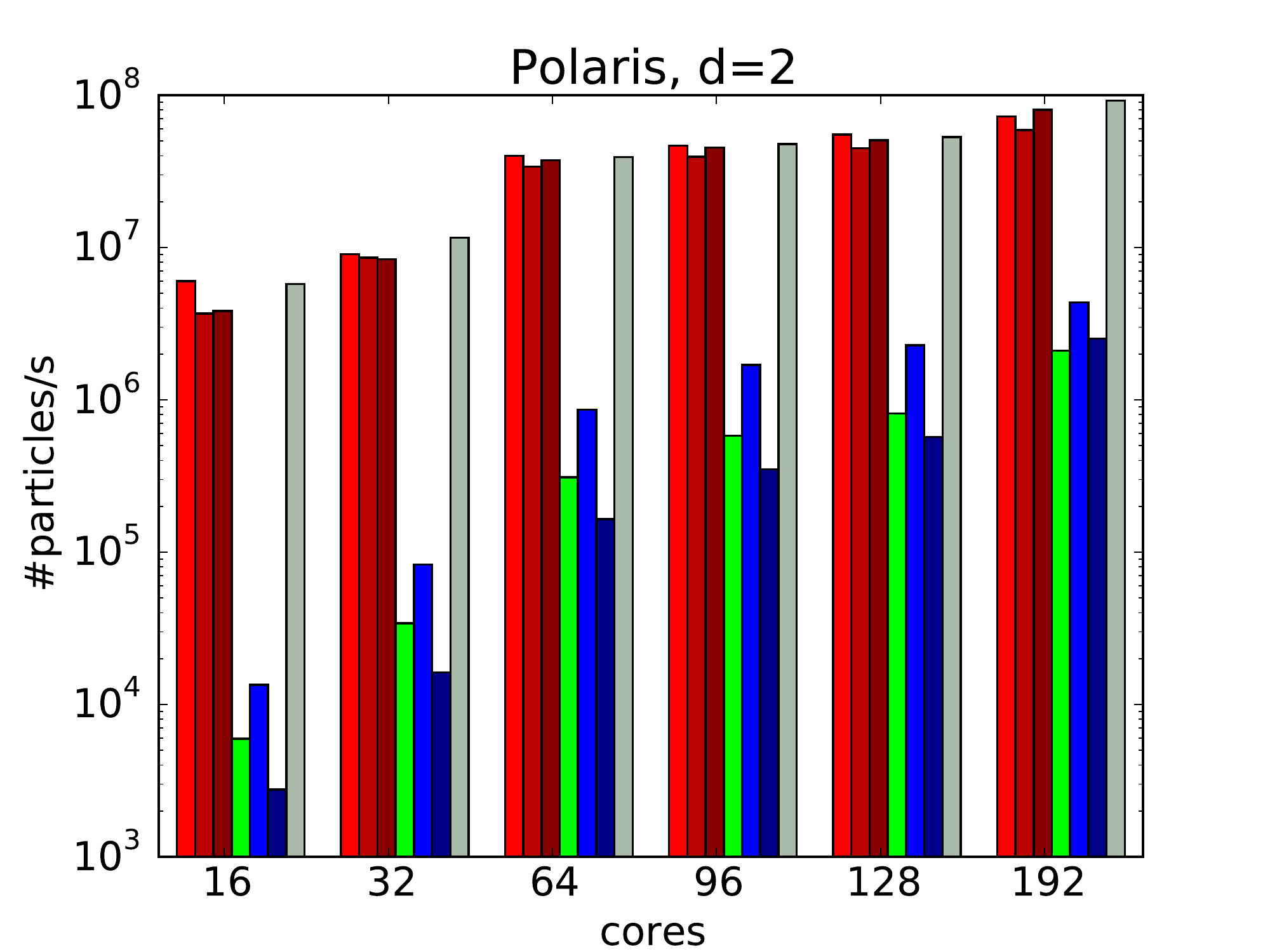}
\includegraphics[width=0.44\linewidth]{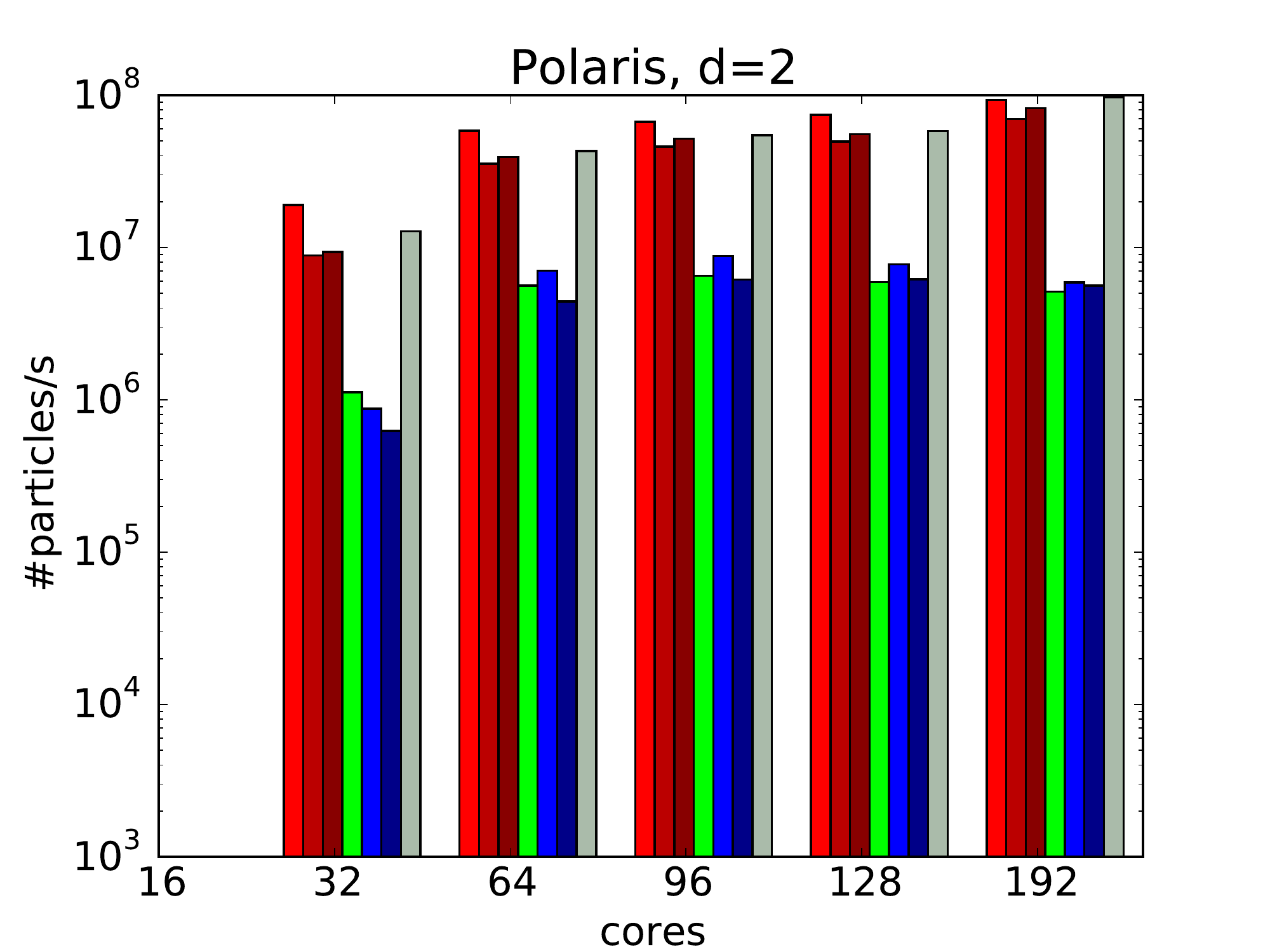}
\includegraphics[width=0.44\linewidth]{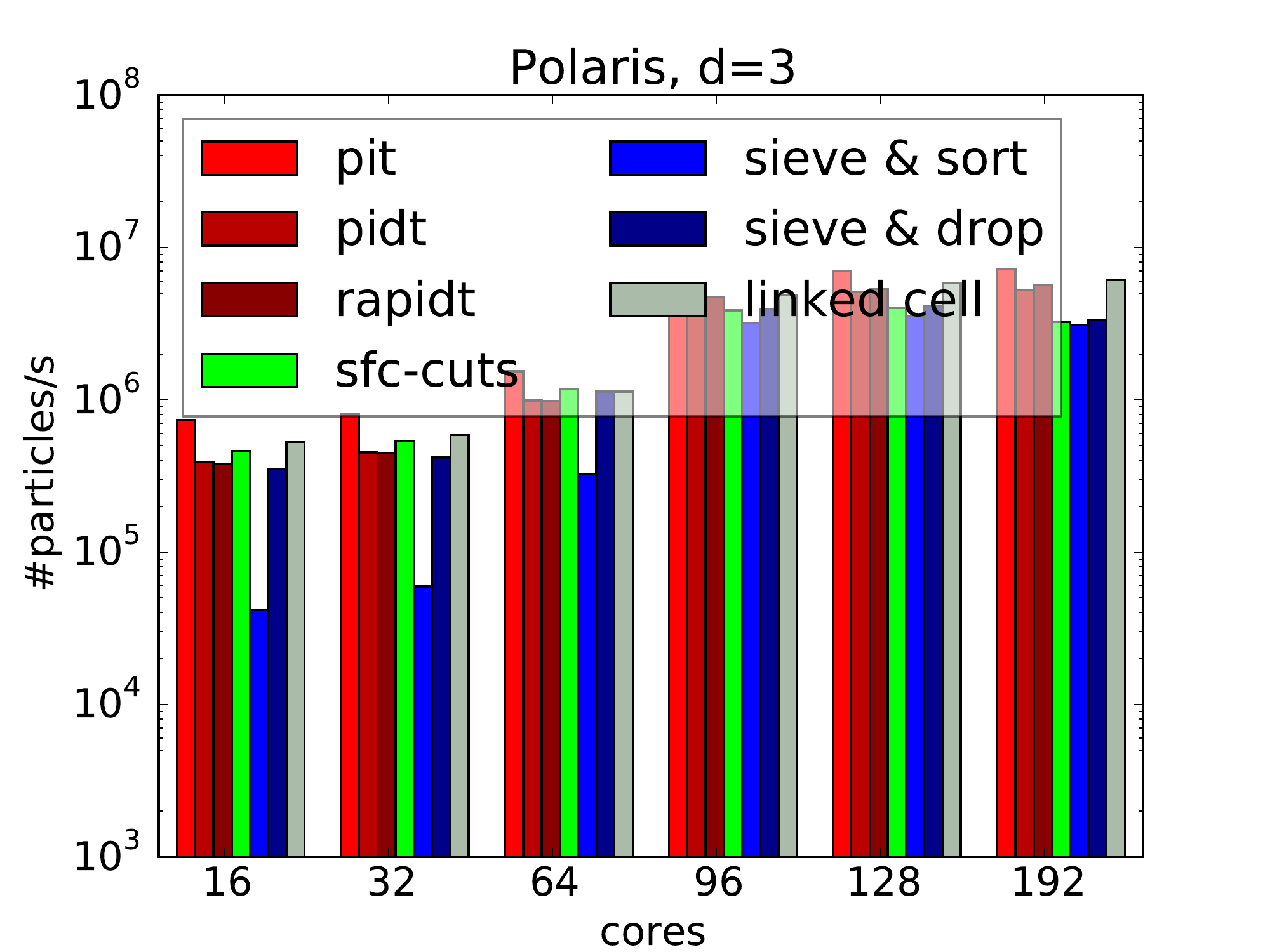}

\caption{
  Comparison of different solvers for small core counts and characteristic
  setups with \texttt{ppc}=100, $\Delta t=10^{-2}$ (top left),
  \texttt{ppc}=100, $\Delta t=10^{-3}$ (top right),
  \texttt{ppc}=100, $\Delta t=10^{-4}$ (bottom left).
  \texttt{ppc}=10, $\Delta t=10^{-4}$ for $d=3$ (bottom right) illustrates a
  regime where all algorithms yield comparable results because of a low
  particle density.
  All experiments are ran with $10^6$ particles. 
} 
\label{figure:comparison}
\end{figure}

A second competing idea is the sieve approach from
\cite{Dubey:11:ParticlesInMesh}.
Different to SFCs, it comes along without global domain decomposition knowledge.
Each rank collects all particles that leave its domain. 
Again, we rely on local buffers.
These sets of particles then are passed around between the ranks cyclically.
Each rank extracts the particles falling into its domain.
It sieves and then passes on the array.
We distinguish two variants to sort particles into the local grid if we receive
a set of particles from the neighbour node:
in one variant (sieve \& drop), we rely on \pit's drop mechanism.
In a second variant, we use the local SFC information to sort the received
particles directly into the correct cells.

Besides the alternative variants, we also compare \pit, \pidt\ and \rapidt\ to a
plain linked-cell algorithm.
For the latter, we use a modified \pidt\ code.
The velocity profile is chosen such that no particle may tunnel.
As a consequence, we manually switch off all multiscale checks, and we, in
particular, skip any master-worker or worker-master communication.
Such a test is thus a valid setup to quantify the overhead introduced by the tunneling for the present software independent of the quality of
implementation.
No software ingredient is particular tuned and all codes rely on the same
ingredients where possible.
All setups rely on exactly the same domain decomposition, i.e.~balancing
differences are eliminated.

We reiterate that \pidt\ is faster than \pit\ for big time step sizes of $\Delta
t=10^{-2}$ (Figure \ref{figure:comparison}). 
However, each run with the linked-cell approach remains faster though the 
performance gap closes with increasing core counts. 
Neither our SFC-cut nor our sieve implementations can cope with these results.
However, they always scale better.
For sufficiently big $\Delta t$, they eventually close the
performance gap.
If we reduce the time step size, there is a sweet spot where \pit\ becomes
faster than \pidt\ as well as the linked-cell approach (visible here only for $d=3$).
Still, the sieve algorithm yields significantly lower throughput.
The SFC-based variants are slower, too.

The differences between \pit\ and \pidt\ have been discussed already.
Our experiments validate, for most setups, statements from
\cite{Dubey:11:ParticlesInMesh} that observe that all variants supporting
tunneling are slower than a pure linked-cell like code.
However, both \pit\ and \pidt\ reduce this runtime difference significantly.
This is insofar interesting, as \pit\ exhibits similarities to the
\texttt{up\_down\_tree} algorithm from \cite{Dubey:11:ParticlesInMesh} which
is reported to be even slower than sieve.
Sieve suffers from a global communication phase after each
iteration where the length of the phase (number of data exchange steps)
increases with more ranks participating. 
It thus cannot scale.
\pit\ circumnavigates such a global data exchange phase.
While the SFC-cut implementation scales and allows all the ranks to process
their local grid asynchronously, it still runs into a synchronisation phase, as
each rank has to wait per time step for every other rank for notification.
This point-to-point synchronisation determines the performance, since we found
no significant runtime difference when we studied whether a drop mechanism for
local sorting or the direct application of SFC indices yields better results for SFCs.
\pit\ and \pidt\ are able to spread all data transfer more evenly
accross the executation time than their competitors.
It is in particular interesting that \pit\ manages to outperform the linked-cell
approach for very small time steps and $d=3$ with small \texttt{ppc}, i.e.~a
deep spacetree.
This results from the fact that particles within a cell are not sorted and few
particles cross the cell faces per time step.
Obviously this is an unfair competition: 
as no neighbourhood information is available in \pit\ (different to \pidt\
where we have half a cell overlap), no particle-particle interaction can be
realised.
\pit\ is designed for particle-mesh interaction only.
Our SFC implementation was able to compete with \pit\ and \pidt\ only for small
core counts and deep spacetree hierarchies.
For bigger core counts, the SFC's behaviour resembles the sieve algorithm which
results from the global communication phase found there, too.

We emphasise that the whole runtime picture might change
if the grid were regular, if the grid were not fixed prior to the particle
sorting, i.e.~if we could create the grid depending on the particles without persistent grid data, or if
we translated our reduction-avoiding mechanism from \rapidt\ into
the SFC world.
The latter would yield a higher asynchronity level and remove a global SFC
communication phase.

\section{Conclusion and outlook}
\label{section:conclusion_and_outlook}

We introduce three particle management variants facilitating
solvers that have to resort particles into an existing
dynamically adaptive Cartesian grid frequently and can run into tunneling.
Particle in tree (\pit) holds the particles within the tree, i.e.~the cells of a
multiscale adaptive Cartesian grid, and lifts and drops the particles between the levels to
move particles in-between cells.
Particle in dual tree (\pidt) stores the particles within the vertices,
i.e.~switches to a dual grid, and thus fuses ideas of a tree-based particle management with a multiscale
linked-list paradigm.
Particles either can move up or down within the (dual) tree or directly into
neighbouring cells.
The latter works on any level of the multiscale grid.
\rapidt\ finally augments \pidt\ by a simple velocity analysis. 
This analysis labels regions where no particle is moved in-between
certain grid levels in the subsequent time steps.
This allows us to eliminate reductions within the tree locally. 
 
Comparisons of the straightforward \pit\ with the dual grid strategy \pidt\
reveal that no scheme is superior to the other schemes by default.
For small problem setups with slowly moving particles, \pit\ yields higher
performance than \pidt.
For big problem setups or fast particles, there is a sweet spot where the
higher code complexity of \pidt\ pays off.
Particular interesting is the fact that \rapidt\ picks up advantages of linked
cell strategies, i.e.~asynchronity of the ranks and exchange of data in the
background, while arbitrary tunneling still is enabled.
This might make it a promising candidate for the upcoming massively
parallel age.
It is also not surprising that the particle throughput depends on 
differences in the hardware characteristics.
High clock rates pay off for moderate particle counts.
For high core counts, topology properties of the interconnection network outshine
clocking considerations.
Regarding latency effects, \rapidt\ is more robust than
\pit\ and \pidt.
For big core counts and high particle numbers, all variants however
continue to suffer from latency and bandwidth constraints introduced by x:1
blocking.

Our particle maintenance strategies are of relevance for multiple
particle-in-cell (\pic) simulations as well as other particle-grid codes with
different algorithmic properties.
Examples for \pic\ are plasma processes with shocks or reconnection, 
global large-scale 
simulations with
multifaceted particle velocity characteristics, or self-gravitating systems whose dynamics imply
large particle inhomogeneities.
While such setups require substantial investment into the PDE solver, our
particle treatment transfers directly.
Another example is local particle time stepping where some particles march in
time fast and thus might tunnel, while others then follow up with tiny time
steps where the reduction skips come into play. 
It is part of our future work to use the present algorithms in such a context.
Of particular interest to us is the fusion with applications that realise
implicit schemes or particle-particle interactions.
While the former weaken the grid size and, hence, time step 
constraints---though we still can make the grid size adapt to fine-scale
inhomogeneities of the PDE without algorithmic constraints resulting from the
particle velocities---both increase the arithmetic
intensity and thus damp the impact of communication on the scaling.
We further recognise that the dual grid approach allows us to realise any
interaction with a cut-off radius smaller than half the grid width without any additional
helper data structure such as linked-cell lists or modified linked lists
\cite{Gonnet:07:VerletLists,Mattson:99:ModifiedLinkedList}.
In this sense, our approach is related to dual tree traversals
\cite{Dehnen:02:DualTreeTraversal,Warren:95:DualTree} that also avoid to build
up connectivity links.

It seems to be important to stress two facts.
On the one hand, our lift and drop mechanisms can also be used by
particle-interaction kernels to push particles to the ``right'' resolution
level, i.e.~a level fitting to their cut-off radius.
For complex applications, there is no need to drop particles always into the
finest grid all the time. 
This enables particles suspended within the tree.
It is subject of future work
with respect to applications with particle-particle interaction to exploit such
a multilevel storage.
On the other hand, we appreciate how fast linked-list algorithms perform in
many applications.
In our formalism, they basically induce an overlapping domain topology on top
of a given tree distribution.
And we recognise that the fastest variants of these algorithms rely on
incremental updates of these interaction lists.
It is hence a straightforward idea 
to combine our
grammar paradigm plus the particle management with an update of these lists
where the grammar eliminates also list updates. 

\section*{Acknowledgements}

\noindent
The authors gratefully acknowledge the Gauss Centre for Supercomputing e.V.
(www.gauss-centre.eu) for funding this project by providing computing time on
the GCS Supercomputer SuperMUC at Leibniz Supercomputing Centre (LRZ,
www.lrz.de), as well for the support of the LRZ. 
This work also made use of the facilities of N8 HPC provided and funded by the
N8 consortium and \linebreak EPSRC (Grant No.~N8HPC\_DUR\_TW\_\-PEANO). 
The Centre is co-ordinated by the Universities of Leeds and Manchester.
Special thanks are due to Bram Reps and Kristof Unterweger for their support on
linear algebra and implementation details.
Kristof also contributed to make the numerical experiments run.
Robert Glas and Stefan Wallner made a major contribution to the maturity and
performance of the presented implementations as they used it for their
astrophysics code and thus pushed the development.
Bart Verleye was funded by Intel and by the Institute
for the Promotion of Innovation through Science and Technology in Flanders
(IWT).
All underlying software is open source and available at
\cite{Software:Peano}.

\bibliographystyle{plain}
\bibliography{./paper}

\begin{thebibliography}{10}

\bibitem{Bader:13:SFCs}
M.~Bader.
\newblock {\em Space-Filling Curves - An Introduction with Applications in
  Scientific Computing}, volume~9 of {\em Texts in Computational Science and
  Engineering}.
\newblock Springer-Verlag, 2013.

\bibitem{Birdsall:05:PlasmaPhysics}
C.~K. {Birdsall} and A.~B. {Langdon}.
\newblock {\em Plasma physics via computer simulation}.
\newblock Taylor and Francis, first edition, 2005.

\bibitem{Dehnen:02:DualTreeTraversal}
W.~Dehnen.
\newblock A hierarchical o(n) force calculation algorithm.
\newblock {\em J. Computational Physics}, 179(1):27--42, 2002.

\bibitem{Dubey:11:ParticlesInMesh}
A.~Dubey, K.~Antypas, and C.~Daley.
\newblock Parallel algorithms for moving {L}agrange data on block structured
  {E}ulerian meshes.
\newblock {\em Parallel Computing}, 37:101--113, 2011.

\bibitem{Bender:12:PackedMemoryArray}
M.~Durand, B.~Raffin, and F.~Faure.
\newblock {A Packed Memory Array to Keep Moving Particles Sorted}.
\newblock In J.~Bender, A.~Kuijper, D.~W. Fellner, and E.~Guerin, editors, {\em
  Workshop on Virtual Reality Interaction and Physical Simulation}. The
  Eurographics Association, 2012.

\bibitem{Fleissner:08:LB}
F.~Fleissner and P.~Eberhard.
\newblock {Parallel load-balanced simulation for short-range interaction
  particle methods with hierarchical particle grouping based on orthogonal
  recursive bisection.}
\newblock {\em Int. J. Numer. Methods Eng.}, 74(4):531--553, 2008.

\bibitem{Gonnet:07:VerletLists}
P.~Gonnet.
\newblock {A simple algorithm to accelerate the computation of non-bonded
  interactions in cell-based molecular dynamics simulations}.
\newblock {\em J. Comput. Chem.}, 28(2):570--573, 2007.

\bibitem{Hamada:09:GordonBell}
T.~Hamada, T.~Narumi, R.~Yokota, K.~Yasuoka, K.~Nitadori, and M.~Taiji.
\newblock 42 tflops hierarchical n-body simulations on gpus with applications
  in both astrophysics and turbulence.
\newblock In {\em Proceedings of the Conference on High Performance Computing
  Networking, Storage and Analysis}, SC '09, pages 62:1--62:12, New York, NY,
  USA, 2009. ACM.

\bibitem{Hockney:88:PIC}
R.~Hockney and J.~Eastwood.
\newblock {\em Computer Simulation Using Particles}.
\newblock Academic Press, 1988.

\bibitem{Ihmsen:11:Sorting}
M.~Ihmsen, N.~Akinci, M.~Becker, and M.~Teschner.
\newblock A parallel sph implementation on multi-core cpus.
\newblock {\em Comput. Graph. Forum}, 30(1):99--112, 2011.

\bibitem{Knuth:90:AttributeGrammar}
D.~E. Knuth.
\newblock The genesis of attribute grammars.
\newblock In P.~Deransart and M.~Jourdan, editors, {\em WAGA: Proceedings of
  the international conference on Attribute grammars and their applications},
  pages 1--12. Springer-Verlag, 1990.

\bibitem{Krall:73:PrinciplesOfPlasmaPhysics}
N.A. Krall and A.W. Trivelpiece.
\newblock {\em Principles of Plasma Physics}.
\newblock McGraw-Hill, 1973.

\bibitem{Lapenta:12:PICSpaceWeather}
G.~Lapenta.
\newblock Particle simulations of space weather.
\newblock {\em Journal of Computational Physics}, 231(3):795--821, 2012.

\bibitem{Lapenta:95:PIC}
G.~Lapenta and J.U. Brackbill.
\newblock Control of the number of particles in fluid and mhd particle in cell
  methods.
\newblock {\em Computer Physics Communications}, 87(1--2):139--154, 1995.

\bibitem{Lashuk:12:ParallelFMMOnHetergeneousArchitectures}
I.~Lashuk, A.~Chandramowlishwaran, H.~Langston, T.-A. Nguyen, R.~Sampath,
  A.~Shringarpure, R.~Vuduc, L.~Ying, D.~Zorin, and G.~Biros.
\newblock A massively parallel adaptive fast multipole method on heterogeneous
  architectures.
\newblock {\em Commun. ACM}, 55(5):101--109, 2012.

\bibitem{Li:02:MeshfreeAndParticle}
S.~Li and W.~K. Liu.
\newblock Meshfree and particle methods and their applications.
\newblock {\em Appl. Mech. Rev.}, 55:1--34, 2002.

\bibitem{Mattson:99:ModifiedLinkedList}
W.~Mattson and B.~M. Rice.
\newblock {Near-neighbor calculations using a modified cell-linked list
  method}.
\newblock {\em Computer Physics Communications}, 119(2-3):135--148, 1999.

\bibitem{McCalpin:95:Stream}
J.~D. McCalpin.
\newblock Memory bandwidth and machine balance in current high performance
  computers.
\newblock {\em IEEE Computer Society Technical Committee on Computer
  Architecture (TCCA) Newsletter}, pages 19--25, 1995.

\bibitem{Plimpton:03:LBForPIC}
S.~J. Plimpton, D.~B. Seidel, M.~F. Pasik, R.~S. Coats, and G.~R. Montry.
\newblock A load-balancing algorithm for a parallel electromagnetic
  particle-in-cell code.
\newblock {\em Computer Physics Communications}, 152(3):227 -- 241, 2003.

\bibitem{Rahimian:2010:GordonBell}
A.~Rahimian, I.~Lashuk, S.~Veerapaneni, A.~Chandramowlishwaran, D.~Malhotra,
  L.~Moon, R.~Sampath, A.~Shringarpure, J.~Vetter, R.~Vuduc, D.~Zorin, and
  G.~Biros.
\newblock Petascale direct numerical simulation of blood flow on 200k cores and
  heterogeneous architectures.
\newblock In {\em Proceedings of the 2010 ACM/IEEE International Conference for
  High Performance Computing, Networking, Storage and Analysis}, SC '10, pages
  1--11, Washington, DC, USA, 2010. IEEE Computer Society.

\bibitem{Sampath:08:Dendro}
R.~S. Sampath, S.~S. Adavani, H.~Sundar, I.~Lashuk, and G.~Biros.
\newblock Dendro: parallel algorithms for multigrid and amr methods on 2:1
  balanced octrees.
\newblock In {\em Proceedings of the 2008 ACM/IEEE conference on
  Supercomputing}, SC '08, pages 18:1--18:12, Piscataway, NJ, USA, 2008. IEEE
  Press.

\bibitem{Schreiber:13:MetaData}
M.~Schreiber, T.~Weinzierl, and H.-J. Bungartz.
\newblock Sfc-based communication metadata encoding for adaptive mesh.
\newblock In M.~Bader, editor, {\em Proceedings of the International Conference
  on Parallel Computing (ParCo)}, volume~25 of {\em Advances in Parallel
  Computing: Accelerating Computational Science and Engineering (CSE)}. IOS
  Press, October 2013.
\newblock accepted.

\bibitem{Shamoto:14:GPUSorting}
H.~Shamoto, K.~Shirahata, A.~Drozd, H.~Sato, and S.~Matsuoka.
\newblock Large-scale distributed sorting for gpu-based heterogeneous
  supercomputers.
\newblock In {\em Big Data (Big Data), 2014 IEEE International Conference on},
  pages 510--518, 2014.

\bibitem{Springel:09:Arepo}
V.~Springel.
\newblock {E pur si muove: Galilean-invariant cosmological hydrodynamical
  simulations on a moving mesh}.
\newblock {\em NMRAS}, 401:791--851, 2010.

\bibitem{Sundar:13:HykSort}
H.~Sundar, D.~Malhotra, and G.~Biros.
\newblock Hyksort: A new variant of hypercube quicksort on distributed memory
  architectures.
\newblock In {\em Proceedings of the 27th International ACM Conference on
  International Conference on Supercomputing}, ICS '13, pages 293--302, New
  York, NY, USA, 2013. ACM.

\bibitem{Treibig:10:Likwid}
J.~Treibig, G.~Hager, and G.~Wellein.
\newblock {LIKWID}: {A} {L}ightweight {P}erformance-{O}riented {T}ool {S}uite
  for x86 {M}ulticore {E}nvironments.
\newblock In {\em Proceedings of the 2010 39th International Conference on
  Parallel Processing Workshops}, ICPPW '10, pages 207--216. IEEE Computer
  Society, 2010.

\bibitem{Verleye:13:PIC}
B.~Verleye, P.~Henri, R.~Wuyts, G.~Lapenta, and K.~Meerbergen.
\newblock Implementation of a 2{D} electrostatic particle in cell algorithm in
  {U}nified {P}arallel {C} with dynamic load-balancing.
\newblock {\em Computers \& Fluids}, 80:10--16, 2013.

\bibitem{Warren:95:DualTree}
M.~S. Warren and J.~K. Salmon.
\newblock A portable parallel particle program.
\newblock {\em Computer Physics Communications}, 87:266--290, 1995.

\bibitem{Weinzierl:2009:Diss}
T.~Weinzierl.
\newblock {\em A {F}ramework for {P}arallel {PDE} {S}olvers on {M}ultiscale
  {A}daptive {C}artesian {G}rids}.
\newblock Verlag Dr. Hut, M{\"{u}}nchen, 2009.

\bibitem{Weinzierl:15:Peano}
T.~Weinzierl.
\newblock The {P}eano software---parallel, automaton-based, dynamically
  adaptive grid traversals.
\newblock Technical Report 2015arXiv150604496W, Durham University, 2015.

\bibitem{Software:Peano}
T.~Weinzierl et~al.
\newblock Peano---a {F}ramework for {PDE} {S}olvers on {S}pacetree {G}rids,
  2012.
\newblock www.peano-framework.org.

\bibitem{Weinzierl:11:Peano}
T.~Weinzierl and M.~Mehl.
\newblock {Peano -- A Traversal and Storage Scheme for Octree-Like Adaptive
  Cartesian Multiscale Grids}.
\newblock {\em SIAM Journal on Scientific Computing}, 33(5):2732--2760, October
  2011.

\end{thebibliography}

\appendix
\section{Validation}
\label{section:validation}

\begin{figure}[!ht]
\centering
\includegraphics [width=0.8\linewidth]{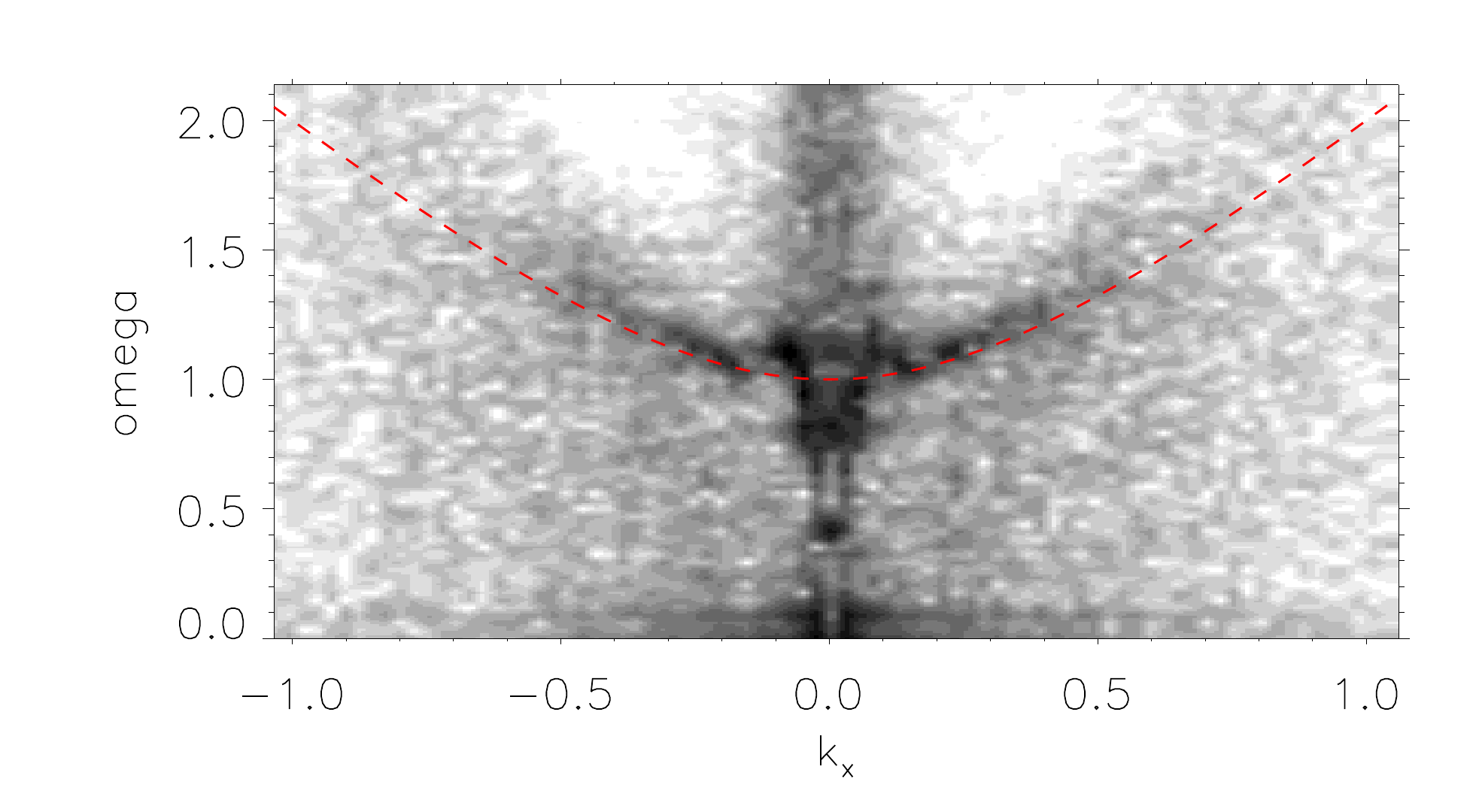}
  \vspace{-0.5cm}
\caption{Electric field spectrum in the wavevector-frequency plane
$(k_1,\omega)$, at $k_2=0$. The colors range from white (low signal) to black
(high signal). The red dashed line shows the theoretical Langmuir wave dispersion relation obtained from the Vlasov-Poisson theory. }
\label{fig:RelDisp2D_1} 
\end{figure}

As validation of our \pic\ implementation---which is independent of the choice
of \pit\ or \pidt---we study a
thermal plasma at rest, and we confirm that our code retrieves the theoretical
Langmuir wave dispersion relation obtained from the Vlasov-Poisson theory.
For our test, we rely on a regular $d=2$ grid with $243 \times 243$ grid cells.
Time is normalized to the inverse plasma frequency $\omega_{p}^{-1}$, and space
is normalized to the Debye length $\lambda_{d}$, i.e.\ a grid cell has size
$\lambda_{d} \times \lambda_{d}$.
The (angular) plasma frequency is defined by
\[
\omega_{p} = \sqrt{ \frac{n e^2}{\epsilon_{0} m_{e}} }
\]
with $n$ being the electron density, $m_{e}$ being  the electron mass and $e$
being  the elementary charge.
The Debye length $\lambda_d$, the typical length scale for the electric screening of a charge in a plasma, is defined by
\[
\lambda_d = \sqrt{ \frac{\epsilon_{0} k_{_{B}} T}{n e^2} } = v_{th} /
\omega_{p}
\]
with $\epsilon_{0}$ being the vacuum permittivity, $k_{_{B}}$ being the
Boltzmann constant, and $T$ being the electron temperature.
All parameters follow \cite{Krall:73:PrinciplesOfPlasmaPhysics}.
Our observed Langmuir waves, also called electron plasma waves, are high
frequency oscillations of the electron density over a fixed ion background. The frequency is high enough for the ions not to have time to respond because of their higher inertia. The charge separation generates an electric force acting as the restoring force.

In the cold plasma approximation, i.e.~when the thermal velocity of the plasma is much smaller than the phase velocity of the wave, the Langmuir waves oscillate at the plasma frequency.
When taking into account the finite temperature of the electrons, a (small) frequency correction appears, leading to the following dispersion relation for a Maxwellian velocity distribution function
\[
\omega^2 = \omega_{p}^2 + 3 k^2 v_{th}^2 = \omega_{p}^2 ( 1 + 3 k^2 \lambda_d^2 )
\]
where $k_{_{L}}$ is the wave vector, $v_{th}$ the electron thermal velocity.

The electron plasma is initially loaded with $100$ macro-particles per cell, a
charge-to-mass ratio $q/m = -1$, and a random velocity characterized by a
Gaussian distribution of mean velocity $0$ and thermal velocity $v_{th}=1$. The
electron charge density is initially uniform $n=-1$.
We also set a neutralising fixed
ion background---this contribution adds as constant term to the
right-hand side in (\ref{equation:pic:pde}) besides the simulated
particles---with a constant charge density $n_i = 1$.
Our simulation runs with a time step size $\Delta t=0.01$ and the
potential is output every 150 time steps. 

\begin{figure}[!ht]
\centering
\includegraphics [width=0.8\linewidth]{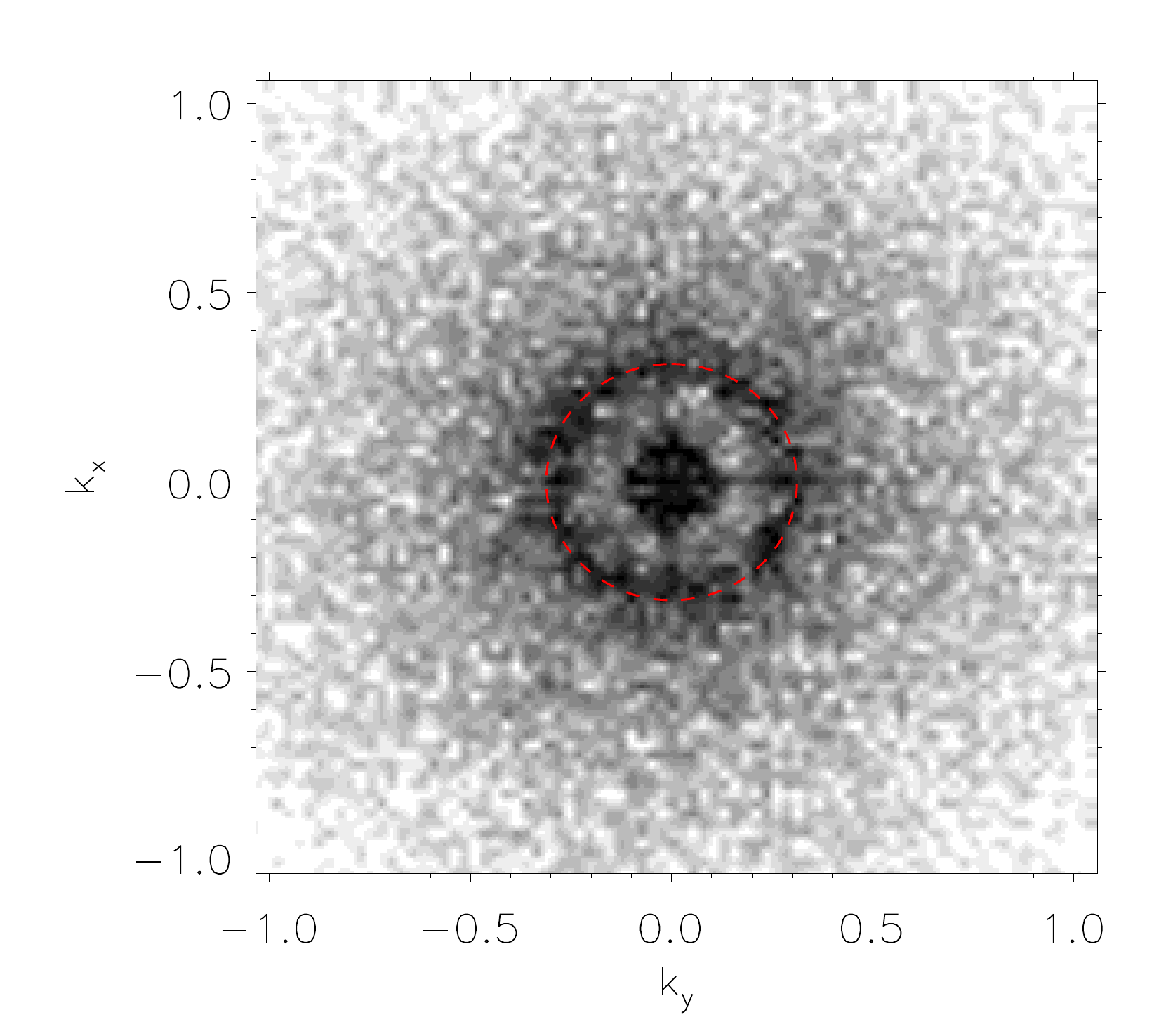}
  \vspace{-0.5cm}
\caption{Electric field spectrum in the wavevector-wavevector plane $(k_1,k_2)$,
at $\omega=1.14$. The colors range from white (low signal) to black (high
signal). The red dashed line shows the wavevectors in the $d=2$ plane
corresponding to Langmuir waves at frequency $\omega=1.14$, from the theoretical Langmuir wave dispersion relation obtained from the Vlasov-Poisson theory. }
\label{fig:RelDisp2D_2}
\end{figure}

The initial random velocity seeds a sea of Langmuir waves whose
dispersion relation can be checked from the analysis of the electric potential:
the output electric potential $V(x,y,t)$ is
Fourier transformed both in time and space which yields $\hat
V(k_1,k_2,\omega)$ where $k_1$ and $k_2$ are the wave vectors associated to the
space coordinates $x_i \ i \in \{1,2\}$.
$\omega$ is the frequency. The Fourier transform of the electric field is
shown for a cut at $k_y=0$ in the plane $(k_x,\omega)$ in
Figure \ref{fig:RelDisp2D_1}. The electric fluctuations are organized in
Fourier space so to match the theoretical Langmuir wave dispersion relation obtained
from the Vlasov-Poisson theory $\omega^2 = \omega_{p}^2 \ \big( 1 + 3\ (k
\lambda_d)^2 \big)$ (dashed line). The result is identical in a cut at
$k_x=0$ in the plane $(k_y,\omega)$. 
The Fourier transform of the electric field is also shown in a
two-dimensional cut at fixed frequency $\omega=1.14$, in the wavevectors plane
$(k_1,k_2)$, in Figure \ref{fig:RelDisp2D_2}. 
Again, the electric fluctuations are organized
in Fourier space at the wave vectors corresponding to Langmuir waves (dashed
line) at the prescribed frequency. 

\section{Additional experimental data}
\label{section:additional-experiments}

\begin{table}[!ht]
  \caption{
    Experiments from Table \ref{table:memory-throughput-supermuc} reran on
    Polaris.
  }
  \label{table:memory-throughput-polaris}
  \begin{center}
  { 
    \tiny
    \begin{tabular}{c|cccccc|c}
 p & 1 & 2 & 4 & 8 & 12 & 16 & 32 \\
 \hline 
 $10^3$ & $4.59\cdot 10^7$ & $4.86\cdot 10^7$ & $\mathbf{5.26\cdot 10^7}$ & $3.73\cdot 10^7$ & $2.94\cdot 10^7$ & $3.03\cdot 10^7$ & $1.92\cdot 10^7$ \\ 
 $10^4$ & $1.04\cdot 10^8$ & $1.65\cdot 10^8$ & $2.24\cdot 10^8$ & $\mathbf{2.36\cdot 10^8}$ & $2.24\cdot 10^8$ & $1.98\cdot 10^8$ & $1.32\cdot 10^8$ \\ 
 $10^5$ & $1.17\cdot 10^8$ & $2.31\cdot 10^8$ & $4.18\cdot 10^8$ & $6.60\cdot 10^8$ & $8.08\cdot 10^8$ & $\mathbf{8.83\cdot 10^8}$ & $7.04\cdot 10^8$ \\ 
 $10^6$ & $1.14\cdot 10^8$ & $2.35\cdot 10^8$ & $4.57\cdot 10^8$ & $8.52\cdot 10^8$ & $1.20\cdot 10^9$ & $\mathbf{1.38\cdot 10^9}$ & $1.33\cdot 10^9$ \\ 
 $10^7$ & $1.11\cdot 10^8$ & $2.20\cdot 10^8$ & $4.26\cdot 10^8$ & $5.99\cdot 10^8$ & $6.24\cdot 10^8$ & $\mathbf{6.29\cdot 10^8}$ & $5.93\cdot 10^8$ \\ 
 $10^8$ & $1.05\cdot 10^8$ & $2.20\cdot 10^8$ & $4.30\cdot 10^8$ & $6.02\cdot 10^8$ & $6.28\cdot 10^8$ & $\mathbf{6.31\cdot 10^8}$ & $6.26\cdot 10^8$ \\ 
 \hline 
 $10^3$ & $\mathbf{5.92\cdot 10^7}$ & $4.52\cdot 10^7$ & $3.91\cdot 10^7$ & $3.26\cdot 10^7$ & $3.01\cdot 10^7$ & $3.40\cdot 10^7$ & $1.90\cdot 10^7$ \\ 
 $10^4$ & $7.85\cdot 10^7$ & $1.28\cdot 10^8$ & $1.83\cdot 10^8$ & $\mathbf{2.13\cdot 10^8}$ & $1.94\cdot 10^8$ & $2.00\cdot 10^8$ & $1.18\cdot 10^8$ \\ 
 $10^5$ & $9.77\cdot 10^7$ & $1.90\cdot 10^8$ & $3.40\cdot 10^8$ & $5.64\cdot 10^8$ & $6.82\cdot 10^8$ & $\mathbf{7.20\cdot 10^8}$ & $5.42\cdot 10^8$ \\ 
 $10^6$ & $8.94\cdot 10^7$ & $1.71\cdot 10^8$ & $3.36\cdot 10^8$ & $5.53\cdot 10^8$ & $6.44\cdot 10^8$ & $\mathbf{6.71\cdot 10^8}$ & $6.13\cdot 10^8$ \\ 
 $10^7$ & $7.75\cdot 10^7$ & $1.67\cdot 10^8$ & $3.20\cdot 10^8$ & $4.05\cdot 10^8$ & $4.17\cdot 10^8$ & $\mathbf{4.20\cdot 10^8}$ & $4.07\cdot 10^8$ \\ 
 $10^8$ & $7.73\cdot 10^7$ & $1.72\cdot 10^8$ & $3.22\cdot 10^8$ & $4.07\cdot 10^8$ & $4.19\cdot 10^8$ & $\mathbf{4.24\cdot 10^8}$ & $4.21\cdot 10^8$ \\ 
\end{tabular}
  }
  \end{center}
\end{table}

We ran the stream-like benchmark without particle sorting or any grid on both
Polaris and SuperMUC.
On Polaris, it   
yields almost twice the performance compared to SuperMUC (Table
\ref{table:memory-throughput-polaris}).
This is a direct result of the higher clock frequency on the smaller cluster.

\begin{figure}[!ht]
\centering
  \includegraphics[width=0.44\linewidth]{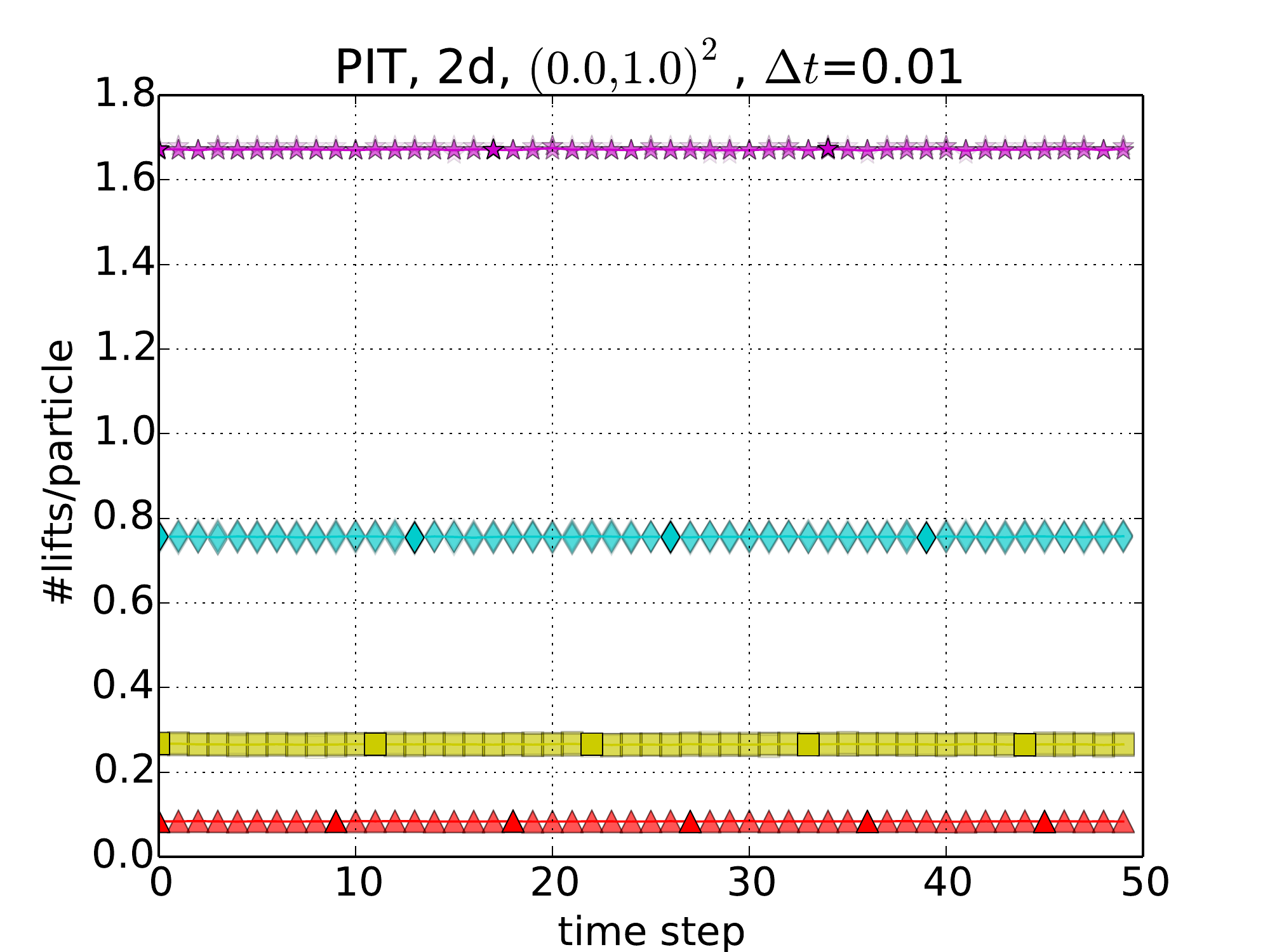}
  \includegraphics[width=0.44\linewidth]{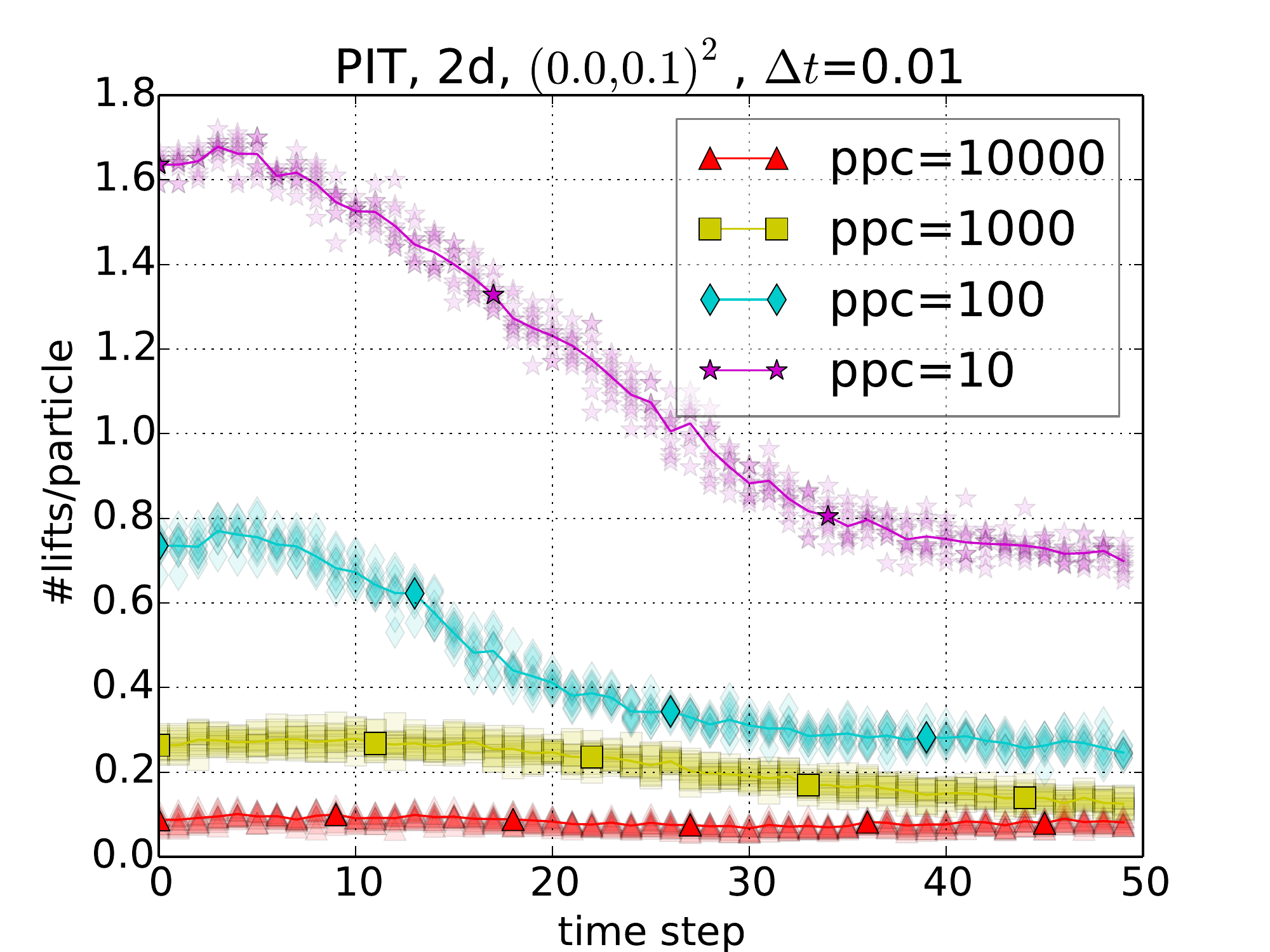}
  \includegraphics[width=0.44\linewidth]{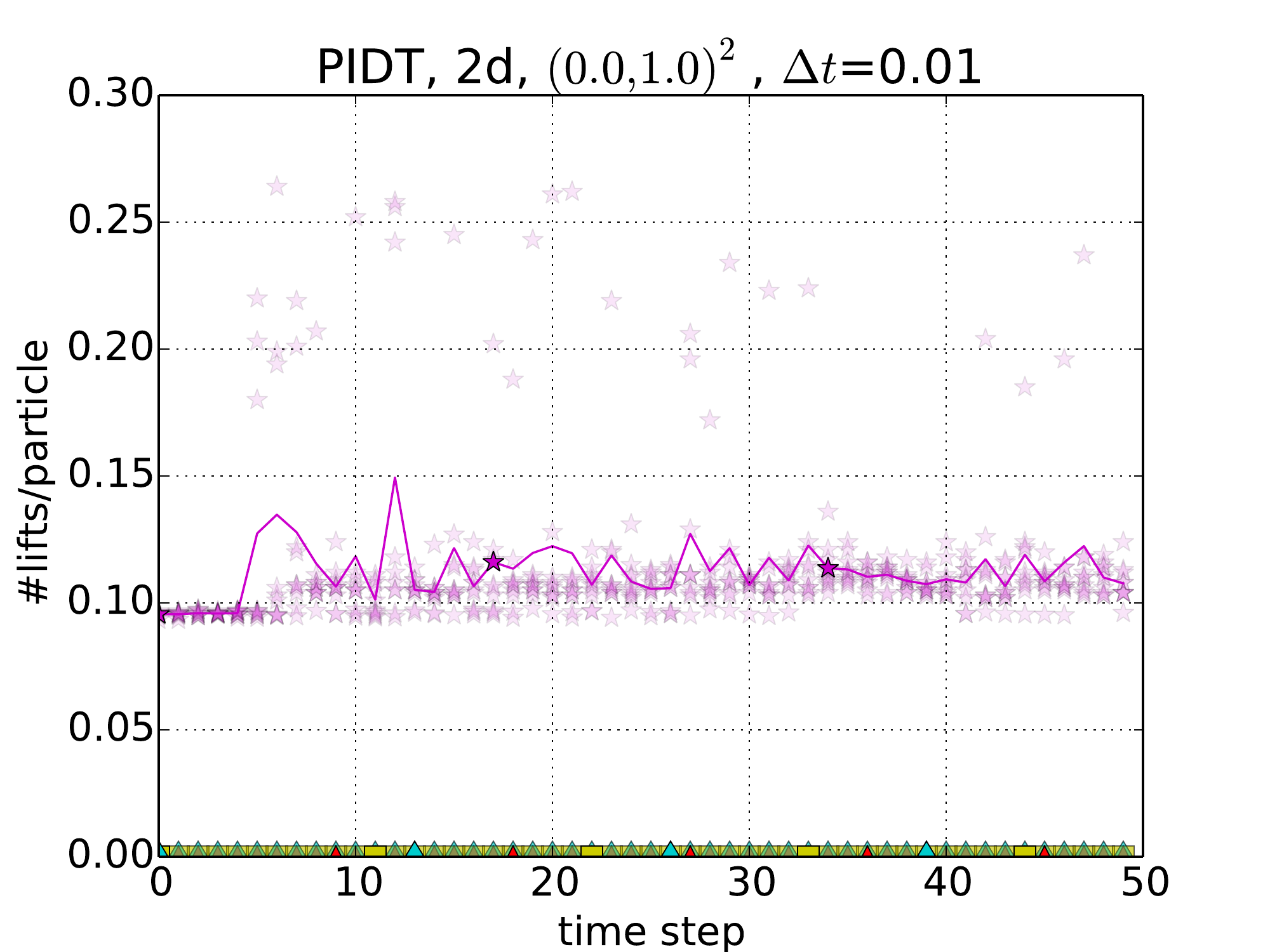}
  \includegraphics[width=0.44\linewidth]{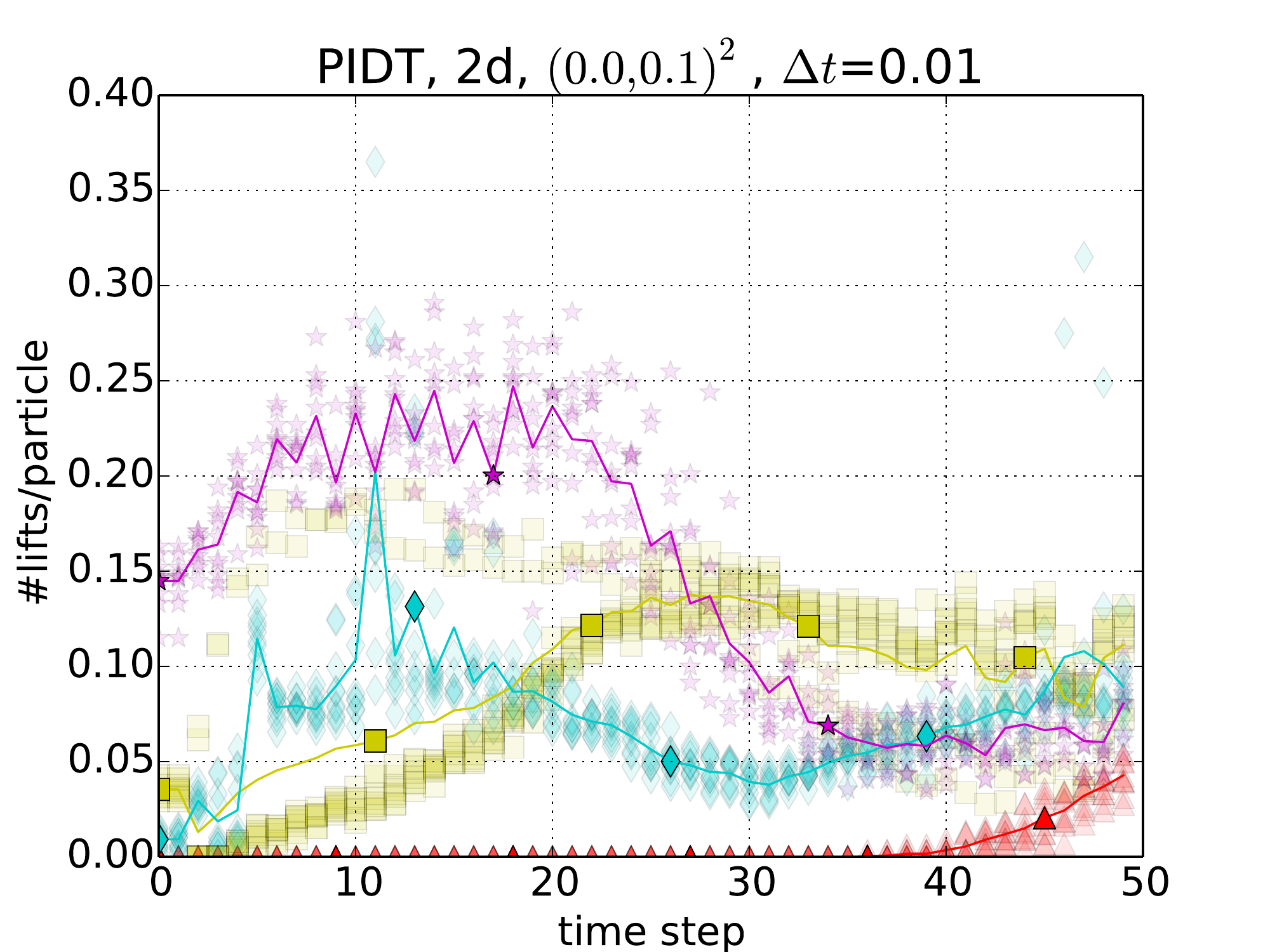}
  \caption{
    $\Delta t=0.01$ for \pit\ (top) and \pidt\ (bottom).
    We restrict to $d=2$ and study homogeneous (left) and
    inhomogeneous (right) initial distributions, i.e.~regular and dynamically
    adaptive grids.
%
%
%
  }
\label{figure:lifts-per-particle-with-different-ppc}
\end{figure}

Finally, some additional plots illustrating the lift behaviour for a fixed time step size can
be found in Figure \ref{figure:lifts-per-particle-with-different-ppc}.
They validate our statements on pros and cons of \pit\ or \pidt\ respectively.

\end{document}